\documentclass{article}
\usepackage{iclr2022_conference,times}

\usepackage[utf8]{inputenc} 
\usepackage[T1]{fontenc}    
\usepackage{url}            
\usepackage{booktabs}       
\usepackage{caption}
\usepackage{subcaption}
\usepackage{amsfonts}       
\usepackage{nicefrac}       
\usepackage{microtype}      
\usepackage{bm}
\usepackage{diagbox}
\usepackage{color} 
\usepackage{soul}
\usepackage{amssymb}
\usepackage{amsmath}
\usepackage{amsfonts}
\usepackage{mathtools}
\usepackage{lineno}
\usepackage[pdftex]{graphicx}
\usepackage[toc,page,header]{appendix}
\usepackage{minitoc}
\doparttoc 
\faketableofcontents 
\usepackage[labelformat=simple]{subcaption}
\usepackage{wrapfig}
\newenvironment{smalleralign}[1][\small]
 {\par\nopagebreak\leavevmode\vspace*{-\baselineskip}%
  \skip0=\abovedisplayskip
  #1%
  \def\maketag@@@##1{\hbox{\m@th\normalfont\normalsize##1}}%
  \abovedisplayskip=\skip0
  \align}
 {\endalign\ignorespacesafterend}
\makeatother

\usepackage{natbib}
\usepackage{multicol, multirow}
\usepackage[ruled,vlined]{algorithm2e}

 %


\newtheorem{theorem}{Theorem}
\newtheorem{definition}{Definition}
\newtheorem{proposition}{Proposition}
\newtheorem{corollary}{Corollary}
\newtheorem{lemma}{Lemma}

\newenvironment{proof}{\paragraph{Proof:}}{\hfill$\square$}

\SetKwInput{KwInput}{Input}                
\SetKwInput{KwOutput}{Output} 

\newenvironment{customthm}[1]
  {\innercustomthm}
  {\endinnercustomthm}
\providecommand{\customgenericname}{}
\newcommand{\newcustomtheorem}[2]{%
  \newenvironment{#1}[1]
  {%
  \renewcommand\customgenericname{#2}%
  \renewcommand\theinnercustomgeneric{##1}%
  \innercustomgeneric
  }
  {\endinnercustomgeneric}
}
\newcustomtheorem{customtheorem}{Theorem}

\usepackage{hyperref} 
\hypersetup{
    colorlinks=true,
    citecolor=blue, 
    linkcolor=blue,
    filecolor=magenta,      
    urlcolor=blue,
linktocpage}

\title{
LIGS: Learnable Intrinsic-Reward Generation Selection for Multi-Agent Learning 
}

\author{%
\textbf{David  Mguni$^{1}$\thanks{{{Correspondence to davidmguni@hotmail.com.}}},  Taher Jafferjee$^1$, Jianhong Wang$^{2}$, Oliver Slumbers$^{1,5}$, Nicolas Perez-Nieves$^{2}$, } \\ \textbf{Feifei Tong$^1$, Li Yang$^3$, Jiangcheng Zhu$^1$, Yaodong Yang$^4$,  Jun Wang$^5$} \\ 
  $^1$Huawei Technologies, $^2$Imperial College London,  $^3$Shanghaitech University, \\ $^4$Institute for AI, Peking University \& BIGAI, $^5$University College London }
  
\iclrfinalcopy 
\begin{document}

\maketitle

\maketitle
\begin{abstract}
Efficient exploration is important for reinforcement learners to achieve high rewards. In multi-agent systems, \textit{coordinated}  exploration and behaviour is critical for agents to jointly achieve optimal outcomes. In this paper, we introduce a new general framework for improving coordination and performance of multi-agent reinforcement learners (MARL). Our framework, named Learnable Intrinsic-Reward Generation Selection algorithm (LIGS) introduces
an adaptive learner, Generator that observes the agents and learns to construct
intrinsic rewards online that coordinate the agents’ joint exploration and joint behaviour.
Using a novel combination of MARL and switching
controls, LIGS determines the best states to learn to add intrinsic rewards which
leads to a highly efficient learning process. LIGS can subdivide complex tasks making them easier to solve and enables systems of MARL agents to quickly solve environments with sparse rewards. LIGS can seamlessly adopt existing MARL algorithms and, our theory shows that it ensures convergence to policies that deliver higher system performance. We demonstrate its superior performance in challenging tasks in Foraging and StarCraft II.
\end{abstract}

\section{Introduction}

Cooperative multi-agent reinforcement learning (MARL) has emerged as a powerful tool to enable autonomous agents to solve various tasks such as autonomous driving \citep{zhou2020smarts}, ride-sharing \citep{li2019efficient}, gaming AIs \citep{peng1703multiagent}, power networks \citep{wang2021multi,qiu2021multi} and swarm intelligence \citep{mguni2018decentralised,yang2017study}. In multi-agent systems (MAS), maximising system performance often requires agents to coordinate during exploration and learn coordinated joint actions. However, in many MAS, the reward signal provided by the environment is not sufficient to guide the agents towards coordinated behaviour \citep{matignon2012independent}. Consequently, relying on solely the individual rewards received by the agents may not lead to optimal outcomes \citep{mguni2019coordinating}.  This problem is exacerbated by the fact that MAS can have many stable points some of which lead to arbitrarily bad outcomes \citep{roughgarden2007introduction}.

As in single agent RL, in MARL inefficient exploration can dramatically decrease sample efficiency. In MAS, a major challenge is how to overcome sample inefficiency from poorly \textit{coordinated exploration}. Unlike single agent RL, in MARL, the collective of agents is typically required to coordinate its exploration to find their optimal joint policies\footnote{Unlike single agent RL, MARL exploration issues cannot be mitigated by adjusting
exploration rates or policy variances \citep{mahajan2019maven}.}.
A second issue is that in many MAS settings of interest, such as video games and physical tasks, rich informative signals of the agents' \textit{joint} performance are not readily available \citep{hosu2016playing}. For example, in StarCraft Micromanagement \citep{samvelyan2019starcraft}, the sparse reward alone (win, lose) gives insufficient information to guide agents toward their optimal joint policy. Consequently, MARL requires large numbers of samples producing a great need for MARL methods that can solve such problems efficiently.

To aid coordinated learning, algorithms such as QMIX \citep{rashid2018qmix}, MF-Q \citep{yang2018mean}, Q-DPP \citep{yang2020multi}, COMA \citep{foerster2018counterfactual} and SQDDPG \citep{wang2020shapley}, so-called centralised critic and decentralised execution (CT-DE) methods use a centralised critic whose role is to estimate the agents' expected returns. The critic makes use of all available information generated by the system, specifically the global state and the joint action \citep{peng2017multiagent}.  
To enable effective CT-DE, it is critical that the joint greedy action should be equivalent to the collection of individual greedy actions of agents, which is called the IGM (Individual-Global-Max) principle \citep{son2019qtran}.  
CT-DE methods are however, prone to convergence to suboptimal joint policies \citep{wang2020towards} and suffer from variance issues for gradient estimation  \citep{kuba2021settling}. Existing value factorisations, e.g. QMIX and VDN \citep{sunehag2017value} cannot ensure an exact guarantee of IGM consistency \citep{wang2020qplex}. Moreover, CT-DE methods such as QMIX require a monotonicity condition which is violated in scenarios where multiple agents must coordinate but are penalised
if only a subset of them do so (see Exp. 2, Sec. \ref{exp:foraging}).

To tackle these issues, in this paper we introduce a new MARL framework, LIGS that constructs intrinsic rewards online which guide MARL learners towards their optimal joint policy. LIGS involves an \textit{adaptive} intrinsic reward agent, 
the {\fontfamily{cmss}\selectfont Generator} that selects intrinsic rewards to add according to the history of visited states and the agents' joint actions. The {\fontfamily{cmss}\selectfont Generator} adaptively guides the agents' exploration and behaviour towards coordination and maximal joint performance. A pivotal feature of LIGS is the novel combination of RL and \textit{switching controls} \citep{mguni2018viscosity} which enables it to determine the best set of states to learn to add intrinsic rewards while disregarding less useful states. This enables the {\fontfamily{cmss}\selectfont Generator} to quickly learn how to set intrinsic rewards that guide the agents during their learning process.  
Moreover, the intrinsic rewards added by the {\fontfamily{cmss}\selectfont Generator} can significantly deviate from the environment rewards. This enables LIGS to both promote complex \textit{joint exploration} patterns and decompose difficult tasks. Despite this flexibility, special features within LIGS ensure the underlying optimal policies are preserved so that the agents learn to solve the task at hand.



Overall, LIGS has several advantages:  
\newline$\bullet$ LIGS has the freedom to introduce rewards that vastly deviate from the environment rewards. With this, LIGS promotes \textit{coordinated exploration} (i.e. visiting unplayed state-joint actions) among the agents enabling them to find joint policies that maximise the system rewards and generates intrinsic rewards to aid solving sparse reward MAS (see Experiment 1 in Sec. \ref{exp:foraging}).
\newline$\bullet$ LIGS selects which best states to add intrinsic rewards \textit{adaptively} in response to the agents' behaviour while the agents learn leading to an efficient learning process (see \textit{Investigations} in Sec. \ref{exp:foraging}).
\newline$\bullet$ LIGS's intrinsic rewards preserve the agents' optimal joint policy and ensure that the total \textit{environment} return is (weakly) increased (see Sec. \ref{sec:convergence}).

To enable the framework to perform successfully, we overcome several challenges: 
\textbf{i)} Firstly, constructing an intrinsic reward can change the underlying problem leading to the agents solving irrelevant tasks  \citep{mannion2017policy}. We resolve this by endowing the intrinsic reward function with special form which both allows a rich spread of intrinsic rewards while preserving the optimal policy. 
\textbf{ii)} Secondly, introducing intrinsic reward functions can \textit{worsen} the agents' performance \citep{devlin2011theoretical} and doing so \textit{while training} can lead to convergence issues.  We prove LIGS leads to better performing policies and that LIGS's learning process converges and preserves the MARL learners' convergence properties. 
\textbf{iii)} Lastly, adding an agent {\fontfamily{cmss}\selectfont Generator} with its own goal leads to a Markov game (MG) with $N+1$ agents \citep{fudenberg1991tirole}. 
Tractable methods for solving MGs are extremely rare with convergence only in special cases  \citep{yang2020overview}. Nevertheless, using a special set of features in LIGS's design, we prove LIGS converges to a solution in which it learns an intrinsic reward function that improves the agents' performance.

\section{Related Work}
%
%
%
%
%
%
\textbf{Reward shaping} \citep{harutyunyan2015expressing,mguni2021learning} is a technique which aims to alleviate the problem of sparse and uninformative rewards by supplementing the agent's reward with a prefixed term $F$. 
In \cite{ng1999policy} it was established that adding a \textit{shaping reward function} of the form $F(s_{t+1},s_{t})=\gamma\phi(s_{t+1})-\phi(s_t)$ preserves the optimal policy and in some cases can aid learning. 
%
RS has been extended to MAS \citep{devlin2011empirical,mannion2018reward,devlin2011theoretical, devlin2012dynamic, devlin2016plan,sadeghlou2014dynamic} in which it is used to promote convergence to efficient social welfare outcomes. 
Poor choices of $F$ in a MAS can slow the learning process and 
can induce convergence to poor system performance \citep{devlin2011theoretical}. In MARL, the question of which shaping function to use remains unaddressed.
Typically, RS algorithms rely on hand-crafted shaping reward functions that are constructed using domain knowledge, contrary to the goal of autonomous learning~\citep{devlin2011theoretical}. As we later describe LIGS, which successfully \textit{learns} an instrinsic reward function $F$, uses a similar form as PBRS however, $F$ is now augmented to include the actions of another RL agent to learn the intrinsic rewards online. In \cite{du2019liir} an approach towards learning intrinsic rewards is proposed in which a parameterised intrinsic reward is learned using a bilevel approach through a centralised critic. In \citet{WangZKG21}, a parameterised intrinsic reward is learned by a corpus, then the trained intrinsic reward is frozen on parameters and used to assist the training of a single-agent policy for generating the dialogues. Loosely related are single-agent methods \citep{zheng2018learning, dilokthanakul2019feature, kulkarni2016hierarchical, pathak2017curiosity} which, in general, introduce heuristic terms to generate intrinsic rewards. 


\textbf{Multi-agent exploration methods} seek to promote coordinated exploration among MARL learners. \cite{mahajan2019maven} proposed a hybridisation of value and policy-based methods that uses mutual information to learn a diverse set of behaviours between agents. Though this approach promotes coordinated exploration, it does not encourage exploration of novel states. Other approaches to promote exploration in MARL while assuming aspects of the environment are known in advance and agents can perform perfect communication between themselves \citep{viseras2016decentralized}. Similarly, to promote coordinated exploration in partially observable settings, \cite{pesce2020improving} proposed end-to-end learning of a communication protocol through a memory device. In general, exploration-based methods provide no performance guarantees nor do they ensure the optimal policy. Moreover, many employ heuristics that naively reward exploration to unvisited states without consideration of the environment reward. This can lead to spurious objectives being maximised.

Within these categories, closest to our work is the intrinsic reward approach in \cite{du2019liir}. There, the agents' policies and intrinsic rewards are learned with a bilevel approach. In contrast, LIGS performs these operations \textit{concurrently} leading to a fast and efficient procedure. A crucial point of distinction is that in LIGS, the intrinsic rewards are constructed by an RL agent ({\fontfamily{cmss}\selectfont Generator}) with its own reward function. Consequently, LIGS can generate complex patterns of intrinsic rewards, encourage \textit{joint exploration}. Additionally, LIGS learns intrinsic rewards only at relevant states, this confers high computational efficiency. Lastly, unlike exploration-based methods e.g., \cite{mahajan2019maven}, LIGS ensures preservation of the agents' joint optimal policy for the task.
%
\section{Preliminaries}
%
A fully cooperative MAS is modelled by a decentralised-Markov decision process (Dec-MDP) \citep{deng2021complexity}. A Dec-MDP is an augmented MDP involving a set of $N\geq 2$  agents denoted by $\mathcal{N}$ that independently decide actions to take which they do so simultaneously over many rounds. Formally, a dec-MDP is a tuple $\mathfrak{M}=\langle \mathcal{N},\mathcal{S},\left(\mathcal{A}_{i}\right)_{i\in\mathcal{N}},P,R,\gamma\rangle$ where $\mathcal{S}$ is the finite set of states, $\mathcal{A}_i$ is an action set for agent $i\in\mathcal{N}$ and $R:\mathcal{S}\times\boldsymbol{\mathcal{A}}\to\mathcal{P}(D)$ is the reward function that all agents jointly seek to maximise where $D$ is a compact subset of $\mathbb{R}$ and lastly, $P:\mathcal{S} \times \boldsymbol{\mathcal{A}} \times \mathcal{S} \rightarrow [0, 1]$ is the probability function describing the system dynamics where $\boldsymbol{\mathcal{A}}:=\times_{i=1}^N\mathcal{A}_i$.  Each agent $i\in\mathcal{N}$ uses a \textit{Markov policy}
$\pi_{i}: \mathcal{S} \times \mathcal{A}_i \rightarrow [0,1]$ to select its actions. At each time $t\in 0,1,\ldots,$ the system is in state $s_t\in\mathcal{S}$ and each agent $i\in\mathcal{N}$ takes an action $a^i_t\in\mathcal{A}_i$. The \textit{joint action}\ $\boldsymbol{a}_t=(a^1_t,\ldots, a^N_t)\in\boldsymbol{\mathcal{A}}$  produces an immediate reward $r_i\sim R(s_t,\boldsymbol{a}_t)$ for agent $i\in\mathcal{N}$ and influences the next-state transition which is chosen according to $P$.  
The goal of each agent $i$ is to maximise its expected returns measured by its value function $v^{\pi^i,\pi^{-i}}(s)=\mathbb{E}_{\pi^i,\pi^{-i}}\left[\sum_{t=0}^\infty \gamma^tR(s_t,\boldsymbol{a}_t)\right]$,
where $\Pi_i$ is a compact Markov policy space and $-i$ denotes the tuple of agents excluding agent $i$.
 
Intrinsic rewards can strongly induce more efficient learning (and can promote convergence to higher performing policies) \citep{devlin2011theoretical}. We tackle the problem of how to \textit{learn} intrinsic rewards produced by a function $F$ that leads to the agents learning policies that jointly maximise the system performance (through coordinated learning).
%
Determining this function is a significant challenge since poor choices can hinder learning and the concurrency of multiple learning processes presents potential convergence issues in a system already populated by multiple learners \citep{zinkevich2006cyclic}. Additionally, we require that the method preserves the optimal joint policy. 

\section{The LIGS Framework} 





To tackle the challenges described above, we introduce {\fontfamily{cmss}\selectfont Generator}  an \textit{adaptive} agent with its own objective that determines the best intrinsic rewards to give to the agents at each state.  Using observations of the joint actions played by the $N$ agents, the goal of the {\fontfamily{cmss}\selectfont Generator} is to construct intrinsic rewards to coordinate exploration and guide the agents towards learning joint policies that maximise their shared rewards. To do this, the {\fontfamily{cmss}\selectfont Generator} learns how to choose the values of an intrinsic reward function $F^{\boldsymbol{\theta}}$ at each state. 
Simultaneously, the $N$ agents perform actions to maximise their rewards using their individual policies.  The objective for each agent $i\in\{1,\ldots,N\}$ is given by: 
\begin{smalleralign}
v^{\pi^i,\pi^{-i},g}(s)=\mathbb{E}\left[\sum_{t=0}^\infty \gamma^t\left(R+F^{\boldsymbol{\theta}}\right)\Big|s_0=s\right], \nonumber
\end{smalleralign}
where 
$\boldsymbol{\theta}$ is determined by the {\fontfamily{cmss}\selectfont Generator} using the policy $g:\mathcal{S}\times\Theta\to[0,1]$ and $\Theta\subset\mathbb{R}^p$ is the  {\fontfamily{cmss}\selectfont Generator}'s action set. The intrinsic reward function is given by $
    F^{\boldsymbol{\theta}}(\cdot)\equiv \theta^c_t-\gamma^{-1}\theta^c_{t-1}$ 
%
where $\theta^c_t\sim g$  is the action chosen by the {\fontfamily{cmss}\selectfont Generator} and $\theta^c_t\equiv 0, \forall t<0$. 
$\Theta$ can be a set of integers $\{1,\ldots,K\}$).
Therefore, the {\fontfamily{cmss}\selectfont Generator} determines the output of $F^{\boldsymbol{\theta}}$ (which it does through its choice of $\theta^c$). With this, the {\fontfamily{cmss}\selectfont Generator} constructs intrinsic rewards that are tailored for the specific setting. 

 LIGS freely adopts any MARL algorithm for the $N$ agents (see Sec. \ref{sec:plug_n_play} in the Supp. Material).  The transition probability $P:\mathcal{S}\times\boldsymbol{\mathcal{A}}\times\mathcal{S}\to[0,1]$ takes the state and \textit{only} the actions of the $N$ agents as inputs.
 Note that unlike reward-shaping methods e.g. \citep{ng1999policy}, the function $F$ now contains action terms $\theta^c$ which are chosen by the {\fontfamily{cmss}\selectfont Generator} which enables the intrinsic reward function to be learned online. The presence of the action $\theta^c$ term may spoil the policy invariance result in \cite{ng1999policy}. We however prove a policy invariance result (Prop. \ref{invariance_prop}) analogous to that in \cite{ng1999policy} which shows LIGS preserves the optimal policy of $\mathfrak{M}$.
The {\fontfamily{cmss}\selectfont Generator} is an RL agent whose objective takes into account the history of states and $N$ agents' joint actions.
The {\fontfamily{cmss}\selectfont Generator}'s objective is:
\begin{smalleralign}
v^{\boldsymbol{\pi},g}_c(s)  = \mathbb{E}_{\boldsymbol{\pi},g}\left[ \sum_{t=0}^\infty \gamma^t\left(R^{\boldsymbol{\theta}}(s_t,\boldsymbol{a}_t) +L(s_t,\boldsymbol{a}_t)\right)\Big| s_0=s\right], \quad \forall s\in\mathcal{S}.
\end{smalleralign}
where $R^{\boldsymbol{\theta}}(s,\boldsymbol{a}):=R(s,\boldsymbol{a})+F^{\boldsymbol{\theta}}$. The objective encodes {\fontfamily{cmss}\selectfont Generator}'s agenda, namely to maximise the agents' expected return. Therefore, using its intrinsic rewards, the  {\fontfamily{cmss}\selectfont Generator} seeks to guide the set of agents toward optimal joint trajectories (potentially away from suboptimal trajectories, c.f. Experiment 2) and enables the agents to learn faster (c.f. StarCraft experiments in Sec. \ref{Section:Experiments}). Lastly, $L:\mathcal{S}\times\boldsymbol{\mathcal{A}}\to\mathbb{R}$ rewards {\fontfamily{cmss}\selectfont Generator} when the agents jointly visit novel state-joint-action tuples and tends to $0$ as the tuples are revisited. We later prove that with this objective, the {\fontfamily{cmss}\selectfont Generator}'s optimal policy (for constructing the intrinsic rewards) maximises the expected (extrinsic) return (Prop. \ref{preservation_lemma}).    

Since the {\fontfamily{cmss}\selectfont Generator} has its own (distinct) objective, the resulting setup is an MG, $\mathcal{G}=\langle \mathcal{N}\times\{c\},\mathcal{S},(\mathcal{A}_i)_{i\in\mathcal{N}},\Theta,P,R^{\boldsymbol{\theta}},R_c,\gamma\rangle$ where the new elements are $\{c\}$, the  {\fontfamily{cmss}\selectfont Generator} agent, $R^{\boldsymbol{\theta}}:=R+F^{\boldsymbol{\theta}}$, the new team reward function which contains the intrinsic reward $F^{\boldsymbol{\theta}}$,  $R_c:\mathcal{S}\times\boldsymbol{\mathcal{A}}\times\Theta\to\mathbb{R}$, the one-step reward for the {\fontfamily{cmss}\selectfont Generator} (we give the details of this later). 

\textbf{Switching Control Mechanism}

So far the {\fontfamily{cmss}\selectfont Generator}'s problem involves learning to construct intrinsic rewards at \textit{every} state which can be computationally expensive. We now introduce an important feature which allows LIGS to learn the best intrinsic reward only in a subset of states in which intrinsic rewards are most useful. This is in contrast to the problem tackled by the $N$ agents who must compute their optimal actions at all states. 
%
%
%
%
%
%
%
%
%
%
%
%
 To achieve this, we now replace the {\fontfamily{cmss}\selectfont Generator}'s policy space with a form of policies known as \textit{switching controls}. These policies enable {\fontfamily{cmss}\selectfont Generator} to decide at which states to learn the value of intrinsic rewards. This enables the {\fontfamily{cmss}\selectfont Generator} to learn quickly both where to add intrinsic rewards and the magnitudes that improve performance since the {\fontfamily{cmss}\selectfont Generator}'s magnitude optimisations are performed only at a subset of states. Crucially, with this the {\fontfamily{cmss}\selectfont Generator} can learn its policy rapidly enabling it to guide the agents toward coordination and higher performing policies while they train.  

At each state, the {\fontfamily{cmss}\selectfont Generator} first makes a \textit{binary decision} to decide to \textit{switch on} its $F$ for agent $i\in\mathcal{N}$ using a switch $I_t$ which takes values in $\{0,1\}$. 
Crucially, now the {\fontfamily{cmss}\selectfont Generator} is tasked with learning how to construct the $N$ agents' intrinsic rewards \emph{only} at states that are important for guiding the agents to their joint optimal policy. Both the decision to activate the function $F$ and its magnitudes is determined by the {\fontfamily{cmss}\selectfont Generator}. With this, the agent $i\in\mathcal{N}$ objective becomes: 
\begin{smalleralign}
v^{\boldsymbol{\pi},g}(s_0,I_0)=\mathbb{E}\left[\sum_{t=0}^\infty \gamma^t\left\{R+F^{\boldsymbol{\theta}}\cdot I_t\right\}\right],\; \forall (s_0,I_0)\in\mathcal{S}\times\{0,1\},
\end{smalleralign} 
where $I_{\tau_{k+1}}=1-I_{\tau_{k}}$,  which is the switch for $F$ which is $0$ or $1$ and $\{\tau_k\}_{k> 0}$ are times that a switch takes place\footnote{More precisely, $\{\tau_k\}_{k\geq 0}$ are \textit{stopping times}
\citep{oksendal2003stochastic}.} so for example if the switch is first turned on at the state $s_5$ then turned off at $s_7$, then $\tau_1=5$ and $\tau_2=7$ (we will shortly describe these in more detail). 
At any state, the decision to turn on $I$ is decided by a (categorical) policy $\mathfrak{g}_c:\mathcal{S} \to \{0,1\}$ which acts according to {\fontfamily{cmss}\selectfont Generator}'s objective. In particular, first, the {\fontfamily{cmss}\selectfont Generator} makes an observation of the state $s_k\in\mathcal{S}$ and the joint action $\boldsymbol{a}_k$ and using $\mathfrak{g}_c$, the {\fontfamily{cmss}\selectfont Generator} decides whether or not to activate the policy $g$ to provide an intrinsic reward whose value is determined by $\theta^c_k\sim g$. With this it can be seen the sequence of times $\{\tau_k\}$ is $\tau_k=\inf\{t>\tau_{k-1}|s_t\in\mathcal{S},\mathfrak{g}_c(s_t)=1\}$ so the switching times.
$\{\tau_k\}$ \textit{are \textbf{rules} that depend on the state.} Therefore, by learning an optimal $\mathfrak{g}_c$, the {\fontfamily{cmss}\selectfont Generator} learns the useful states to switch on $F$. 

To induce the {\fontfamily{cmss}\selectfont Generator} to selectively choose when to switch on the additional rewards, each switch activation  incurs a fixed cost for the {\fontfamily{cmss}\selectfont Generator}. In this case, the objective for the {\fontfamily{cmss}\selectfont Generator} is: 
\begin{smalleralign}
v^{\boldsymbol{\pi},g}_c(s_0,I_0)  = \mathbb{E}_{\boldsymbol{\pi},g}\left[ \sum_{t=0}^\infty \gamma^t\left(R^{\boldsymbol{\theta}}(s_t,\boldsymbol{a}_t) -\sum_{k\geq 1} \delta^t_{\tau_{2k-1}}
+L(s_t,\boldsymbol{a}_t)\right)\right], \label{generator_objective}
\end{smalleralign} 
where the 
Kronecker-delta function $\delta^t_{\tau_{2k-1}}$ which is $1$ whenever $t={\tau_{2k-1}}$ and $0$ otherwise imposes a cost for each switch activation. The cost has two effects: first, it reduces the computational complexity of the {\fontfamily{cmss}\selectfont Generator}'s problem since the {\fontfamily{cmss}\selectfont Generator} now determines \textit{subregions} of $\mathcal{S}$ it should learn the values of $F$. Second, it ensures the \textit{information-gain} from encouraging the agents to explore state-action tuples is sufficiently high to merit activating a stream of intrinsic rewards. 
%
%
%
%
%
%
%
%
We set $\tau_0\equiv 0$,  $\theta_{\tau_k}\equiv 0,\forall k\in\mathbb{N}$ ($\theta_{\tau_k+1},\ldots, \theta_{\tau_{k+1}-1}$ remain non-zero), $\theta^c_k\equiv 0\;\; \forall k\leq 0$ 
%
and denote by $I(t)\equiv I_t$.

\begin{wrapfigure}{r}{0.45\textwidth}
  \begin{center}
    \includegraphics[width=0.4\textwidth]{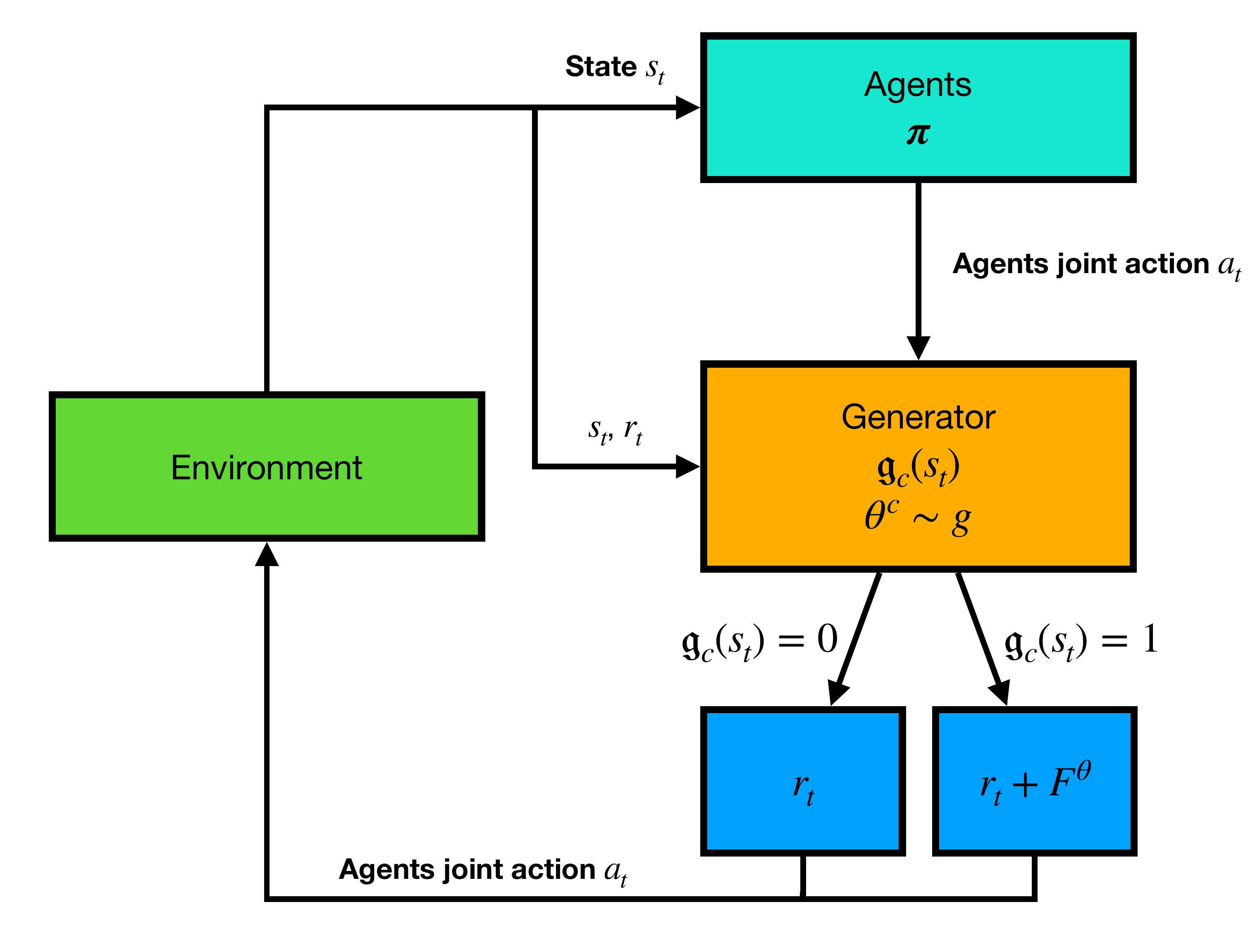}
  \end{center}
  \caption{Schematic of the LIGS framework.}
\end{wrapfigure}

\textbf{Discussion on Computational Aspect}\newline The switching controls mechanism results in a framework in which the problem facing the {\fontfamily{cmss}\selectfont Generator} has a markedly reduced decision space in comparison to the agent's problem  (though the agents share the same experiences). Crucially, the {\fontfamily{cmss}\selectfont Generator} must compute optimal intrinsic rewards at only a subset of states which are chosen by $\mathfrak{g}_c$. Moreover, the decision space for the switching policy $\mathfrak{g}_c$ is $\mathcal{S}\times\{0,1\}$ i.e at each state it makes a binary decision. Consequently, the learning process for $\mathfrak{g}_c$ is much quicker than the agents' policies which must optimise over the decision space $|\mathcal{S}||\mathcal{A}|$ (choosing an action at every state). This results in the {\fontfamily{cmss}\selectfont Generator} rapidly learning its optimal policies (relative to the agent) in turn, enabling the {\fontfamily{cmss}\selectfont Generator} to guide the agents towards its optimal policy during its learning phase. Also, in our experiments, we chose the size of the action set for the {\fontfamily{cmss}\selectfont Generator}, $\Theta$ to be a singleton resulting in a decision space of size $|\mathcal{S}|\times\{0,1\}$ for the entire problem facing the {\fontfamily{cmss}\selectfont Generator}. We later show that this choice leads to improved performance while removing the free parameter of the dimensionality of the {\fontfamily{cmss}\selectfont Generator}'s action set. 

\noindent\textbf{{Summary of Events}}

At a time $t\in 0,1,\ldots$\\
\begin{tabular}{l|m{0.9\linewidth}}
 $1.$\; & The $N$ agents makes an observation of the state $s_t\in\mathcal{S}$.\\
    
    $2.$\; & The $N$ agents perform a joint action $\boldsymbol{a}_t=(a^1_t,\ldots,a^N_t)$ sampled from $\boldsymbol{\pi}=(\pi^1,\ldots,\pi^N)$.\\
    
    $3.$\; & The {\fontfamily{cmss}\selectfont Generator} makes an observation of $s_t$ and $\boldsymbol{a}_t$ and draws samples from its polices $(\mathfrak{g}_c,g)$.\\
    
    $4.$\;
& If $\mathfrak{g}_c(s_t)=0$: \\
    
    & \textcolor{white}{X}$\circ$ Each agent $i\in\mathcal{N}$ receives a reward $r_i\sim R(s_t,\boldsymbol{a}_t)$ and the system transitions to the next state $s_{t+1}$ and steps 1 - 3 are repeated.\\
        
    $5.$\; & If $\mathfrak{g}_c(s_t)=1$:
    \\
        & \textcolor{white}{X}$\circ$
    $F^{\boldsymbol{\theta}}$ is computed using $s_t$ and the {\fontfamily{cmss}\selectfont Generator} action $\theta^c\sim g$. 
    \\& \textcolor{white}{X}$\circ$ Each agent $i\in\mathcal{N}$ receives a reward $r_i+F^{\boldsymbol{\theta}}$ and the system transitions to $s_{t+1}$.\\
    
    $6.$\; & At time $t+1$ if the intrinsic reward terminates then steps 1 - 3 are repeated or if the intrinsic reward has not terminated then step 5 is repeated.\\
    
\end{tabular}

\subsection{The Learning Procedure}\label{sec:learning_proc}

In Sec. \ref{sec:convergence}, we provide the convergence properties of the algorithm, 
and give the full code of the algorithm in Sec. \ref{sec:algorithm} of the Appendix.  The algorithm consists of the following procedures: the {\fontfamily{cmss}\selectfont Generator} updates its policy that determines the values $\theta$ at each state
and the states to perform a switch
 while the agents $\{1,\ldots,N\}$ learn their individual policies $\{\pi_1,\ldots,\pi_N\}$.
In our implementation, we used proximal policy optimization (PPO) \citep{schulman2017Proximal} as the learning algorithm for both the {\fontfamily{cmss}\selectfont Generator}'s intervention policy $\mathfrak{g}_c$ and {\fontfamily{cmss}\selectfont Generator}'s policy $g$. For the $N$ agents we used MAPPO \citep{yu2021surprising}.
for the {\fontfamily{cmss}\selectfont Generator} $L$ term we use\footnote{This is similar to random network distillation \citep{burda2018exploration} however the input is over the space $\mathcal{A}\times\mathcal{S}$.} $L(s_{t}, \boldsymbol{a}_t):=\|\hat{h} - \mathit{h}\|_{2}^{2}$  where $\hat{h}$ is a random initialised network which is the target network which is fixed and $\mathit{h}$ is the \textit{prediction function} that is consecutively updated during training. We constructed $F^{\boldsymbol{\theta}}$ using a fixed neural network $f:\mathbb{R}^d \mapsto \mathbb{R}^m$ and a one-hot encoding of the action of the {\fontfamily{cmss}\selectfont Generator}. Specifically, 
$i(\theta^c_t)$ is a one-hot encoding of the action $\theta^c_t$ picked by the {\fontfamily{cmss}\selectfont Generator}. Thus, $F^{(\theta^c_t, \theta^c_{t-1})} = i(\theta^c_t)  - \gamma^{-1} i(\theta^c_{t-1})$. The action set of the {\fontfamily{cmss}\selectfont Generator} is $\Theta \equiv \{1\}$ where 
$g$ is an MLP $g: \mathbb{R}^d    \mapsto \mathbb{R}^m$. Extra details are in Sec. \ref{sec:algorithm}.

\section{Convergence and Optimality of LIGS} \label{sec:convergence}

We now show that LIGS converges and that the solution ensures a higher performing agent policies.
The addition of the {\fontfamily{cmss}\selectfont Generator}'s RL process which modifies $N$ agents' rewards during learning can produce convergence issues \citep{zinkevich2006cyclic}.  Also to ensure the framework is useful, we must verify that the solution of $\mathcal{G}$ corresponds to solving the MDP, $\mathfrak{M}$. 
To resolve these issues, we first study the stable point solutions of $\mathcal{G}$.  Unlike MDPs, the existence of a solution in Markov policies is not guaranteed for MGs \citep{blackwell1968big} and is rarely computable (except for special cases such as \textit{team} and \textit{zero-sum} MGs \citep{shoham2008multiagent}).
MGs also often have multiple stable points that can be inefficient \citep{mguni2019coordinating}; in $\mathcal{G}$ such stable points would lead to a poor performing agent joint policy.
We resolve these challenges with the following scheme: 

\noindent\textbf{[I]} LIGS preserves the optimal solution of $\mathfrak{M}$.\newline 
\noindent\textbf{[II]}  The MG induced by LIGS has a stable point which is the convergence point of MARL.\newline 
\noindent\textbf{[III]} LIGS yields a team payoff that is (weakly) greater than that from solving $\mathfrak{M}$ directly.\newline 
\noindent\textbf{[IV]}  LIGS  converges to the solution with a linear function approximators.
In what follows, we denote by $\boldsymbol{\Pi}:=\times_{i\in\mathcal{N}}\Pi_i$. The results are built under Assumptions 1 - 7 (Sec. \ref{sec:notation_appendix} of the Appendix) which are standard in RL and stochastic approximation theory.
We now prove the result \textbf{[I]} which shows the solution to $\mathfrak{M}$ is preserved under the influence of LIGS: 

\begin{proposition}
\label{preservation_lemma} The following statements hold:
\newline i) $
\underset{\boldsymbol{\pi}\in\boldsymbol{\Pi}}{\max}\; v^{\boldsymbol{\pi},g}(s,\cdot)=\underset{\boldsymbol{\pi}\in\boldsymbol{\Pi}}{\max}\; v^{\boldsymbol{\pi}}(s),\;\forall s\in\mathcal{S}, \forall i \in\mathcal{N}, \forall g$ where $v^{\boldsymbol{\pi}}(s)=\mathbb{E}_{\boldsymbol{\pi}}\left[\sum_{t=0}^\infty \gamma^tR\right]$.
\newline ii) The {\fontfamily{cmss}\selectfont Generator}'s optimal policy maximises $v^{\boldsymbol{\pi}}(s)=\mathbb{E}\left[\sum_{t=0}^\infty \gamma^tR(s_t,\boldsymbol{a}_t)\right]$ for any
$s\in\mathcal{S}$. 

\end{proposition}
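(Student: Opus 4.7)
The plan exploits a telescoping identity that the potential form $F^{\boldsymbol{\theta}}_t = \theta^c_t - \gamma^{-1}\theta^c_{t-1}$ is engineered to satisfy. Writing $\gamma^t F^{\boldsymbol{\theta}}_t = \gamma^t\theta^c_t - \gamma^{t-1}\theta^c_{t-1}$, the partial sum collapses to $\sum_{t=0}^T \gamma^t F^{\boldsymbol{\theta}}_t = \gamma^T\theta^c_T - \gamma^{-1}\theta^c_{-1}$. Under the stated conventions $\theta^c_{-1}=0$ and $\theta^c\in\Theta$ bounded, together with $\gamma\in(0,1)$, letting $T\to\infty$ gives $\sum_{t=0}^\infty \gamma^t F^{\boldsymbol{\theta}}_t = 0$ almost surely; in the switching-control variant the identical telescoping applied piecewise on each on-interval vanishes because of the boundary condition $\theta^c_{\tau_k}=0$ at every switching time. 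This is the action-augmented analogue of the potential-based shaping identity of Ng et al.

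Part (i) then follows by linearity of expectation: for any $\boldsymbol{\pi}\in\boldsymbol{\Pi}$ and any Generator policy $g$,
\begin{align*}
v^{\boldsymbol{\pi},g}(s,\cdot) \;=\; v^{\boldsymbol{\pi}}(s) \;+\; \mathbb{E}_{\boldsymbol{\pi},g}\Big[\sum_{t\ge 0}\gamma^t F^{\boldsymbol{\theta}}_t\, I_t \,\Big|\, s_0=s\Big] \;=\; v^{\boldsymbol{\pi}}(s),
\end{align*}
so $\max_{\boldsymbol{\pi}} v^{\boldsymbol{\pi},g}(s,\cdot) = \max_{\boldsymbol{\pi}} v^{\boldsymbol{\pi}}(s)$ for every $g$ and every $s\in\mathcal{S}$.

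For part (ii) I would apply the same identity to the Generator's own value. Expanding $R^{\boldsymbol{\theta}}=R+F^{\boldsymbol{\theta}}$ in the definition of $v^{\boldsymbol{\pi},g}_c$ and using the telescoping of the $F^{\boldsymbol{\theta}}$ term gives
\begin{align*}
v^{\boldsymbol{\pi},g}_c(s,\cdot) \;=\; v^{\boldsymbol{\pi}}(s) \;+\; \mathbb{E}_{\boldsymbol{\pi},g}\Big[\sum_{t\ge 0}\gamma^t L(s_t,\boldsymbol{a}_t)\Big] \;-\; \sum_{k\ge 1}\mathbb{E}\big[\gamma^{\tau_{2k-1}}\big].
\end{align*}
The novelty bonus $L$ is transient (tends to zero as state-joint-action tuples are revisited) and the switching-cost term does not depend on the direction of the agents' policy, so asymptotically the Generator's objective is an increasing function of $v^{\boldsymbol{\pi}}(s)$. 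Hence at any Markov-perfect equilibrium of $\mathcal{G}$ the Generator's optimal policy induces an agent joint policy lying in $\arg\max_{\boldsymbol{\pi}}v^{\boldsymbol{\pi}}(s)$, which is exactly the assertion of (ii).

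The main obstacle will be switching-control bookkeeping: one has to verify carefully that the piecewise telescoping over each on-interval, combined with the endpoint condition $\theta^c_{\tau_k}=0$ and the alternating rule $I_{\tau_{k+1}}=1-I_{\tau_k}$, really does eliminate all cross-interval boundary contributions so that $\mathbb{E}[\sum_t \gamma^t F^{\boldsymbol{\theta}}_t I_t]=0$ holds under the random switching dynamics. Once this is settled, both parts reduce cleanly to a Ng et al.-style policy-invariance argument augmented to include the Generator's action $\theta^c$ as part of the potential.
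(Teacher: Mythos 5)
Your proposal is correct and follows essentially the same route as the paper's proof: part (i) via the piecewise telescoping of $\gamma^t F^{\boldsymbol{\theta}}_t$ over the on-intervals with the boundary convention $\theta^c_{\tau_k}=0$ killing the cross-interval terms, and part (ii) by reducing the {\fontfamily{cmss}\selectfont Generator}'s objective to $v^{\boldsymbol{\pi}}$ plus the transient $L$ term (which the paper sends to zero via Assumption 7 and dominated convergence) and a policy-irrelevant switching cost (Assumption 6). The ``main obstacle'' you flag --- the switching-control bookkeeping --- is exactly the computation the paper carries out explicitly, so no further ideas are needed.
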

Result (i) says that the agents' problem is preserved under the {\fontfamily{cmss}\selectfont Generator}'s influence. Moreover the agents' (expected) total return is that from the environment (extrinsic rewards). Result (ii) establishes that the {\fontfamily{cmss}\selectfont Generator}'s optimal policy induces it to maximise the agents' joint (extrinsic) total return. The result is proven by a careful adaptation of the policy invariance result in \cite{ng1999policy} to our MARL switching control setting where the intrinsic-reward is not added at all states. Building on Prop. \ref{preservation_lemma}, we deduce the following result:
\begin{corollary}\label{invariance_prop}
LIGS preserves the dec-MDP played by the agents. In particular, let $(\boldsymbol{\hat{\pi}},\hat{g})$ be a stable point policy profile\footnote{By stable point profile we mean 
a Markov perfect equilibrium (MPE) \citep{fudenberg1991tirole}.} of the MG induced by LIGS, $\mathcal{G}$ then $\boldsymbol{\hat{\pi}}$ is a solution to the dec-MDP, $\mathfrak{M}$. 
\end{corollary}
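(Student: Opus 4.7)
The plan is to unpack the stable-point (MPE) condition for the $N$-agent coalition inside $\mathcal{G}$ and then transfer it from the augmented value $v^{\boldsymbol{\pi},\hat{g}}$ to the original dec-MDP value $v^{\boldsymbol{\pi}}$ via Proposition~\ref{preservation_lemma}. Writing down the MPE condition on the agent side yields
\[
\boldsymbol{\hat{\pi}} \in \argmax_{\boldsymbol{\pi}\in\boldsymbol{\Pi}} v^{\boldsymbol{\pi},\hat{g}}(s,\cdot) \qquad \forall s \in \mathcal{S};
\]
the Generator's own best-response condition is not required. A solution of $\mathfrak{M}$ is by definition a member of $\argmax_{\boldsymbol{\pi}} v^{\boldsymbol{\pi}}(s)$ for every $s$, so the whole task reduces to showing that these two argmax sets coincide.

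The second step is to strengthen Proposition~\ref{preservation_lemma}(i) from equality of maxima to the pointwise identity $v^{\boldsymbol{\pi},g}(s,\cdot) = v^{\boldsymbol{\pi}}(s)$ for every $\boldsymbol{\pi}$, every $g$, and every $s$, which is the actual policy-invariance content of the proposition. The mechanism is the potential-based form of $F^{\boldsymbol{\theta}} = \theta^{c}_{t} - \gamma^{-1}\theta^{c}_{t-1}$: along any trajectory, reindexing $u = t-1$ in the second sum,
\[
\sum_{t=0}^{\infty}\gamma^{t}\bigl(\theta^{c}_{t}-\gamma^{-1}\theta^{c}_{t-1}\bigr) \;=\; \sum_{t=0}^{\infty}\gamma^{t}\theta^{c}_{t} \;-\; \sum_{u=-1}^{\infty}\gamma^{u}\theta^{c}_{u} \;=\; -\gamma^{-1}\theta^{c}_{-1} \;=\; 0,
\]
using the boundary convention $\theta^{c}_{k}\equiv 0$ for $k\le 0$. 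In the switching variant, the identical cancellation applies on each on-interval $[\tau_{2k-1},\tau_{2k}]$: outside on-intervals $F^{\boldsymbol{\theta}}\cdot I_{t}\equiv 0$, and the reset convention $\theta_{\tau_{k}}\equiv 0$ forces each on-interval to start and end with vanishing potential, killing the boundary terms left by the telescoping. Taking expectation under any $(\boldsymbol{\pi},g)$ then produces the pointwise identity.

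Combining the two ingredients closes the proof: pointwise equality of $v^{\cdot,\hat{g}}(s,\cdot)$ with $v^{\cdot}(s)$ forces their argmax sets to agree, so the MPE best-response condition above places $\boldsymbol{\hat{\pi}}$ in $\argmax_{\boldsymbol{\pi}} v^{\boldsymbol{\pi}}(s)$ for every $s\in\mathcal{S}$, which is precisely the defining property of a solution to $\mathfrak{M}$.

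The main obstacle I anticipate is the bookkeeping induced by the switching layer. Unlike the static PBRS setting of Ng, Harada \& Russell, here $\theta^{c}$ is produced by an RL policy and is reset at stochastic, policy-dependent stopping times $\{\tau_{k}\}$; moreover the agents' value function now lives on the extended state space $\mathcal{S}\times\{0,1\}$, and the Generator's switching cost $\sum_{k\ge 1}\delta^{t}_{\tau_{2k-1}}$ in \eqref{generator_objective} must be shown to enter only $v_{c}^{\boldsymbol{\pi},g}$ and never the agents' $v^{\boldsymbol{\pi},g}$. One must also confirm that the per-agent MPE best-response lifts to a genuine joint maximiser in $\boldsymbol{\Pi}$, which is where cooperativity of the augmented reward $R+F^{\boldsymbol{\theta}}$ is essential. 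Once these routine but delicate checks are discharged, the corollary drops out of Proposition~\ref{preservation_lemma} essentially by substitution.
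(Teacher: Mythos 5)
Your proposal is correct and follows essentially the same route as the paper: the paper offers no standalone proof of this corollary, deriving it directly from Proposition~\ref{preservation_lemma}, whose own proof establishes exactly the pointwise identity $v^{\boldsymbol{\pi},g}(s,\cdot)=v^{\boldsymbol{\pi}}(s)$ via the telescoping cancellation of $\sum_t\gamma^t F(\theta^c_t,\theta^c_{t-1})I_t$ with the boundary convention $\theta^c_{\tau_k}\equiv 0$, after which the equilibrium property transfers by substitution. Your only imprecision is writing the agents' MPE condition as a joint argmax over $\boldsymbol{\Pi}$ rather than as unilateral best responses, but you flag this yourself, and the same pointwise value identity transfers either notion of stability from $\mathcal{G}$ to $\mathfrak{M}$ unchanged.
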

Therefore, the introduction of the {\fontfamily{cmss}\selectfont Generator} does not alter the fundamentals of the problem. 
Our next task is to prove the existence of a stable point of the MG induced by LIGS and show it is a limit point of a sequence of Bellman operations. 
To do this we prove that a stable solution of $\mathcal{G}$ exists and that $\mathcal{G}$ has a special property that permits its stable point to be found using dynamic programming. 
%
%
%
The following result establishes that the solution of the MG $\mathcal{G}$, can be computed using RL methods:

\begin{theorem}\label{theorem:existence_2}
Given a function $V:\mathcal{S}\times\boldsymbol{\mathcal{A}}\to\mathbb{R}$,  $\mathcal{G}$ has a stable point given by $\underset{k\to\infty}{\lim}T^kV^{\boldsymbol{\pi},g}=\underset{{\boldsymbol{\hat{\pi}}}\in\boldsymbol{\Pi}}{\sup}V^{\boldsymbol{\pi},\hat{g}}=V^{\boldsymbol{\pi^\star},g^\star}$ where $(\boldsymbol{\pi^\star},g)$ is a stable solution of $\mathcal{G}$ and $T$ is the Bellman operator (c.f. \eqref{bellman_op}).
\end{theorem}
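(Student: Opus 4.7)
The strategy is to reduce the MG $\mathcal{G}$, which a priori is an $(N+1)$-player game with a mixed switching-and-continuous control structure, to a cooperative control problem whose value can be computed by fixed-point iteration of a contractive Bellman operator. The reduction is possible because Proposition 1 has already established that the {\fontfamily{cmss}\selectfont Generator}'s value-maximising policy is aligned with the team's extrinsic return; the bounded switching cost and the exploration bonus $L$ alter only the location of the fixed point, not the contraction structure of the problem.

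First, I would make $T$ precise. Because the {\fontfamily{cmss}\selectfont Generator}'s policy consists of a switching rule $\mathfrak{g}_c$ and a magnitude rule $g$, the operator takes an obstacle-problem form on the augmented domain $\mathcal{S}\times\boldsymbol{\mathcal{A}}\times\{0,1\}$: at every point it returns the maximum of a continuation term, which applies when $\mathfrak{g}_c=0$ and integrates out $\boldsymbol{a}$ and $s'$ under the agents' joint policy against $R$ alone, and an intervention term, which applies when a switch is activated, pays the one-step switching cost, and additionally maximises over $\theta^c\in\Theta$ before continuing with the augmented reward $R^{\boldsymbol{\theta}}$ and toggled switch $1-I$. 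This is the standard obstacle form used in the optimal-switching literature e.g.\ \cite{mguni2018viscosity}, adapted to the MARL setting.

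Second, I would verify that $T$ is a $\gamma$-contraction in the sup-norm on bounded functions on $\mathcal{S}\times\boldsymbol{\mathcal{A}}\times\{0,1\}$, appealing to Assumptions 1--7. The ingredients are standard: the pointwise maximum of two $\gamma$-contractions is a $\gamma$-contraction; the inner maxima over $\theta^c$ and $\boldsymbol{a}$ are non-expansive; and the one-step transition $\mathbb{E}_{s'\sim P}$ is a non-expansion. Boundedness of $R$, $F^{\boldsymbol{\theta}}$, $L$, and the switching cost ensure iterates remain in a complete subspace. Banach's fixed-point theorem then yields a unique fixed point $V^\star$ and the convergence $T^k V \to V^\star$ for any starting $V$. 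The final step is to identify $V^\star$ with $V^{\boldsymbol{\pi}^\star,g^\star}$ by greedy extraction: the joint action attaining the continuation-branch maximiser defines $\boldsymbol{\pi}^\star$, the $(\mathfrak{g}_c,\theta^c)$ pair attaining the outer maximiser defines $g^\star$, and Corollary 1 (built on Prop.~1) guarantees that the {\fontfamily{cmss}\selectfont Generator}'s best response is consistent with maximising the agents' extrinsic return, ruling out a misaligned unilateral deviation and thereby securing the stable-point property.

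The main obstacle is the coupling introduced by the switch variable $I$: the intervention branch evaluates $V$ at the toggled state $(s',1-I)$ rather than $(s',I)$, so the contraction argument must be carried out jointly on the augmented domain rather than state-by-state, and one must verify that the contraction modulus remains $\gamma$ despite this cross-coupling between the two levels of the switch. A secondary subtlety is that the exploration term $L$ decays with state-action revisitation and is therefore nonstationary; a clean way to dispatch this is to first prove the fixed-point result with $L$ treated as a bounded stationary reward, then pass to the limit, or equivalently to invoke the two-timescale stochastic-approximation framework underlying Assumptions 1--7 so that the slow decay of $L$ does not interfere with the primary contraction analysis.
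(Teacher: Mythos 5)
Your proposal follows essentially the same route as the paper: show the obstacle-form Bellman operator is a $\gamma$-contraction (the paper's Lemma on the Bellman contraction, proved by exactly the case analysis over the continuation and intervention branches that you describe), apply Banach's fixed-point theorem, and use Proposition \ref{preservation_lemma}'s alignment of the {\fontfamily{cmss}\selectfont Generator}'s objective with the team's extrinsic return to identify the fixed point with a stable point of $\mathcal{G}$. The only real difference is presentational: the paper makes that last identification explicit by constructing an exact potential function $B^{\boldsymbol{\pi},\mathfrak{g}}(s_0,I_0)=\mathbb{E}_{\boldsymbol{\pi},\mathfrak{g}}\left[\sum_{t=0}^\infty\gamma^t R\right]$ shared (up to constants) by all $N+1$ players and proving by contradiction that any maximiser of $B$ is a Markov perfect equilibrium, whereas you leave the ``no profitable unilateral deviation'' step informal.
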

Theorem \ref{theorem:existence_2} proves that the MG $\mathcal{G}$ (which is the game that is induced when {\fontfamily{cmss}\selectfont Generator} plays with the $N$ agents) has a stable point which is the limit of a dynamic programming method. In particular, it proves the that the stable point of $\mathcal{G}$ is the limit point of the sequence $T^1V,T^2V,\ldots,$. Crucially, (by Corollary \ref{invariance_prop}) the limit point corresponds to the solution of the dec-MDP $\mathcal{M}$.  Theorem \ref{theorem:existence_2} is proven by firstly proving that $\mathcal{G}$ has a dual representation as an MDP whose solution corresponds to the stable point of the MG.  
Theorem \ref{theorem:existence_2} enables us to tackle the problem of finding the solution to $\mathcal{G}$ using distributed learning methods i.e. MARL to solve $\mathcal{G}$.   
%
Moreover, Prop. \ref{invariance_prop} indicates by computing the stable point of $\mathcal{G}$ leads to a solution of $\mathfrak{M}$. These results combined prove \textbf{[II]}. Our next result characterises the {\fontfamily{cmss}\selectfont Generator} policy $\mathfrak{g}_c$ and the optimal times to activate $F$. The result yields a key aspect of our algorithm for executing the  {\fontfamily{cmss}\selectfont Generator} activations of intrinsic rewards:
\begin{proposition}\label{prop:switching_times}
The policy $\mathfrak{g}_c$ is given by: $\mathfrak{g}_c(s_t,I_t)=H(\mathcal{M}^{\boldsymbol{\pi},g}V^{\boldsymbol{\pi},g}- V^{\boldsymbol{\pi},g})(s_t,I_t),\;\;\forall (s_t,I_t)\in\mathcal{S}\times\{0,1\}$, where $V^{\boldsymbol{\pi},g}$ is the solution in Theorem \ref{theorem:existence_2}, $\mathcal{M}$ is the  {\fontfamily{cmss}\selectfont Generator}'s intervention operator (c.f. \eqref{intervention_op}) and $H$ is the Heaviside function, moreover 
$\tau_k=\inf\{\tau>\tau_{k-1}|\mathcal{M}^{\boldsymbol{\pi},g}V^{\boldsymbol{\pi},g}= V^{\boldsymbol{\pi},g}\}$.

\end{proposition}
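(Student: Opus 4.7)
The plan is to adapt the classical obstacle/quasi-variational inequality argument from impulse control to our switching-control setting. By Theorem \ref{theorem:existence_2}, the Generator's value $V^{\boldsymbol{\pi},g}$ is the unique fixed point of the dual MDP Bellman operator $T$, so at a stable point the value admits a one-step recursion. The goal is to rewrite that recursion so that the intervention operator $\mathcal{M}^{\boldsymbol{\pi},g}$ appears explicitly, and then read off the switching rule.

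First, I would unroll the Generator's objective \eqref{generator_objective} from state $(s_t,I_t)$ by separating two mutually exclusive actions for $\mathfrak{g}_c$: either $\mathfrak{g}_c(s_t)=0$, in which case no switch cost is paid at time $t$ and the continuation value is the standard non-intervention one-step operator $T^{\boldsymbol{\pi},g}V^{\boldsymbol{\pi},g}(s_t,I_t)$; or $\mathfrak{g}_c(s_t)=1$, in which case a switch cost $\delta$ is paid, $F^{\boldsymbol{\theta}}$ activates, and the remaining expected return is by definition $\mathcal{M}^{\boldsymbol{\pi},g}V^{\boldsymbol{\pi},g}(s_t,I_t)$ (this is the very content of the intervention operator \eqref{intervention_op}). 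Since an optimal $\mathfrak{g}_c$ picks whichever option has larger value, a standard dynamic-programming argument yields the obstacle form
\begin{align*}
V^{\boldsymbol{\pi},g}(s,I) \;=\; \max\bigl\{\mathcal{M}^{\boldsymbol{\pi},g}V^{\boldsymbol{\pi},g}(s,I),\; T^{\boldsymbol{\pi},g}V^{\boldsymbol{\pi},g}(s,I)\bigr\},
\end{align*}
from which $V^{\boldsymbol{\pi},g}\geq\mathcal{M}^{\boldsymbol{\pi},g}V^{\boldsymbol{\pi},g}$ pointwise on $\mathcal{S}\times\{0,1\}$.

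Second, I would partition $\mathcal{S}\times\{0,1\}$ into the intervention region $\cI:=\{(s,I):\mathcal{M}^{\boldsymbol{\pi},g}V^{\boldsymbol{\pi},g}(s,I)=V^{\boldsymbol{\pi},g}(s,I)\}$ and its complement (the continuation region, where $\mathcal{M}^{\boldsymbol{\pi},g}V^{\boldsymbol{\pi},g}<V^{\boldsymbol{\pi},g}$). On $\cI$ the switch achieves the optimal value, so $\mathfrak{g}_c=1$ is optimal; off $\cI$ switching is strictly suboptimal, so $\mathfrak{g}_c=0$. Taking the Heaviside convention $H(0)=1$, these two cases collapse to the desired formula $\mathfrak{g}_c(s,I)=H(\mathcal{M}^{\boldsymbol{\pi},g}V^{\boldsymbol{\pi},g}-V^{\boldsymbol{\pi},g})(s,I)$. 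The characterisation of $\tau_k$ then follows immediately from its definition as the $k$-th activation time of $\mathfrak{g}_c$: since $\mathfrak{g}_c(s_\tau)=1$ iff $(s_\tau,I_\tau)\in\cI$, and $\tau_k$ must be the first such time strictly after $\tau_{k-1}$, we obtain $\tau_k=\inf\{\tau>\tau_{k-1}:\mathcal{M}^{\boldsymbol{\pi},g}V^{\boldsymbol{\pi},g}(s_\tau,I_\tau)=V^{\boldsymbol{\pi},g}(s_\tau,I_\tau)\}$.

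The main obstacle is rigorously establishing the obstacle/max identity for $V^{\boldsymbol{\pi},g}$. This requires carefully bookkeeping the switch cost $\delta^t_{\tau_{2k-1}}$ in \eqref{generator_objective} and showing that the "continuation after a switch" contribution in the unrolled Bellman expression coincides with $\mathcal{M}^{\boldsymbol{\pi},g}V^{\boldsymbol{\pi},g}$ as defined by \eqref{intervention_op}. Once this identity is in hand, the derivations of the switching rule and switching times are routine pointwise arguments on the intervention and continuation regions, so no further fixed-point analysis is needed beyond what Theorem \ref{theorem:existence_2} already provides.
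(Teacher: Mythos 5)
Your proposal is correct, but it takes a genuinely different route from the paper's own proof. You argue forwards: starting from the fixed-point identity $V^{\boldsymbol{\pi},g}=TV^{\boldsymbol{\pi},g}$ supplied by Theorem \ref{theorem:existence_2} together with the obstacle form of the Bellman operator \eqref{bellman_op}, you partition $\mathcal{S}\times\{0,1\}$ into the intervention region $\{\mathcal{M}^{\boldsymbol{\pi},g}V^{\boldsymbol{\pi},g}=V^{\boldsymbol{\pi},g}\}$ and its complement and read off the Heaviside rule pointwise, with the switching times as first hitting times of that region. The paper instead argues by contradiction on the switching-time sequence: it supposes an optimal policy delays its first switch to some $\tau_1'>\tau_1$ past the first time the obstacle is attained, unrolls the nested expectations of $v_c^{\boldsymbol{\pi},g'}$ along the trajectory, and uses the bound $\mathbb{E}[R+\gamma\mathcal{M}^{\boldsymbol{\pi},\tilde g}v_c]\leq Tv_c$ to show the earlier switch does at least as well, contradicting optimality of the delayed policy. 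Both arguments rest on the same two ingredients (the fixed-point property from Theorem \ref{theorem:existence_2} and the max structure of $T$), but yours is the cleaner verification-style argument that avoids the trajectory-unrolling bookkeeping, at the price of leaning entirely on the obstacle identity $V=\max\{\mathcal{M}V,\,\text{continuation}\}$ holding exactly at the stable point --- which, as you correctly note, is precisely what \eqref{bellman_op} and Theorem \ref{theorem:existence_2} provide, so nothing is missing. One point worth making explicit in your write-up is the Heaviside convention $H(0)=1$: since $V\geq\mathcal{M}V$ everywhere, the argument of $H$ is never strictly positive, so the formula is vacuous under the convention $H(0)=0$; you do flag this, and it is needed for the stated formula to be meaningful.
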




In general, introducing intrinsic rewards or shaping rewards may undermine learning and worsen overall performance. We now prove that the LIGS framework introduces an intrinsic reward which yields better performance for the $N$ agents as compared to solving $\mathfrak{M}$ directly (\textbf{[III]}).  

\begin{theorem}\label{NE_improve_prop}
Each agent's expected return $v^{\boldsymbol{\pi},g}$ whilst playing $\mathcal{G}$ is (weakly) higher than the expected return for $\mathfrak{M}$ (without the {\fontfamily{cmss}\selectfont Generator}) i.e. $v^{\boldsymbol{\pi},g}(s,\cdot)\geq v^{\boldsymbol{\pi}}(s),\;\forall s \in\mathcal{S},\;\forall i \in\mathcal{N}$. 
\end{theorem}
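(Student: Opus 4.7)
The plan is to exploit the fact that the Generator always has access to a trivial ``null'' policy $\tilde{g}$ with $\mathfrak{g}_c(s) \equiv 0$ for all $s$, which (when played against any agent profile $\boldsymbol{\pi}$) yields exactly the $\mathfrak{M}$ dynamics: the switch $I_t$ never activates and $F^{\boldsymbol{\theta}}\cdot I_t \equiv 0$. Combined with Proposition \ref{preservation_lemma}(ii), which guarantees that the Generator's optimal policy maximises the agents' extrinsic return, this implies that the Generator cannot worsen agent performance relative to the no-intervention baseline, since the baseline itself is a feasible Generator choice.

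First, I would verify that $v^{\boldsymbol{\pi},\tilde{g}}(s,I) = v^{\boldsymbol{\pi}}(s)$ for every $\boldsymbol{\pi} \in \boldsymbol{\Pi}$. This follows by substituting $I_t \equiv 0$ into the agents' value function, which eliminates the intrinsic reward term and recovers the original dec-MDP return. Next, let $(\boldsymbol{\pi}^\star, g^\star)$ denote the stable point of $\mathcal{G}$ from Theorem \ref{theorem:existence_2}. By the equilibrium condition, $g^\star$ is a best response of the Generator to $\boldsymbol{\pi}^\star$, and by Proposition \ref{preservation_lemma}(ii) this best response maximises $\mathbb{E}_{\boldsymbol{\pi}^\star,g}[\sum_t \gamma^t R]$ over $g$. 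Since $\tilde{g}$ is feasible, we obtain $\mathbb{E}_{\boldsymbol{\pi}^\star,g^\star}\!\left[\sum_t \gamma^t R\right] \geq \mathbb{E}_{\boldsymbol{\pi}^\star,\tilde{g}}\!\left[\sum_t \gamma^t R\right] = v^{\boldsymbol{\pi}^\star}(s)$.

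To finish, I would use Proposition \ref{preservation_lemma}(i), which asserts that the maxima of $v^{\boldsymbol{\pi},g}$ and $v^{\boldsymbol{\pi}}$ coincide, to conclude that at the equilibrium, $v^{\boldsymbol{\pi}^\star, g^\star}(s,\cdot) = \mathbb{E}_{\boldsymbol{\pi}^\star, g^\star}[\sum_t \gamma^t R]$. The last step is to compare the $\mathcal{G}$-stable value against any $\mathfrak{M}$-stable profile $\boldsymbol{\pi}_0$: since $\boldsymbol{\pi}^\star$ is the agents' joint best response to $g^\star$, we have $v^{\boldsymbol{\pi}^\star,g^\star}(s,\cdot) = \max_{\boldsymbol{\pi}\in\boldsymbol{\Pi}} v^{\boldsymbol{\pi},g^\star}(s,\cdot) \geq v^{\boldsymbol{\pi}_0,g^\star}(s,\cdot)$, and chaining this with the previous inequality delivers $v^{\boldsymbol{\pi}^\star,g^\star}(s,\cdot) \geq v^{\boldsymbol{\pi}_0}(s)$, which is the claim.

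The main obstacle will be the careful accounting of intrinsic reward contributions under the switching controls. The telescoping identity that underpins policy invariance in PBRS-style shaping is broken when the switch $I_t$ activates and deactivates at arbitrary stopping times $\{\tau_k\}$; one must use the convention $\theta_{\tau_k}\equiv 0$ together with Proposition \ref{preservation_lemma}(i) to guarantee that the gap between $v^{\boldsymbol{\pi},g}$ and the extrinsic-only $v^{\boldsymbol{\pi}}$ vanishes at the optimum, which is what permits translation between the $\mathcal{G}$-objective and the $\mathfrak{M}$-objective in the comparison step. A secondary subtlety is that MGs may admit multiple equilibria, but Theorem \ref{theorem:existence_2} pins down the relevant stable point as the Bellman fixed point, removing this ambiguity.
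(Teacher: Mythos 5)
Your proposal is correct and follows essentially the same route as the paper's proof: both hinge on the observation that the null Generator policy $g\equiv\boldsymbol{0}$ is a feasible choice that exactly recovers the $\mathfrak{M}$ return, so at an MPE the Generator's objective (aligned with the agents' extrinsic return by Prop.~\ref{preservation_lemma}) cannot fall below the no-intervention baseline. The paper states this as a one-line contradiction with the MPE condition, whereas you spell out the same argument directly as a chain of best-response inequalities and add the explicit comparison to an $\mathfrak{M}$-stable profile, which is a more complete rendering of the same idea rather than a different one.
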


Theorem \ref{NE_improve_prop} shows that the {\fontfamily{cmss}\selectfont Generator}'s influence leads to an improvement in the system performance. Note that by Prop. \ref{preservation_lemma}, Theorem \ref{NE_improve_prop} compares the environment (extrinsic) rewards accrued by the agents so that the presence of the {\fontfamily{cmss}\selectfont Generator} increases the total expected environment rewards.
We complete our analysis by extending Theorem \ref{theorem:existence_2} to capture (linear) function approximators which proves \textbf{[IV]}. We first define a \textit{projection} $\Pi$ by: $
\Pi \Lambda:=\underset{\bar{\Lambda}\in\{\Phi r|r\in\mathbb{R}^p\}}{\arg\min}\left\|\bar{\Lambda}-\Lambda\right\|$ for any function $\Lambda$.
%
%
\begin{theorem}\label{primal_convergence_theorem}
LIGS converges to the stable point of  $\mathcal{G}$, 
moreover, given a set of linearly independent basis functions $\Phi=\{\phi_1,\ldots,\phi_p\}$ with $\phi_k\in L_2,\forall k$. LIGS converges to a limit point $r^\star\in\mathbb{R}^p$ which is the unique solution to  $\Pi \mathfrak{F} (\Phi r^\star)=\Phi r^\star$ where
    $\mathfrak{F}\Lambda:=\hat{R}+\gamma P \max\{\mathcal{M}\Lambda,\Lambda\}$ . Moreover, $r^\star$ satisfies: $
    \left\|\Phi r^\star - Q^\star\right\|\leq (1-\gamma^2)^{-1/2}\left\|\Pi Q^\star-Q^\star\right\|$.
\end{theorem}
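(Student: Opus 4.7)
The plan is to split the theorem into two parts: first the convergence of LIGS to the stable point of $\mathcal{G}$, then the function approximation analysis yielding the fixed point $r^\star$ and its error bound. For the first part, I would invoke Theorem~\ref{theorem:existence_2}, which already shows the Bellman operator $T$ underlying $\mathcal{G}$ has a unique fixed point computable by iteration $T^k V \to V^{\boldsymbol{\pi^\star},g^\star}$. LIGS's updates can be cast as an asynchronous stochastic approximation scheme with $T$ as the underlying operator, so convergence follows from the standard Tsitsiklis-style argument once the step-size conditions (Robbins-Monro) and infinite visitation hypotheses of Assumptions~1--7 are checked.

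For the function approximation part, the core is to prove that $\mathfrak{F}\Lambda := \hat R + \gamma P \max\{\mathcal M\Lambda, \Lambda\}$ is a $\gamma$-contraction in a suitable weighted $L_2$ norm $\|\cdot\|_D$ induced by the stationary distribution of the behaviour policy. The key algebraic lemma I would first establish is that the componentwise maximum $\Lambda \mapsto \max\{\mathcal{M}\Lambda,\Lambda\}$ is a non-expansion, which follows from $|\max\{a,b\}-\max\{c,d\}| \leq \max\{|a-c|,|b-d|\}$ provided $\mathcal M$ itself is a non-expansion (this is where I would appeal to the definition of the intervention operator in \eqref{intervention_op}, since $\mathcal M$ is a supremum over admissible interventions plus the discounted continuation value). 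Combining this with the $\gamma$-contraction of $P$ under $\|\cdot\|_D$ gives $\|\mathfrak F \Lambda_1 - \mathfrak F \Lambda_2\|_D \leq \gamma \|\Lambda_1 - \Lambda_2\|_D$. Since $\Pi$ is an orthogonal projection onto the subspace $\{\Phi r : r \in \mathbb R^p\}$, it is a non-expansion in $\|\cdot\|_D$, so $\Pi \mathfrak F$ is also a $\gamma$-contraction; Banach's theorem then yields a unique $r^\star$ with $\Pi \mathfrak F(\Phi r^\star) = \Phi r^\star$, and the projected stochastic iteration executed by LIGS converges almost surely to it.

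For the error bound, I would use the Pythagorean decomposition $\|\Phi r^\star - Q^\star\|_D^2 = \|\Phi r^\star - \Pi Q^\star\|_D^2 + \|\Pi Q^\star - Q^\star\|_D^2$, which holds because $\Phi r^\star - \Pi Q^\star$ lies in the subspace and $\Pi Q^\star - Q^\star$ is orthogonal to it. Then $\|\Phi r^\star - \Pi Q^\star\|_D = \|\Pi \mathfrak F(\Phi r^\star) - \Pi \mathfrak F Q^\star\|_D \leq \gamma \|\Phi r^\star - Q^\star\|_D$, using that $Q^\star$ is a fixed point of $\mathfrak F$ (by Theorem~\ref{theorem:existence_2}) and that $\Pi$ is a non-expansion. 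Substituting and rearranging gives $\|\Phi r^\star - Q^\star\|_D^2 \leq \gamma^2 \|\Phi r^\star - Q^\star\|_D^2 + \|\Pi Q^\star - Q^\star\|_D^2$, which solves to the claimed $(1-\gamma^2)^{-1/2}$ bound.

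The main obstacle I anticipate is verifying the non-expansion property of $\mathcal{M}$ in the weighted norm $\|\cdot\|_D$ rather than the sup norm: although $\mathcal M$ is naturally a non-expansion in $\|\cdot\|_\infty$ (since it is a supremum of affine operations), transferring this to $\|\cdot\|_D$ requires the stationary distribution induced by the switching policy $\mathfrak g_c$ to be compatible, which is the delicate part given that $\mathfrak g_c$ itself evolves during training. I would handle this by working on the dual MDP representation of $\mathcal{G}$ identified in the proof of Theorem~\ref{theorem:existence_2}, where the switching decisions are absorbed into an augmented action space and the sampling distribution becomes well-defined; on this dual MDP the standard projected-Bellman analysis of Tsitsiklis and Van~Roy directly applies, completing the argument.
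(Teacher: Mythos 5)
Your proposal is correct and follows essentially the same route as the paper: a $\gamma$-contraction argument for $\mathfrak{F}$ (via non-expansiveness of the pointwise max, of $\mathcal{M}$, and of $P$ in the stationary-distribution-weighted $L_2$ norm), non-expansiveness of the projection $\Pi$, the Pythagorean decomposition yielding the $(1-\gamma^2)^{-1/2}$ bound, and a stochastic-approximation argument for almost-sure convergence (the paper invokes Jaakkola et al.\ for the tabular part and a Benveniste-type theorem with the descent condition $(r-r^\star)\boldsymbol{\Xi}(r)<0$ for the projected iteration, which is exactly the Tsitsiklis--Van Roy analysis you cite). The only cosmetic differences are that the paper keeps the weighted norm implicit through the ergodicity assumption rather than writing $\|\cdot\|_D$, and that the $\gamma$ in the contraction comes from the discount factor multiplying the non-expansive $P$ rather than from $P$ itself.
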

The theorem establishes the convergence of LIGS to a stable point (of $\mathcal{G}$) with the use of linear function approximators. The second statement bounds the proximity of the convergence point by the smallest approximation error that can be achieved given the choice of basis functions. 

\section{Experiments}\label{Section:Experiments}
We performed a series of experiments on the Level-based Foraging environment \citep{papoudakis2020comparative} to test if LIGS: \textbf{1.} Efficiently promotes joint exploration \textbf{2.} Optimises convergence points by inducing coordination. \textbf{3.} Handles sparse reward environments. In all tasks, we compared the performance of LIGS against MAPPO \citep{yu2021surprising}, QMIX \citep{rashid2018qmix}; intrinsic reward MARL algorithms LIIR \citep{du2019liir}, LICA \citep{zhou2020learning}, and a leading MARL exploration algorithm MAVEN \citep{mahajan2019maven}.
We then compared LIGS against these baselines in StarCraft Micromanagement II (SMAC) \citep{samvelyan2019starcraft}.
Lastly, we ran a detailed suite of ablation studies (see Appendix) in which we demonstrated LIGS' flexibility to accommodate i) different MARL learners, ii) different $L$ bonus terms for the {\fontfamily{cmss}\selectfont Generator} objective. We also demonstrated the necessity of the switching control component in LIGS and LIGS' improved use of exploration bonuses.

\subsection{Cooperative Foraging Tasks}\label{exp:foraging}

\begin{figure}[b]
\vspace{-5 mm}
\centering
\hspace{-7 mm}\includegraphics[width=3cm, height=2.5cm]{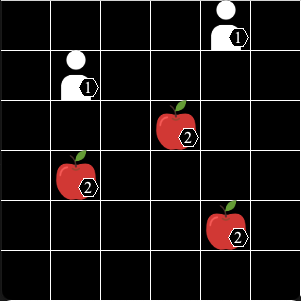}\hspace{12 mm}
\includegraphics[width=3cm, height=2.5cm]{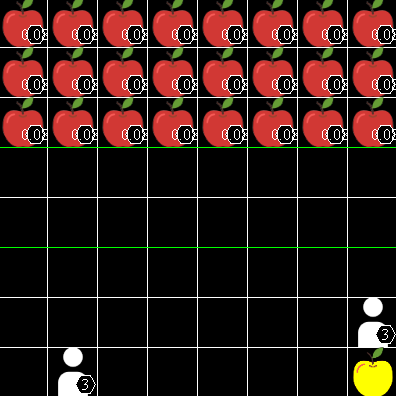}\hspace{9 mm}
\includegraphics[width=3cm, height=2.5cm]{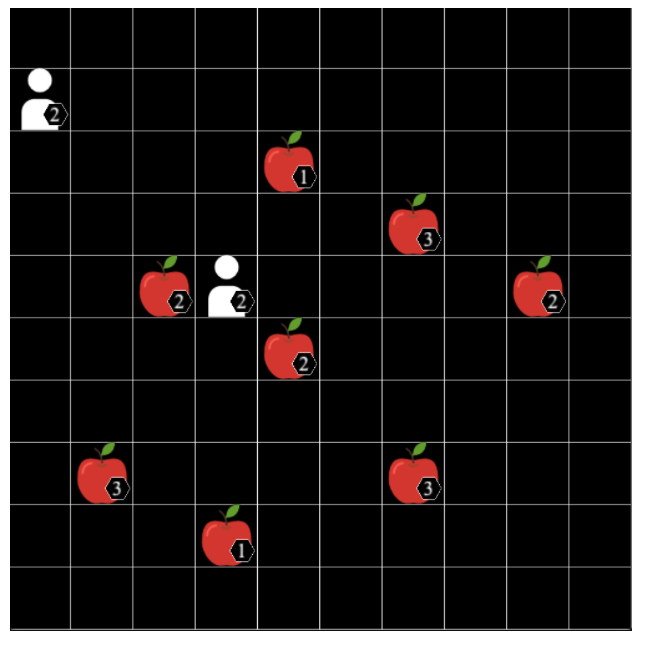}\vspace{-0.5 mm}
\includegraphics[width=.32\linewidth]{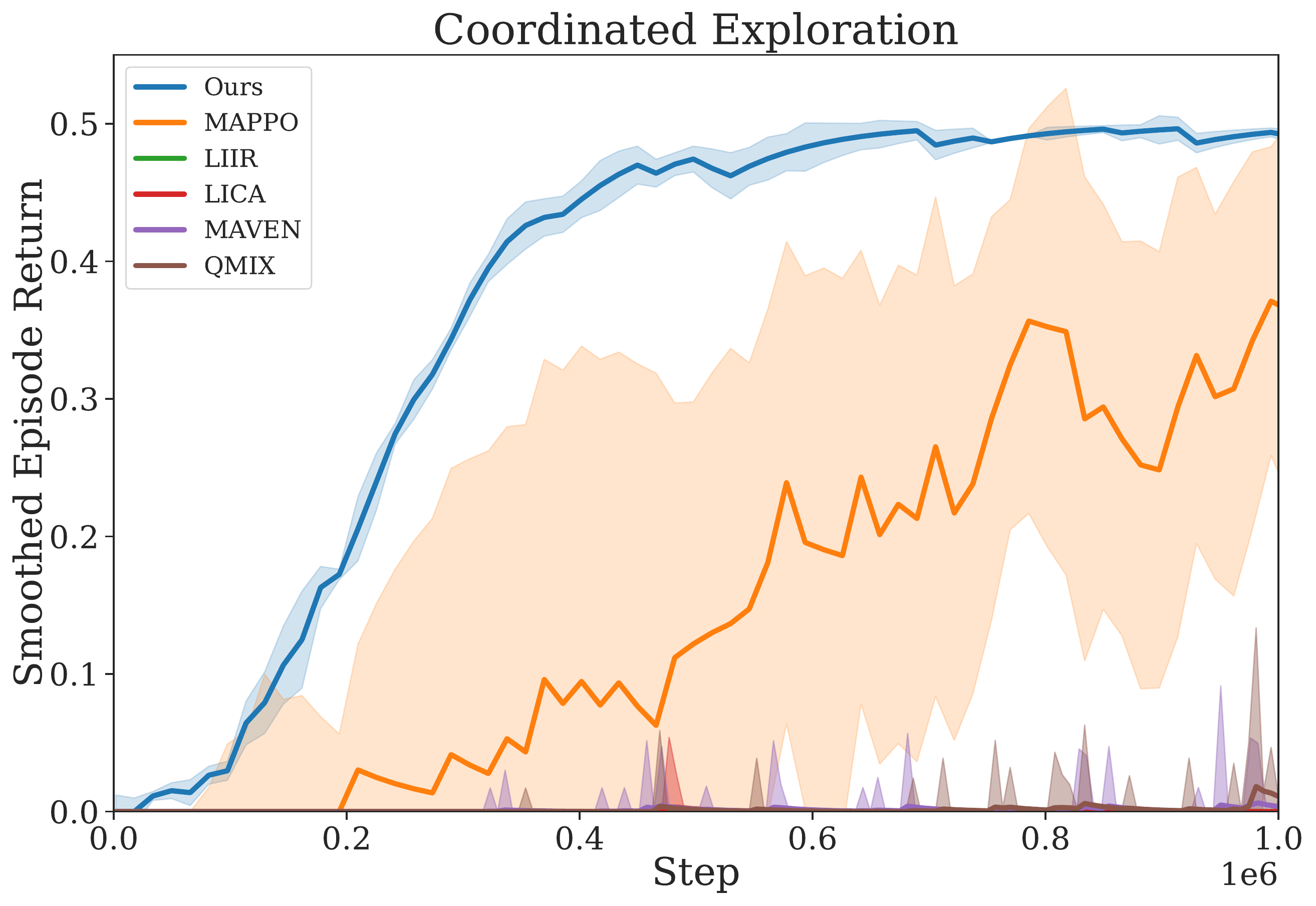}
\includegraphics[width=.32\linewidth]{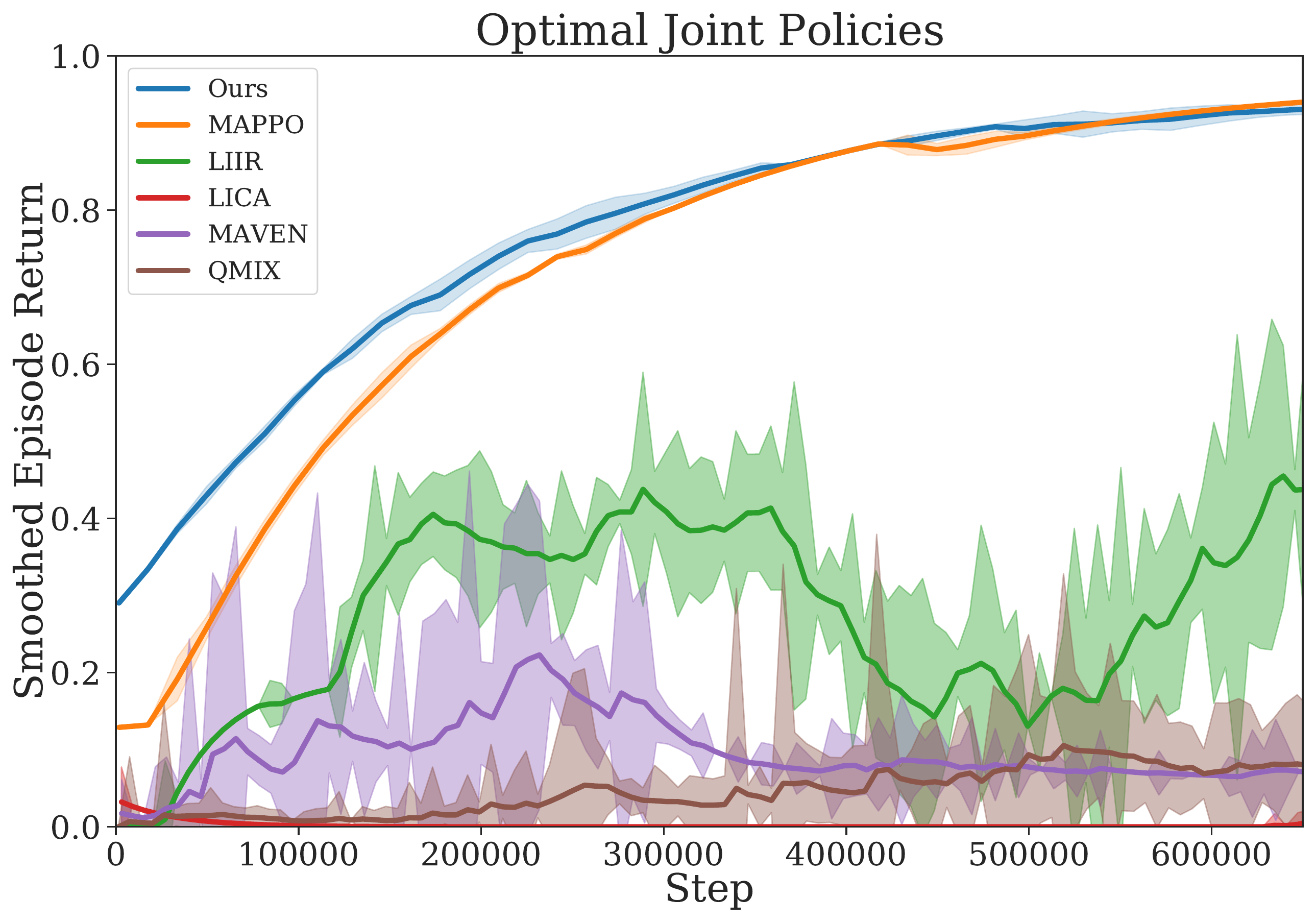}
\includegraphics[width=.32\linewidth]{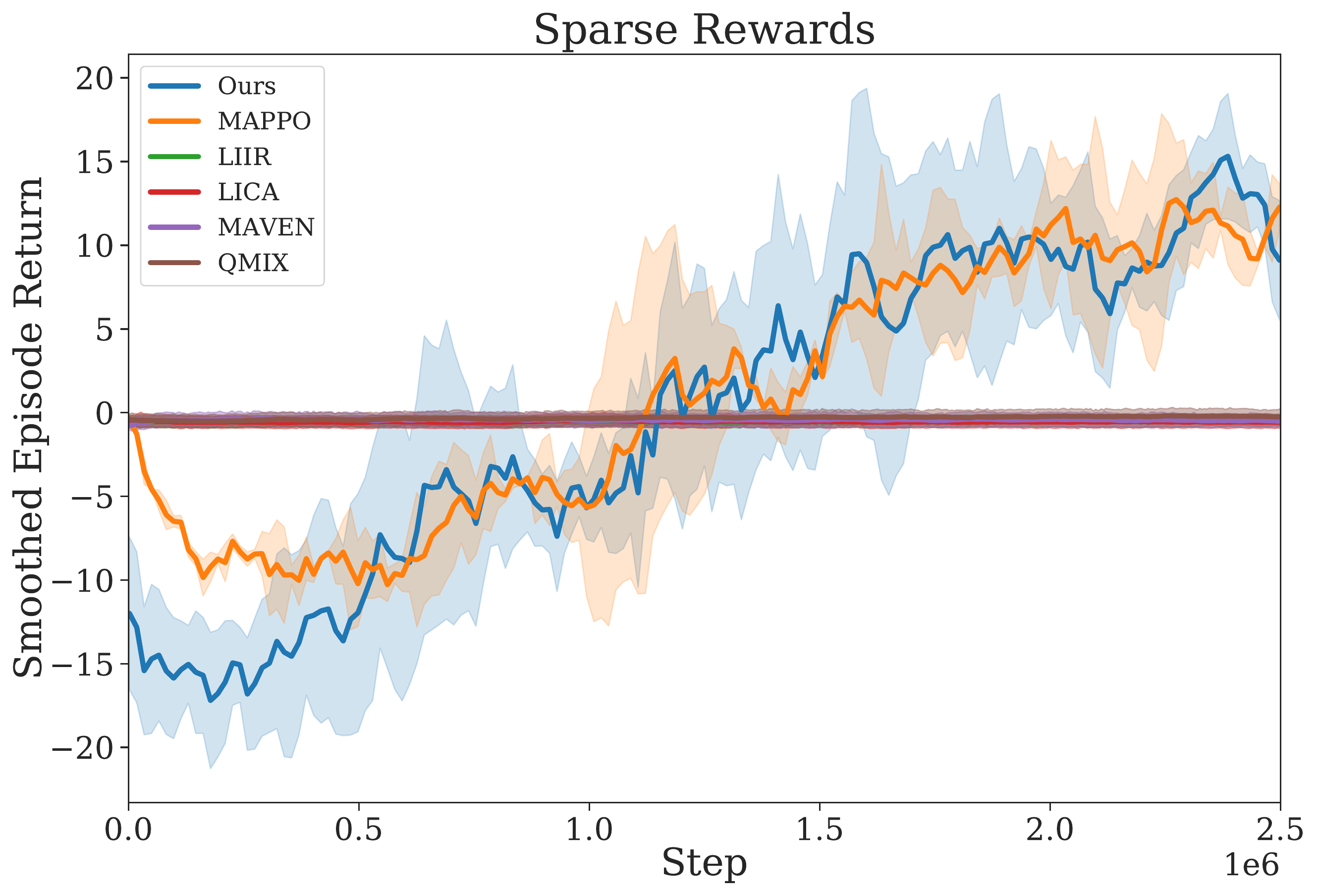}
\captionof{figure}{\textit{Left.} Coordinated Exploration. \textit{Centre}. Optimal joint policies. \textit{Right.} Sparse rewards.}
\vspace{-2mm}
\label{fig:foraging}
\end{figure}

\textbf{Experiment 1: Coordinated exploration.} We tested our first claim that LIGS promotes coordinated exploration among agents. To investigate this, we used a version of the level-based foraging environment \citep{papoudakis2020comparative} as follows: there are $n$ agents each with level $a_i$. Moreover, there are 3 apples with level $K$ such that $\sum_{i=1}^N a_i = K$. The only way to collect the reward is if all agents collectively enact the {\fontfamily{cmss}\selectfont collect} action when they are beside an apple. This is a challenging joint-exploration problem since to obtain the reward, the agents must collectively explore joint actions across the state space (rapidly) to discover that simultaneously executing {\fontfamily{cmss}\selectfont collect} near an apple produces rewards. To increase the difficulty, we added a penalty for the agents failing to coordinate in collecting the apples. For example, if only one agent uses the {\fontfamily{cmss}\selectfont collect} action near an apple, it gets a negative reward. This results in a non-monotonic reward structure. 
The performance curves are given in Fig. \ref{fig:foraging} which shows LIGS demonstrates superior performance over the baselines.

\textbf{Experiment 2: Optimal joint policies. }
We next tested our second claim that LIGS can promote convergence to joint policies that achieve higher system rewards. To do this, we constructed a challenging experiment in which the agents must avoid converging to suboptimal policies that deliver positive but low rewards. In this experiment, the grid is divided horizontally in three sections; top, middle and bottom. All grid locations in the top section give a small reward $r/n$ to the agent visiting them where $n$ is the number of tiles in the each section. The middle section does not give any rewards. The bottom section rewards the agents depending on their relative positions. If one agent is at the top and the other at the bottom, the agent at the bottom receives a reward $-r/n$ each time the other agent receives a reward. If both agents are at the bottom, then one of the tiles in this section will give a reward $R, r/2<R<r$ to both agents. The bottom section gives no reward otherwise. The agents start in the middle section and as soon as they cross to one section they cannot return to the middle. As is shown in Fig. \ref{fig:foraging}, LIGS learns to acquire rewards rapidly in comparison to the baselines with MAPPO requiring around 400k episodes to match the rewards produced by LIGS. 
%
%
%

\textbf{Experiment 3: Sparse rewards. }
We tested our claim that LIGS can promote learning in MAS with sparse rewards. We simulate a sparse reward setting using a competitive game between two teams of agents. One team is controlled by LIGS while the other actions of the agents belonging to the other team are determined by a fixed policy. The goal is to collect the apple faster than the opposing team. Collecting the apple results in a reward of 1, and rewards are 0 otherwise. This is a challenging sparse reward since informative reward signals occur only apple when the apple is collected. As is shown in Fig. 1 both LIGS and MAPPO perform well on the sparse rewards environment, whilst the other baselines are all unable to learn any behaviour on this environment. %
\begin{figure}[h!]
\centering
\includegraphics[width=4.75cm, height=3.5cm]{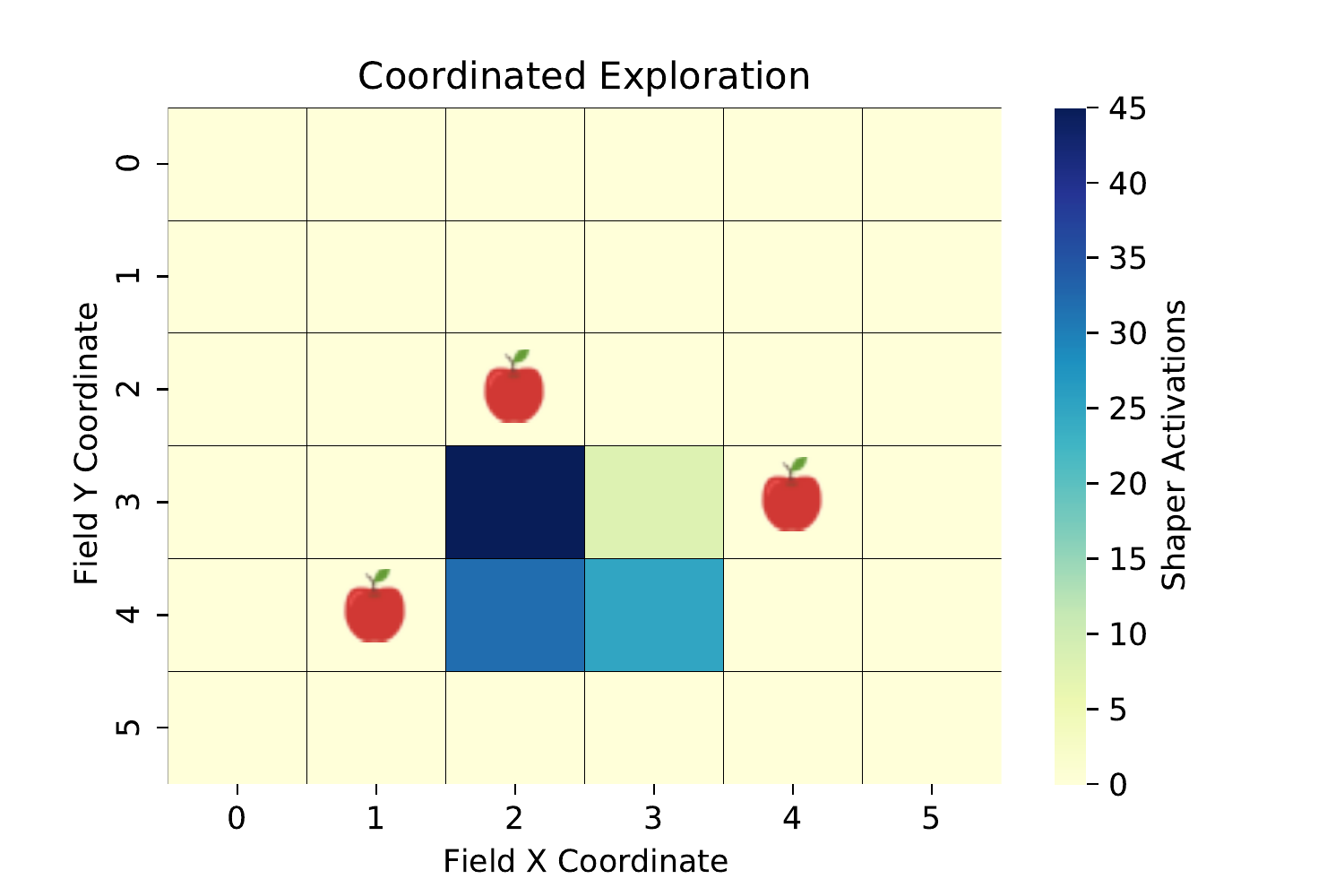}\hspace{1 mm}
\includegraphics[width=4cm, height=3.5cm]{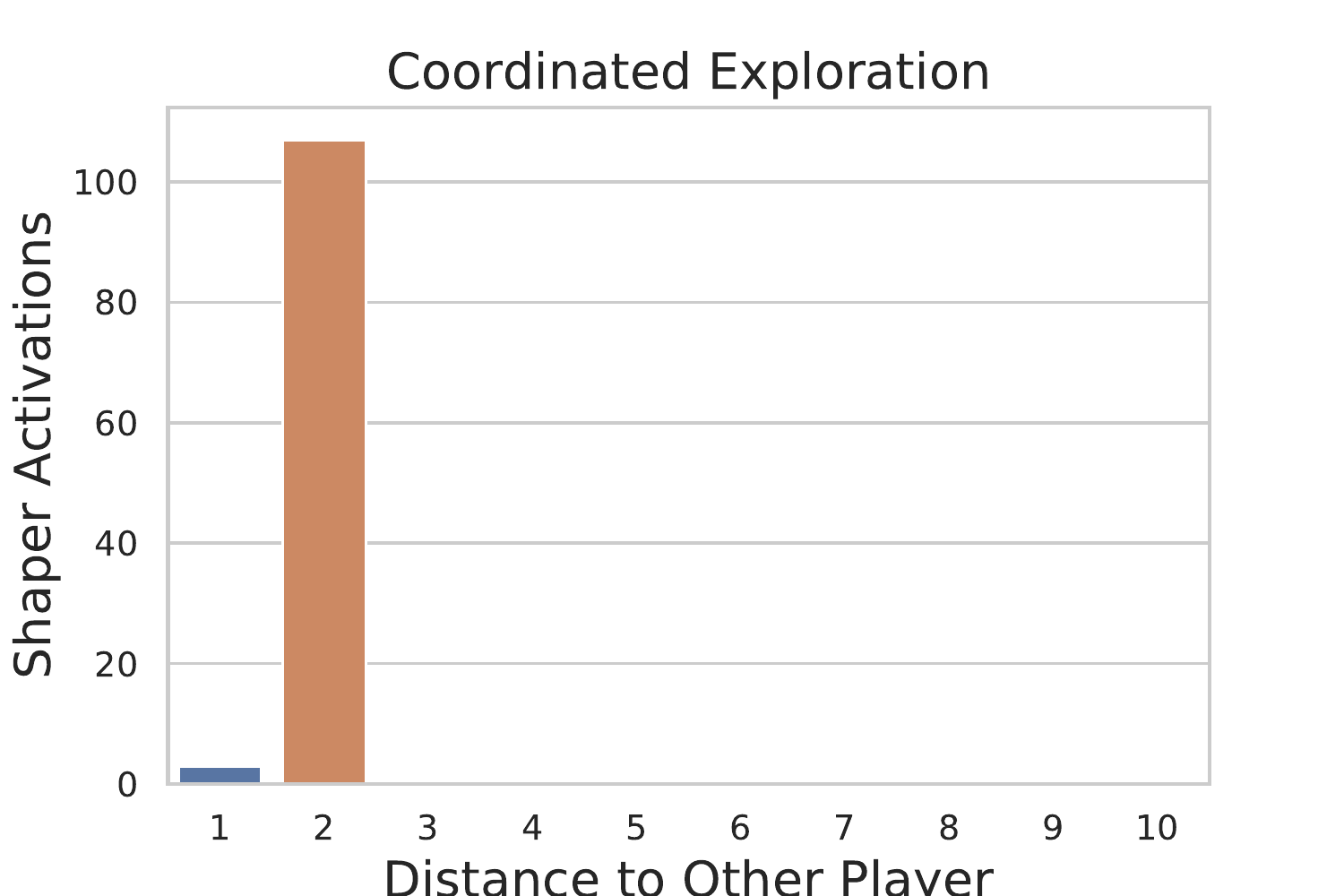}\hspace{1 mm}
\includegraphics[width=4.75cm, height=3.5cm]{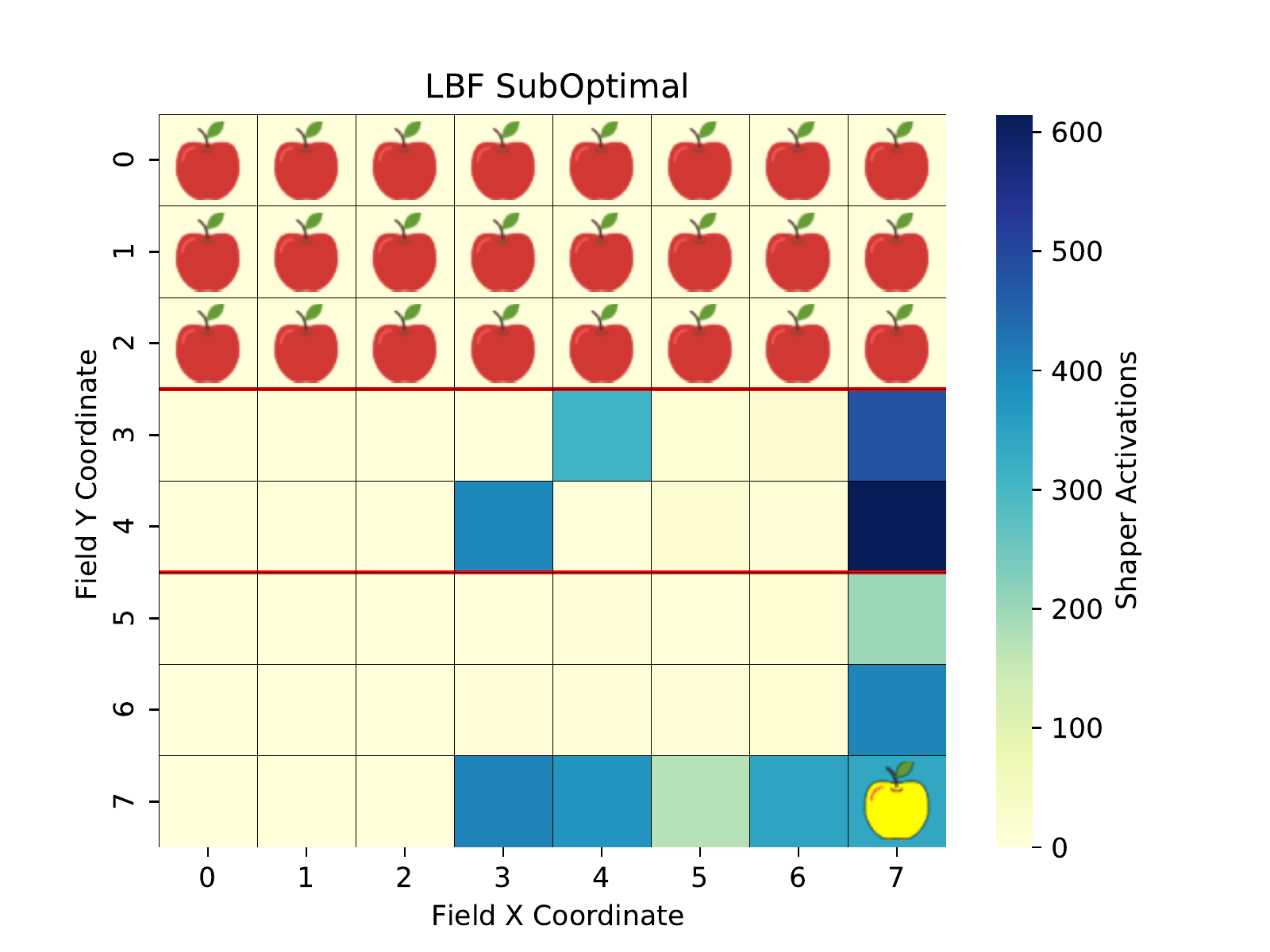}
\captionof{figure}{\textit{Left.} Heatmap of Exp. 1 showing where {\fontfamily{cmss}\selectfont Generator} adds rewards. \textit{Centre.} Plot of distance to other agent when {\fontfamily{cmss}\selectfont Generator} activates rewards in Exp 1. \textit{Right.} Corresponding heatmap for Exp. 2.}
\vspace{-1mm}
\label{fig:heatmap}
\end{figure} 

\textbf{Investigations. }
We investigated the workings of the LIGS framework. We studied the locations where the {\fontfamily{cmss}\selectfont Generator} added intrinsic rewards in Experiments 1 and 2. As  shown in the heatmap visualisation in Fig. \ref{fig:heatmap}, for Experiment 2, we observe that the {\fontfamily{cmss}\selectfont Generator} learns to add intrinsic rewards that guide the agents towards the optimal reward (bottom right) and away from the suboptimal rewards at the top (where some other baselines converge). This supports our claim that LIGS learns to guide the agents towards jointly optimal policies. Also, as Fig. \ref{fig:heatmap} shows, LIGS's switching mechanism means that the {\fontfamily{cmss}\selectfont Generator} only adds intrinsic rewards at the most useful locations for guiding the agents towards their target. For Experiment 1, Fig. \ref{fig:heatmap} shows that the {\fontfamily{cmss}\selectfont Generator} learns to guide the agents towards the apple which delivers the high rewards. Fig. \ref{fig:heatmap} (Centre) demonstrates a striking behaviour of the LIGS framework - it only activates the intrinsic rewards around the apple when \textit{both} agents are at most 2 cells away from the apple. Since the agents receive positive rewards only when they arrive at the apple simultaneously, this ensures the agents are encouraged to coordinate their arrival and receive the maximal rewards and avoids encouraging arrivals that lead to penalties.    
\subsection{Learning Performance in StarCraft Multi-Agent Challenge}
\begin{figure}
\centering
    \includegraphics[scale=0.3]{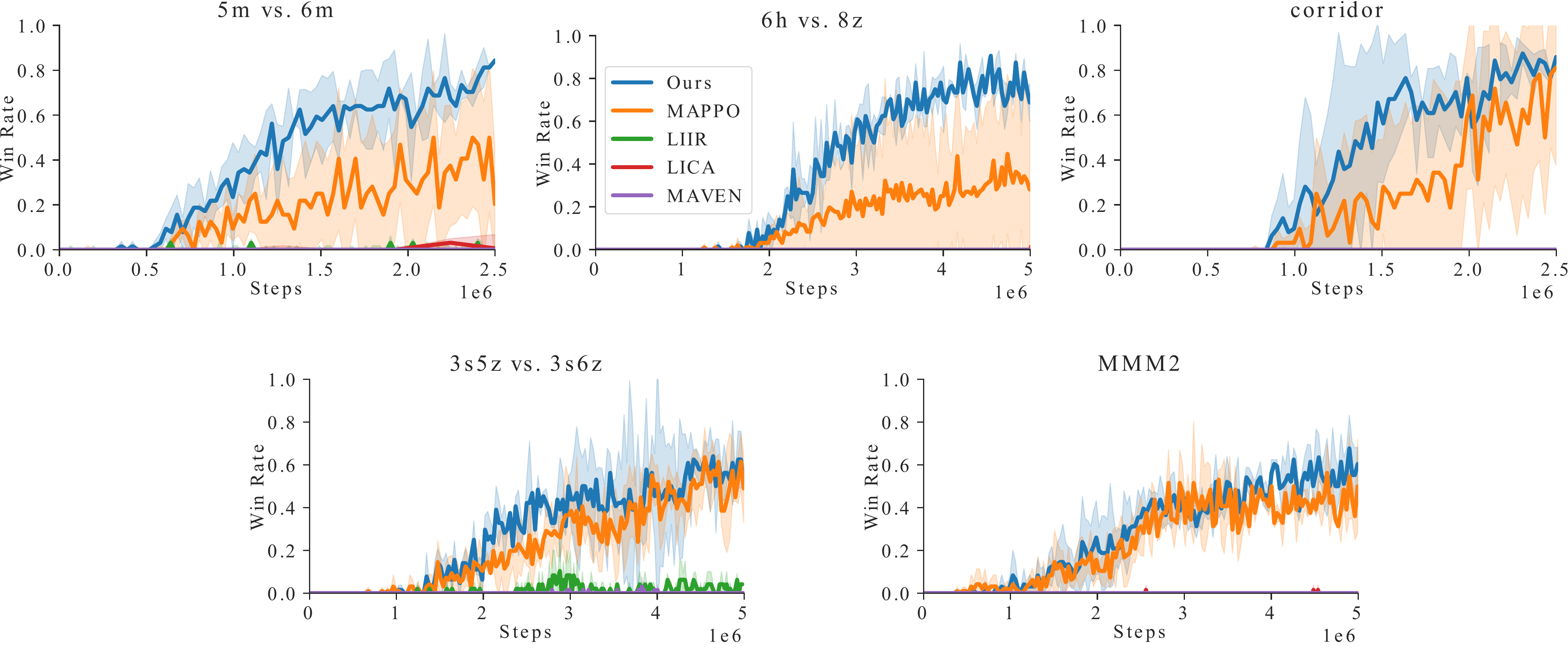}
    \caption{\emph{Median win rate over the course of learning on SMAC.} LIGS outperforms the baselines on all maps. LIIR, LICA, and MAVEN are generally not visible as their win rate is negligible.}
    \label{Figure:SMAC_learning_curves}\vspace{-5 mm}
\end{figure}


%
%

To ascertain if LIGS is effective even in complex environments, we ran it on on the following SMAC maps \emph{5m vs. 6m} (hard), \emph{6h vs. 8z}, \emph{Corridor}, \emph{3s5z vs 3s6z} and \emph{MMM2} (super hard). 
These maps vary in a range of MARL attributes such as number of units to control, environment reward density, unit action sets, and (partial)-observability. In Fig. \ref{Figure:SMAC_learning_curves}, we report our results showing `Win Rate' vs `Steps'. These curves are generated by computing the median win rate (vs the opponent) of the agent at regular intervals during learning. We ran $3$ seeds of each algorithm (further setup details are in the Supp. material Sec \ref{sec:app_imp_details}). LIGS outperforms the baselines in all maps. In \emph{5m vs. 6m} and \emph{6h vs. 8z}, the baselines do not approach the performance of LIGS. In  \emph{Corridor} MAPPO requires over an extra million steps to match LIGS. In \emph{3s5z vs. 3s6z} and \emph{MMM2}, LIGS still outperforms the baselines. In summary, LIGS shows performance gains over all baselines in SMAC maps which encompass diverse MAS attributes.



\vspace{-2mm}
\section{Conclusion}

We introduced LIGS, a novel framework for generating intrinsic rewards which significantly boosts performance of MARL algorithms. Central to LIGS is a powerful adaptive learning mechanism that generates intrinsic rewards according to the task and the MARL learners' joint behaviour. 
%
%
Our experiments show LIGS induces superior performance in MARL algorithms in a range of tasks.
\newpage
\section{Acknowledgements}
We would like to thank Matthew Taylor and Aivar Sootla for their helpful comments. 

%

\bibliographystyle{iclr2022_conference}
\bibliography{main}

\clearpage

\addcontentsline{toc}{section}{Appendix} 
\part{{\Large{Appendix}}} 
\parttoc

\newpage
\section{Algorithm}\label{sec:algorithm}

\begin{algorithm}[h]
    \label{algo:Opt_reward_shap} 
    \DontPrintSemicolon
    \KwInput{ Environment $E$ \;
             \hspace{3em} Initial {\fontfamily{cmss}\selectfont agent} policies $\boldsymbol{\pi}_0=(\pi^1_0,\ldots \pi^N_0)$ with parameters  $\theta_{\pi^1_0},\ldots\theta_{\pi^1_N}$,  Initial {\fontfamily{cmss}\selectfont Generator} switch policy $\mathfrak{g}_{c_{0}}$ with parameters $\theta_{\mathfrak{g}_{c_{0}}}$, Initial {\fontfamily{cmss}\selectfont Generator} action policy $g_0$ with parameters $\theta_{g_0}$,  Randomly initialised fixed neural network $\phi(\cdot, \cdot)$, Neural networks $h$ (fixed) and $\hat{h}$ for Augmented RND with parameter $\theta_{\hat{h}}$, Buffer $B$, Number of rollouts $N_r$, rollout length $T$, Number of mini-batch updates $N_u$, Switch cost $c$, discount factor $\gamma$, learning rate $\alpha$.\;
             }
    \KwOutput{Optimised {\fontfamily{cmss}\selectfont agent} policies $\boldsymbol{\pi^\star}=(\pi^{\star,1},\ldots,\pi^{\star,N})$}
    $\boldsymbol{\pi}=(\pi^1,\ldots,\pi^N), g, \mathfrak{g}_{c} \gets \boldsymbol{\pi}_0, g_0,\mathfrak{g}_{c_{0}}$\;
    \For{$n = 1, N_r$}
    {
        \textbf{// Collect rollouts}\;
        \For{$t = 1, T$}
        {
            Get environment states $s_t$ from $E$ \;
            Sample $\boldsymbol{a}_t=(a^1_t,\ldots,a^N_t)$ from $(\pi^1(s_t),\ldots,\pi^N(s_t))$ \;
            Apply action $\boldsymbol{a}_t$ to environment $E$, get rewards $\boldsymbol{r}_t=(r^1_t,\ldots,r^N_t)$ and next state $s_{t+1}$ \;
            Sample $q_t$ from $\mathfrak{g}_{c}(s_t)$ \textbf{ // Switching control} \;
            \eIf{$q_t = 1$}
            {   
                Sample $\theta^c_t$ from $g(s_t)$ \;
                Sample $\theta^c_{t+1}$ from $g(s_{t+1})$ \;
                $f^i_t = \gamma \theta^c_{t+1} - \theta^c_{t}$ \textbf{// Calculate $F(\theta^c_t, \theta^c_{t+1})$}\;
            }
            {   
                $\theta^c_t, f^i_t = 0, 0$ \textbf{// Dummy values}
            }
            Append $(s_t, \boldsymbol{a}_t, g_t, \theta^c_t, \boldsymbol{r}_t, f^i_t, s_{t+1})$ to $B$
        }
        \For{$u = 1, N_u$}
        {
            Sample data $(s_t, \boldsymbol{a}_t, g_t, \theta^c_t, \boldsymbol{r}_t, f^i_t, s_{t+1})$ from $B$\;
            \eIf{$g_t = 1$}
            {   
                Set reward to $\boldsymbol{r}_t^s = \boldsymbol{r}_t + f^i_t$
            }
            {
                Set reward to $\boldsymbol{r}_t^s = \boldsymbol{r}_t$
            }
            \textbf{// Update Augmented RND} \;
            $\text{Loss}_{\text{RND}} = ||h(s_t,\boldsymbol{a}_t) - \hat{h}(s_t,\boldsymbol{a}_t)||^2$ \;
            $\theta_{\hat{h}} \gets \theta_{\hat{h}} - \alpha \nabla \text{Loss}_{\text{RND}} $ \;
            \textbf{// Update {\fontfamily{cmss}\selectfont Generator} }\;
            $l_t = ||h(s_t,\boldsymbol{a}_t) - \hat{h}(s_t)||^2$ \textbf{// Compute $L(s_t,\boldsymbol{a}_t)$} \;
            $c_t = c g_t$ \;
            Compute $\text{Loss}_{g}$ using $(s_t, a_t, g_t, c_t, \boldsymbol{r}_t, f^i_t, l_t, s_{t+1})$ using PPO loss \textbf{// Section \ref{sec:learning_proc}} \;
            Compute $\text{Loss}_{ \mathfrak{g}_{c}}$ using $(s_t, a_t, g_t, c_t, \boldsymbol{r}_t, f^i_t, l_t, s_{t+1})$ using PPO loss \textbf{// Section \ref{sec:learning_proc}}\;
            $\theta_{g} \gets \theta_{g} - \alpha \nabla \text{Loss}_{g}$ \;
            $\theta_{\mathfrak{g}_{c}} \gets \theta_{\mathfrak{g}_{c}} - \alpha \nabla \text{Loss}_{\mathfrak{g}_{c}}$ \;
            \textbf{// Update agent $j$, for each $ j \in 1,\ldots, N$}\;
            Compute $\text{Loss}_{ \pi^j}$ using $(s_t, \boldsymbol{a}_t, r^{j,s}_t:=r^j_t+f^i_t, s_{t+1})$ using PPO loss \textbf{// Section \ref{sec:learning_proc}} \;
            $\theta_{\pi^j} \gets \theta_{\pi^j} - \alpha \nabla \text{Loss}_{\pi^j}$ \;
        }
    }
	\caption{\textbf{L}earnable \textbf{I}ntrinsic-Reward \textbf{G}eneration \textbf{S}election algorithm (LIGS)}
\end{algorithm}
\newpage

\section{Ablation study: Plug \& Play}\label{sec:plug_n_play}
In order to validate our claim that LIGS freely adopts RL learners, we tested the ability of LIGS to boost performance in a complex coordination task using independent Proximal policy optimization algorithm (IPPO) \citep{schulman2017Proximal} as the base learner. In this experiment, two agents are spawned at opposite sides of the grid. The red agent is spawned in the left hand side and the blue agent is spawned in the right hand side of the grid in Fig. \ref{fig:my_label} (right). The goal of the agents is to arrive at their corresponding goal states (indicated by the coloured square, where the colour corresponds to the agent whose goal state it is) at the other side of the grid. Upon arriving at their goal state the agents receive their reward. However, the task is made difficult by the fact that only one agent can pass through the corridor at a time. Therefore, in this setup, the only way for the agents to complete the task is for the agents to successfully coordinate, i.e. one agent is required to allow the other agent to pass through before attempting to traverse the corridor. 

It is known that independent learners in general, struggle to solve such tasks since their ability to coordinate systems of RL learners is lacking \citep{yang2020multi}.  This is demonstrated in Fig. \ref{fig:my_label} (left) which displays the performance curve of for IPPO which fails to score above $0$. As claimed, when incorporated into the LIGS framework, the agents succeed in coordinating to solve the task. This is indicated by the performance of IPPO + LIGS (blue).  

\begin{figure}[h!]
    \centering
\includegraphics[width=0.4\textwidth]{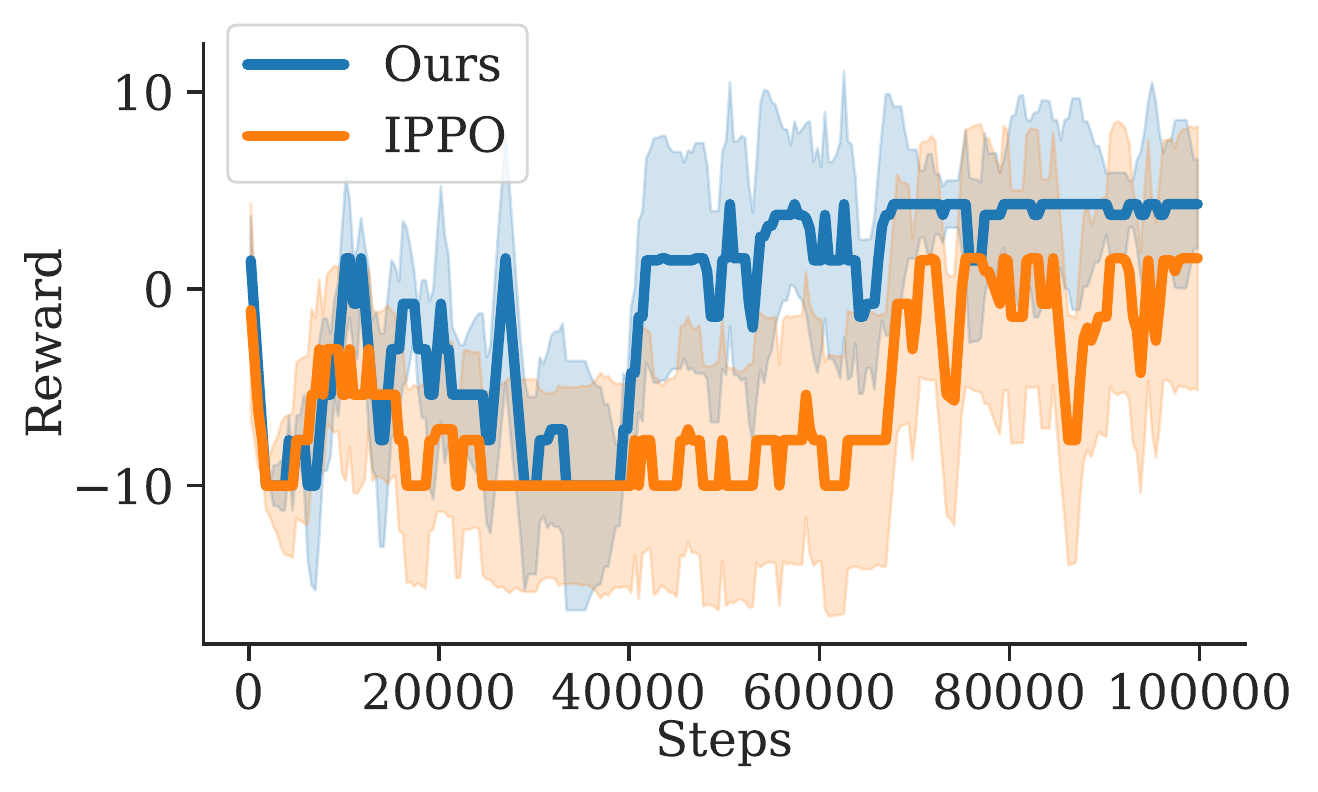}
\vspace{3mm}\hspace{10 mm}
    \includegraphics[width=0.4\textwidth]{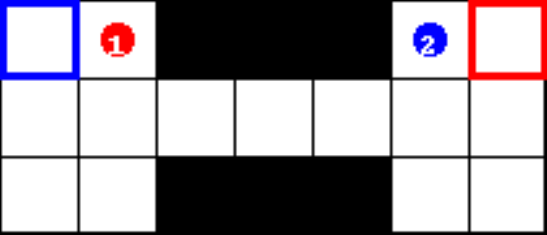}
    \caption{\textit{Left.} Performance curves for IPPO and IPPO with LIGS. \textit{Right.} Coordination environment.  }
    \label{fig:my_label}
\end{figure}

\newpage
\section{Ablation Study: The Utility of Switching Controls}
A core component of LIGS is the switching control mechanism. This component enables the {\fontfamily{cmss}\selectfont Generator} to selectively add intrinsic rewards only at the set of states most relevant for improving learning outcomes while avoiding adding intrinsic rewards where they are not necessary. To evaluate the impact of this component of LIGS, we compared the performance of LIGS with a version in which the switching control was replaced with an equal-chances Bernoulli Random Variable (i.e., at any given state, the {\fontfamily{cmss}\selectfont Generator} adds or does not add intrinsic rewards with equal probability), and, a version where it always adds intrinsic rewards. Figure \ref{fig:ablation_switching_controls} shows the performance of these three versions of LIGS. We added vanilla MAPPO as a baseline reference. We examined the performance of the variants of LIGS on the coordination task described in Section \ref{sec:plug_n_play}. As can be seen in the plot, incorporating learned switching controls in LIGS (labelled "LIGS") leads to superior performance compared to simply adding intrinsic rewards at random (line labelled "LIGS with Random Switching") and adding intrinsic rewards everywhere (labelled "LIGS with Always Adding intrinsic Rewards"). In fact, adding intrinsic rewards at random is detrimental to performance as demonstrated by the fact that the performance of LIGS with Random Switching is worse than that of vanilla MAPPO.

\begin{figure}[h!]
    \centering
    \includegraphics[width=\textwidth]{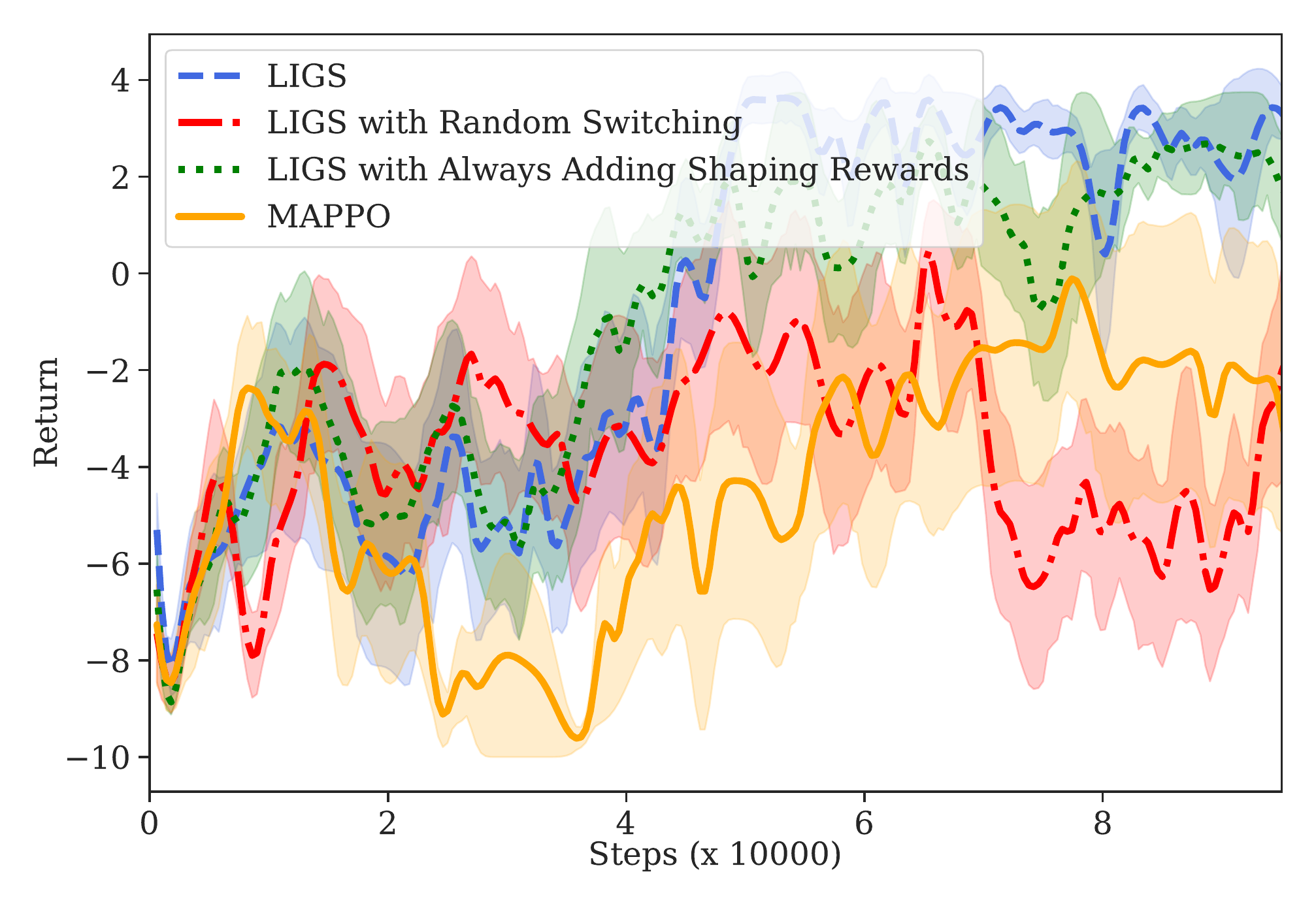}
    \caption{Ablation of the switching control mechanism. Learned switching controls ("LIGS") outperform versions where intrinsic rewards are added at random ("LIGS with Random Switching") and where intrinsic rewards are always added ("LIGS with Always Added intrinsic Rewards").}
    \label{fig:ablation_switching_controls}
\end{figure}



\newpage
\section{Flexibility of LIGS to Accommodate different Exploration Bonus Terms $L$}
To demonstrate the robustness of our method to different choices of exploration bonus terms in {\fontfamily{cmss}\selectfont Generator}'s objective, we conducted an Ablation study on the $L$-term (c.f. Equation \ref{generator_objective}) where we replaced the RND $L$ term with a basic count-based exploration bonus. To exemplify the high degree of flexibility, we replaced the RND with a simple exploration bonus term  $L(s)=\frac{1}{\text{Count(s)}+1}$ for any given state $s\in\mathcal{S}$ where Count$(s)$ refers to a simple count of the number of times the state $s$ has been visited.  We conducted the Ablation study on all three Foraging environments presented in Sec. \ref{exp:foraging}. We note that despite the simplicity of the count-based measure, generally the performance of both versions of LIGS is comparable and in fact the count-based variant is superior to the RND version for the joint exploration environment.

\begin{figure}[h!]
    \centering
    \includegraphics[width=0.45\textwidth]{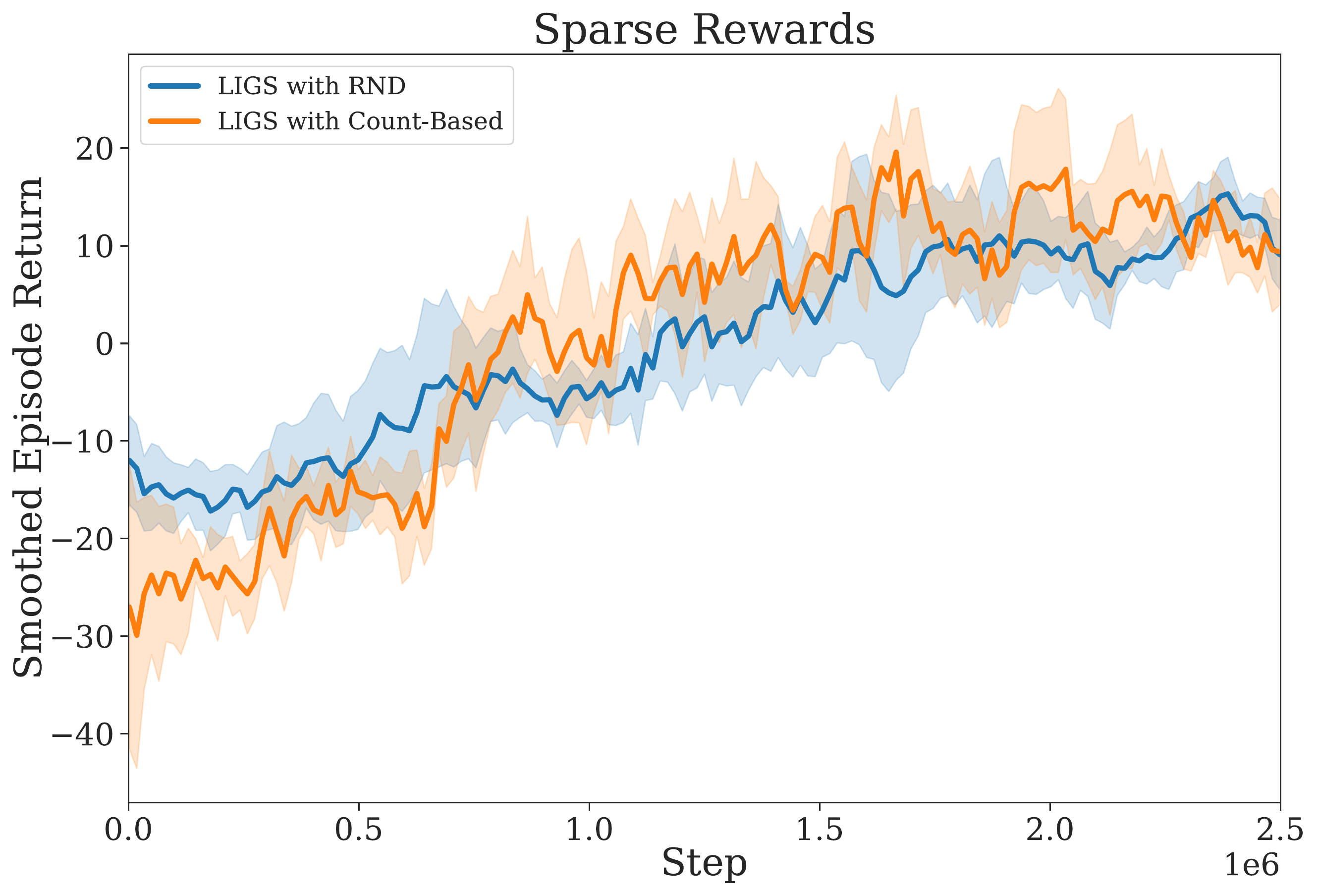}
    \includegraphics[width=0.45\textwidth]{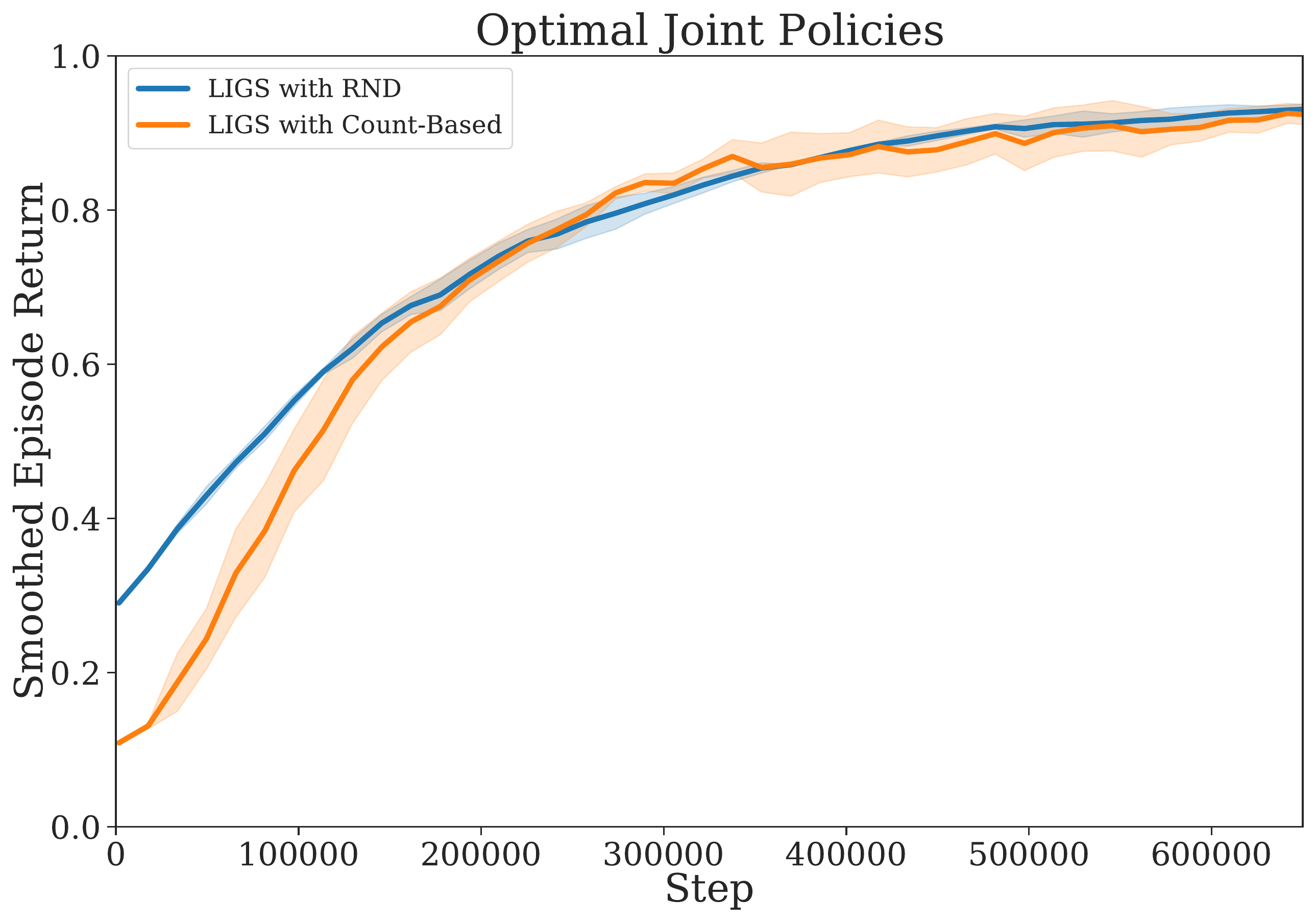}
        \includegraphics[width=0.45\textwidth]{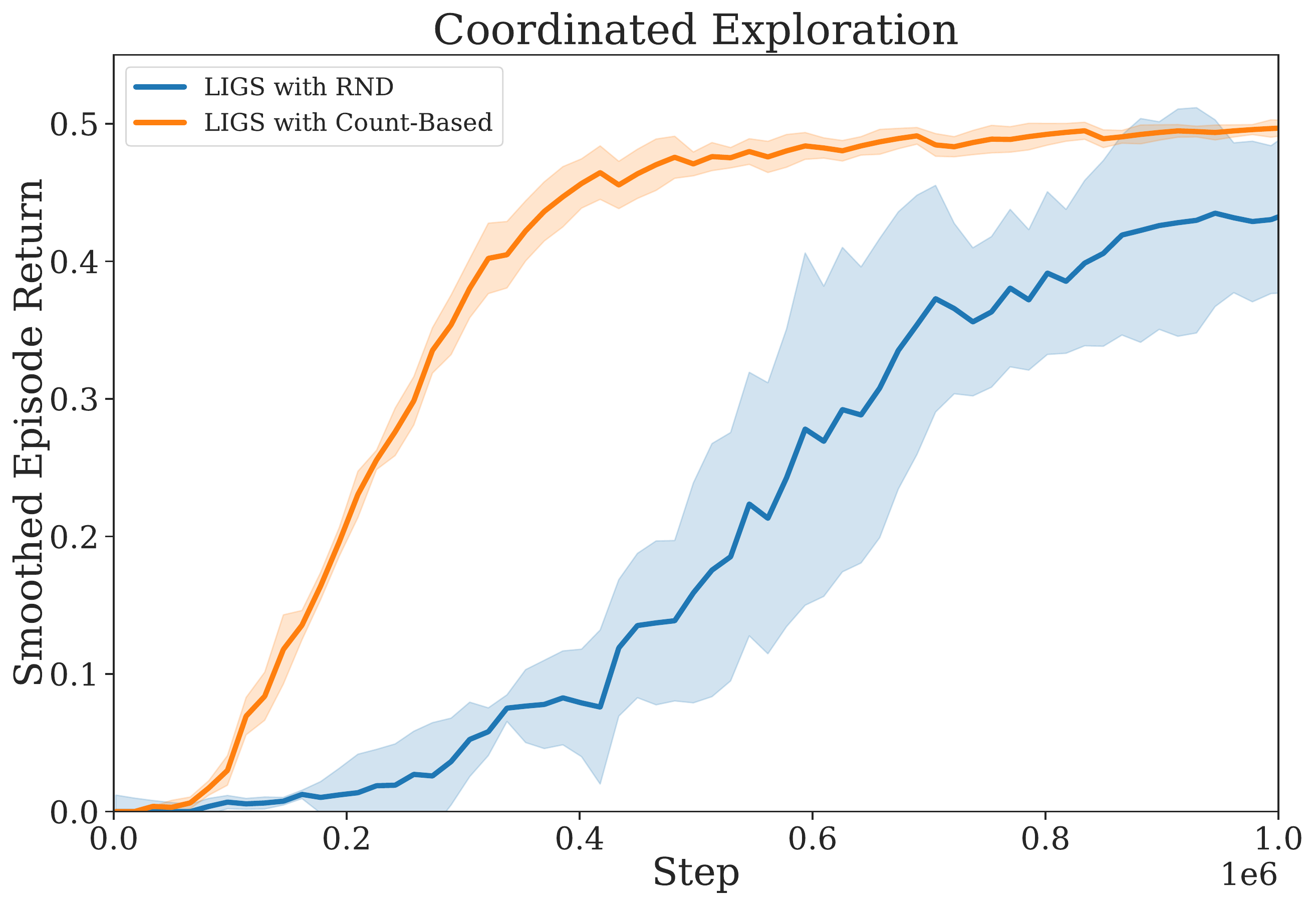}
    \caption{Performance of LIGS compared with  the exploration bonus replaced by count-based method on the three tasks in the Foraging environment.}
    \label{fig:count_based_comparison}
\end{figure}



\newpage
\section{Further Experiment Demonstrating LIGS improved use of Exploration Bonuses.}
As we have shown above, LIGS can accommodate a variety of exploration bonuses and perform well. Here, we did a experiment to further justify using LIGS against simpler exploration bonus methods. We compared LIGS against and MAPPO with an RND  intrinsic reward in the agents' objectives (MAPPO+RND) and vanilla MAPPO. Fig. \ref{fig:us_vs} shows performance of these two methods on coordination environment shown in Fig. \ref{fig:my_label}. We note that LIGS markedly outperforms both MAPPO+RND and vanilla MAPPO. Due to the added benefit of switching controls and intrinsic reward selection performed by the {\fontfamily{cmss}\selectfont Generator}, we observe that LIGS is able to significantly augment the benefits of applying RND directly to the agents' objectives.

\begin{figure}[h!]
    \centering
\includegraphics[width=0.8\textwidth]{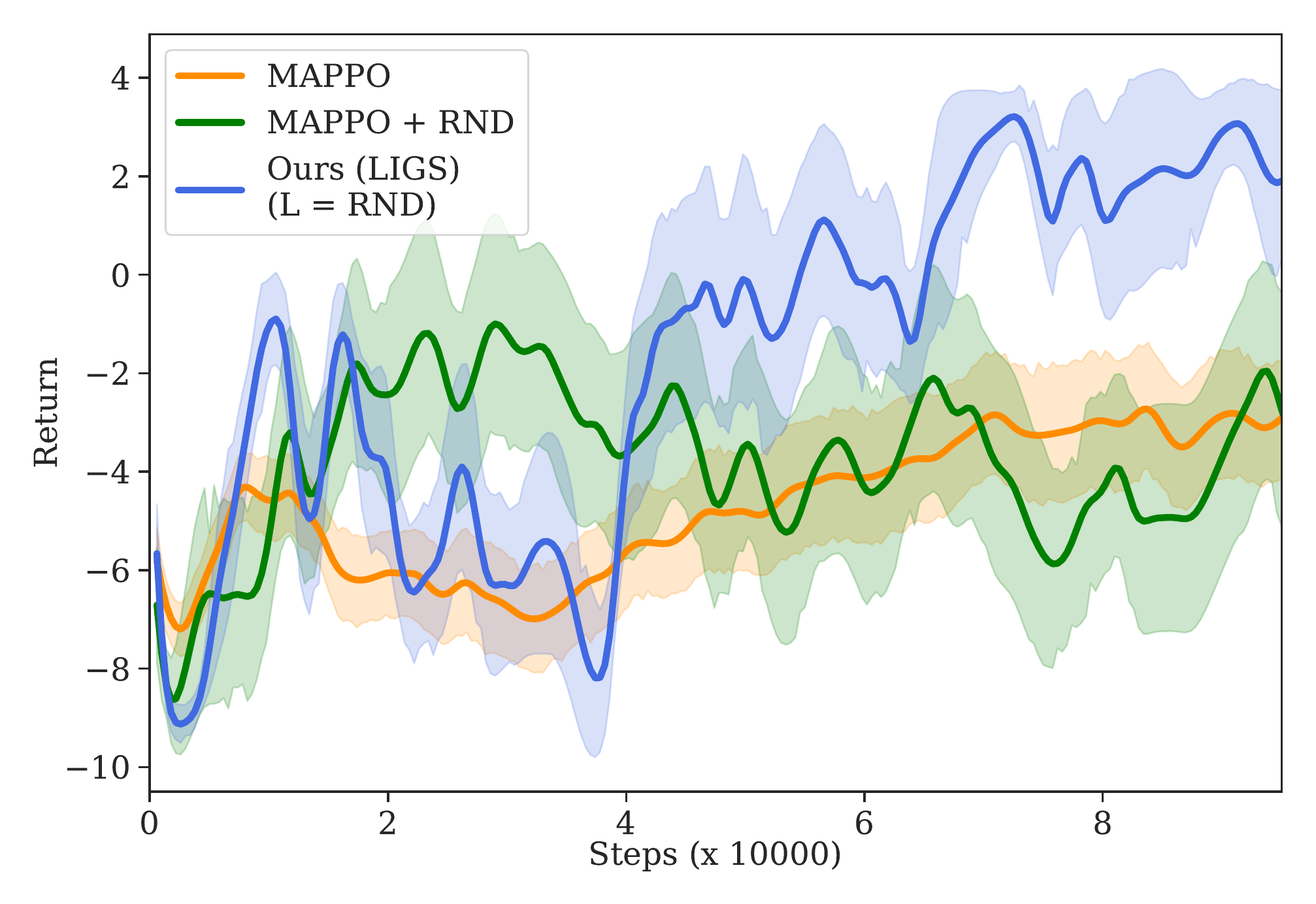}
    \caption{Performance curves for LIGS, MAPPO with RND intrinsic rewards and vanilla MAPPO. The additional machinery of switching-controls and intrinsic reward selection allows LIGS to make better use of exploration bonuses. In this case, LIGS demonstrates significant improvement over MAPPO with RND intrinsic rewards.}
    \label{fig:us_vs}
\end{figure}

\newpage
\section{Further Implementation Details}\label{sec:app_imp_details}


Details of the {\fontfamily{cmss}\selectfont Generator} and $F$ (intrinsic-reward)\\
\begin{tabular}{c|l}
\textbf{Object} & \textbf{Description}\\
\hline
$\Theta$ & Discrete action set which is size of output of $f$,\\
    & i.e., $\Theta$ is set of integers $\{1,...,m\}$ \\
$g$ & Fixed feed forward NN that maps $\mathbb{R}^d \mapsto \mathbb{R}^m$ \\
    & [512, \texttt{ReLU}, 512, \texttt{ReLU}, 512, $m$] \\
$F$ & $\gamma \theta^c_{t+1}$ - $\theta^c_{t},$\;\; $\gamma=0.95$\\
\end{tabular}

$d$=Dimensionality of states; $m\in \mathbb{N}$ - tunable free parameter.

\underline{In all experiments} we used the above form of $F$ as follows: a state $s_t$ is input to the $g$ network and the network outputs logits $p_t$. 
we softmax and sample from $p_t$ to obtain the action $\theta^c_t$. This action is one-hot encoded. 
In this way the policy of the {\fontfamily{cmss}\selectfont Generator} chooses the intrinsic-reward.

\subsection{Hyperparameter Settings}
In the table below we report all hyperparameters used in our experiments. Hyperparameter values in square brackets indicate ranges of values that were used for performance tuning.

\begin{center}
    \begin{tabular}{c|c} 
        \toprule
        Clip Gradient Norm & 1\\
        $\gamma_{E}$ & 0.99\\
        $\lambda$ & 0.95\\
        Learning rate & $1$x$10^{-4}$ \\
        Number of minibatches & 4\\
        Number of optimisation epochs & 4\\
        Number of parallel actors & 16\\
        Optimisation algorithm & ADAM\\
        Rollout length & 128\\
        Sticky action probability & 0.25\\
        Use Generalized Advantage Estimation & True\\
        \midrule
        Coefficient of extrinsic reward & [1, 5]\\
        Coefficient of intrinsic reward & [1, 2, 5, 10, 20, 50]\\
        {\fontfamily{cmss}\selectfont Generator} discount factor & 0.99\\
        Probability of terminating option & [0.5, 0.75, 0.8, 0.9, 0.95]\\
        $L$ function output size & [2, 4, 8, 16, 32, 64, 128, 256]\\
        \bottomrule
    \end{tabular}
\end{center}
\clearpage
\section{Notation \& Assumptions}\label{sec:notation_appendix}

We assume that $\mathcal{S}$ is defined on a probability space $(\Omega,\mathcal{F},\mathbb{P})$ and any $s\in\mathcal{S}$ is measurable with respect
to the Borel $\sigma$-algebra associated with $\mathbb{R}^p$. We denote the $\sigma$-algebra of events generated by $\{s_t\}_{t\geq 0}$
by $\mathcal{F}_t\subset \mathcal{F}$. In what follows, we denote by $\left( \mathcal{V},\|\|\right)$ any finite normed vector space and by $\mathcal{H}$ the set of all measurable functions.  Where it will not cause confusion (and with a minor abuse of notation) for a given function $h$ we use the shorthand $h^{(\pi^{i},\pi^{-i})}(s)= h(s,\pi^i,\pi^{-i})\equiv\mathbb{E}_{\pi^i,\pi^{-i}}[h(s,a^i,a^{-i})]$.

The results of the paper are built under the following assumptions which are standard within RL and stochastic approximation methods:

\textbf{Assumption 1}
The stochastic process governing the system dynamics is ergodic, that is  the process is stationary and every invariant random variable of $\{s_t\}_{t\geq 0}$ is equal to a constant with probability $1$.

\textbf{Assumption 2}
The constituent functions of the agents' objectives $R$, $F$ and $L$ are in $L_2$.

\textbf{Assumption 3}
For any positive scalar $c$, there exists a scalar $\mu_c$ such that for all $s\in\mathcal{S}$ and for any $t\in\mathbb{N}$ we have: $
    \mathbb{E}\left[1+\|s_t\|^c|s_0=s\right]\leq \mu_c(1+\|s\|^c)$.

\textbf{Assumption 4}
There exists scalars $C_1$ and $c_1$ such that for any function $J$ satisfying $|J(s)|\leq C_2(1+\|s\|^{c_2})$ for some scalars $c_2$ and $C_2$ we have that: $
    \sum_{t=0}^\infty\left|\mathbb{E}\left[J(s_t)|s_0=s\right]-\mathbb{E}[J(s_0)]\right|\leq C_1C_2(1+\|s_t\|^{c_1c_2})$.

\textbf{Assumption 5}
There exists scalars $c$ and $C$ such that for any $s\in\mathcal{S}$ we have that: $
    |J(s,\cdot)|\leq C(1+\|s\|^c)$ for $J\in \{R,F,L\}$.

We also make the following finiteness assumption on set of switching control policies for the {\fontfamily{cmss}\selectfont Generator}:

\textbf{Assumption 6}
For any policy $\mathfrak{g}_c$, the total number of interventions is $K<\infty$.

We lastly make the following assumption on $L$ which can be made true by construction:

\textbf{Assumption 7}
Let $n(s)$ be the state visitation count for a given state $s\in\mathcal{S}$. For any $\boldsymbol{a}\in\boldsymbol{\mathcal{A}}$, the function $L(s,\boldsymbol{a})= 0$ for any $n(s)\geq M$ where $0<M\leq \infty$.

\clearpage
\section{Proof of Technical Results}\label{sec:proofs_appendix}

We begin the analysis with some preliminary lemmata and definitions which are useful for proving the main results.

Given a $V^{\boldsymbol{\pi},g}:\mathcal{S}\times\mathbb{N}\to\mathbb{R},\;\forall\boldsymbol{\pi}\in\boldsymbol{\Pi}$ and $g$, $\forall s_{\tau_k}\in\mathcal{S}$, we define the {\fontfamily{cmss}\selectfont Generator} intervention operator $\mathcal{M}^{\boldsymbol{\pi},g}V^{\boldsymbol{\pi},g}$ by 
\begin{align}
\mathcal{M}^{\boldsymbol{\pi},g}V^{\boldsymbol{\pi},g}(s_{\tau_k},I_{\tau_k}):=R(s_{\tau_k},\boldsymbol{a}_{\tau_k})+F^{(\theta_{\tau_k},\theta_{\tau_{k-1}})}-\delta^{\tau_k}_{\tau_{k}}+\gamma\sum_{s'\in\mathcal{S}}P(s';\boldsymbol{a}_{\tau_k},s)V^{\boldsymbol{\pi},g}(s',I(\tau_{k+1})), \label{intervention_op}
\end{align} 
where $\boldsymbol{a}_{\tau_k}\sim \boldsymbol{\pi}(\cdot|s_{\tau_k})$, $\theta_{\tau_k}\sim  g(\cdot|s_{\tau_k})$ and $\tau_k$ is a  {\fontfamily{cmss}\selectfont Generator} switching time.
We define the Bellman operator $T$ of $\mathcal{G}$ by 
\begin{align}
T V^{\boldsymbol{\pi},g}(s_t,I_t):=\max\Big\{\mathcal{M}^{\boldsymbol{\pi},g}V^{\boldsymbol{\pi},g}(s_t,I_t),  R(s_t,\boldsymbol{a}_t)+\gamma\underset{\boldsymbol{a}\in\boldsymbol{\mathcal{A}}}{\max}\;\sum_{s'\in\mathcal{S}}P(s';\boldsymbol{a},s_t)V^{\boldsymbol{\pi},g}(s',I_t)\Big\} \label{bellman_op}.
\end{align}

\begin{definition}{A.1}
An operator $T: \mathcal{V}\to \mathcal{V}$ is said to be a \textbf{contraction} w.r.t a norm $\|\cdot\|$ if there exists a constant $c\in[0,1[$ such that for any $V_1,V_2\in  \mathcal{V}$ we have that:
\begin{align}
    \|TV_1-TV_2\|\leq c\|V_1-V_2\|.
\end{align}
\end{definition}

\begin{definition}{A.2}
An operator $T: \mathcal{V}\to  \mathcal{V}$ is \textbf{non-expansive} if $\forall V_1,V_2\in  \mathcal{V}$ we have:
\begin{align}
    \|TV_1-TV_2\|\leq \|V_1-V_2\|.
\end{align}
\end{definition}

\begin{lemma} \label{max_lemma}
For any
$f: \mathcal{V}\to\mathbb{R},g: \mathcal{V}\to\mathbb{R}$, we have that:
\begin{align}
\left\|\underset{a\in \mathcal{V}}{\max}\:f(a)-\underset{a\in \mathcal{V}}{\max}\: g(a)\right\| \leq \underset{a\in \mathcal{V}}{\max}\: \left\|f(a)-g(a)\right\|.    \label{lemma_1_basic_max_ineq}
\end{align}
\end{lemma}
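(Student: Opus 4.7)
The plan is a short, direct argument based on the standard trick of evaluating the larger $\max$ at the argmax of the smaller one, together with a symmetry step so the norm (which here is the absolute value on $\mathbb{R}$) can be peeled off at the end. The statement is purely elementary and does not invoke any earlier machinery; the main content is simply to be careful about attaining suprema and about the two-sided bound.

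First, I would observe that since $f, g : \mathcal{V} \to \mathbb{R}$, the quantity $\max_{a \in \mathcal{V}} f(a) - \max_{a \in \mathcal{V}} g(a)$ is a real number and the outer norm $\|\cdot\|$ reduces to $|\cdot|$. I would then assume without loss of generality that $\max_a f(a) \geq \max_a g(a)$, since the other case is symmetric. Let $a^\star \in \arg\max_{a \in \mathcal{V}} f(a)$, which exists by assumption (or, if suprema are not attained, one passes to an $\varepsilon$-maximiser and sends $\varepsilon \to 0$ at the end). Then
\begin{align*}
0 \;\leq\; \max_{a \in \mathcal{V}} f(a) - \max_{a \in \mathcal{V}} g(a) \;=\; f(a^\star) - \max_{a \in \mathcal{V}} g(a) \;\leq\; f(a^\star) - g(a^\star),
\end{align*}
where the last inequality uses $\max_a g(a) \geq g(a^\star)$. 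Taking absolute values and then the supremum over $a$ gives
\begin{align*}
\Bigl|\max_{a \in \mathcal{V}} f(a) - \max_{a \in \mathcal{V}} g(a)\Bigr| \;\leq\; |f(a^\star) - g(a^\star)| \;\leq\; \max_{a \in \mathcal{V}} |f(a) - g(a)|.
\end{align*}
The symmetric case (when $\max g \geq \max f$) is handled by swapping the roles of $f$ and $g$ and choosing $a^\star \in \arg\max_a g(a)$, yielding the same bound.

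The only subtle point, and thus the ``main obstacle'' in any formal write-up, is whether the suprema are actually attained. If $\mathcal{V}$ is compact and $f,g$ are continuous this is immediate; otherwise, for every $\varepsilon > 0$ one picks $a^\star_\varepsilon$ with $f(a^\star_\varepsilon) \geq \max_a f(a) - \varepsilon$, obtains the bound with an extra $\varepsilon$, and lets $\varepsilon \downarrow 0$. Everything else is a one-line application of the inequality $\max_a g(a) \geq g(a^\star)$ combined with the symmetry in $f \leftrightarrow g$, so the proof is essentially a two-line computation and requires none of the RL-specific structure developed earlier in the paper.
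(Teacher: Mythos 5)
Your proof is correct. It follows the same overall strategy as the paper's (establish the one-sided bound $\max_a f(a)-\max_a g(a)\leq \max_a\|f(a)-g(a)\|$, then symmetrise in $f\leftrightarrow g$), but the mechanism for the one-sided bound differs. The paper starts from the pointwise inequality $f(a)\leq \|f(a)-g(a)\|+g(a)$, takes the max of both sides, and uses subadditivity of the max, $\max_a\{\|f(a)-g(a)\|+g(a)\}\leq \max_a\|f(a)-g(a)\|+\max_a g(a)$, before subtracting $\max_a g(a)$. You instead evaluate at a maximiser $a^\star$ of the larger of the two suprema. The practical consequence is that the "main obstacle" you flag --- whether the supremum is attained, and the resulting $\varepsilon$-maximiser workaround --- simply does not arise in the paper's version: subadditivity and monotonicity of $\sup$ hold regardless of attainment, so the paper's route is marginally cleaner in exactly the respect you identify as delicate. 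Both arguments are two lines and neither uses any of the RL-specific structure, as you correctly note.
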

\begin{proof}
We restate the proof given in \cite{mguni2019cutting}:
\begin{align}
f(a)&\leq \left\|f(a)-g(a)\right\|+g(a)\label{max_inequality_proof_start}
\\\implies
\underset{a\in \mathcal{V}}{\max}f(a)&\leq \underset{a\in \mathcal{V}}{\max}\{\left\|f(a)-g(a)\right\|+g(a)\}
\leq \underset{a\in \mathcal{V}}{\max}\left\|f(a)-g(a)\right\|+\underset{a\in \mathcal{V}}{\max}\;g(a). \label{max_inequality}
\end{align}
Deducting $\underset{a\in \mathcal{V}}{\max}\;g(a)$ from both sides of (\ref{max_inequality}) yields:
\begin{align}
    \underset{a\in \mathcal{V}}{\max}f(a)-\underset{a\in \mathcal{V}}{\max}g(a)\leq \underset{a\in \mathcal{V}}{\max}\left\|f(a)-g(a)\right\|.\label{max_inequality_result_last}
\end{align}
After reversing the roles of $f$ and $g$ and redoing steps (\ref{max_inequality_proof_start}) - (\ref{max_inequality}), we deduce the desired result since the RHS of (\ref{max_inequality_result_last}) is unchanged.
\end{proof}

\begin{lemma}{A.4}\label{non_expansive_P}
The probability transition kernel $P$ is non-expansive, that is:
\begin{align}
    \|PV_1-PV_2\|\leq \|V_1-V_2\|.
\end{align}
\end{lemma}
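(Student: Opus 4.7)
The plan is to exploit the fact that $P$ represents a (sub-)stochastic kernel, which means that at each state--action pair $P(\cdot;\boldsymbol{a},s)$ is a probability distribution summing to $1$. This structure makes $PV$ a convex combination (expectation) of the values of $V$ at successor states, and convex combinations never inflate norms. The only subtlety is that the statement does not specify the norm; given Assumption 2, the ambient space in this paper is $L_2$ (with respect to the invariant measure guaranteed by Assumption 1), so I would prove non-expansivity in this weighted $L_2$ norm, noting that the argument also goes through for the sup-norm and any $L_p$ norm.

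First, I would unroll the definition pointwise: for any $s\in\mathcal{S}$ (and, where appropriate, $\boldsymbol{a}\in\boldsymbol{\mathcal{A}}$),
\begin{align*}
\bigl|(PV_1)(s)-(PV_2)(s)\bigr|
= \left|\sum_{s'\in\mathcal{S}} P(s';\boldsymbol{a},s)\bigl(V_1(s')-V_2(s')\bigr)\right|.
\end{align*}
Next, I would apply Jensen's inequality to the convex function $x\mapsto x^2$, using that $P(\cdot;\boldsymbol{a},s)$ is a probability measure, to obtain
\begin{align*}
\bigl((PV_1)(s)-(PV_2)(s)\bigr)^2
\leq \sum_{s'\in\mathcal{S}} P(s';\boldsymbol{a},s)\bigl(V_1(s')-V_2(s')\bigr)^2.
\end{align*}
Integrating both sides against the invariant measure $\mu$ on $\mathcal{S}$ and invoking Assumption 1 (so that $\mu$ is preserved under $P$) gives
\begin{align*}
\|PV_1-PV_2\|^2
\leq \sum_{s\in\mathcal{S}}\mu(s)\sum_{s'\in\mathcal{S}} P(s';\boldsymbol{a},s)\bigl(V_1(s')-V_2(s')\bigr)^2
= \sum_{s'\in\mathcal{S}}\mu(s')\bigl(V_1(s')-V_2(s')\bigr)^2
= \|V_1-V_2\|^2,
\end{align*}
from which the claim follows by taking square roots.

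There is no genuine obstacle here: the step deserving care is the exchange of summation order, which needs ergodicity of $\{s_t\}$ (Assumption 1) to justify $\mu P = \mu$; without invariance one would only obtain a one-step bound in a different weighted norm. If the norm of interest were instead the sup-norm, the proof would simplify even further, since the triangle inequality alone yields $\bigl|(PV_1)(s)-(PV_2)(s)\bigr|\le \sum_{s'}P(s';\boldsymbol{a},s)\|V_1-V_2\|_\infty = \|V_1-V_2\|_\infty$, and taking a supremum over $s$ closes the argument. Either way, the result is a direct consequence of $P$ being a Markov kernel.
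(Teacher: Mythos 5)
Your proof is correct and follows essentially the same route as the paper's: both express $(PJ)(s)$ as a conditional expectation of $J$ at the successor state, apply Jensen's inequality to the square, and then use stationarity of the process (Assumption 1, i.e.\ invariance of the underlying measure, which is what the paper's appeal to Tonelli--Fubini and the iterated law of expectations accomplishes) to collapse the double average back to $\|J\|^2$ with $J=V_1-V_2$. The additional remark on the sup-norm is a harmless aside and does not change the argument.
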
 
\begin{proof}
The result is well-known e.g. \citep{tsitsiklis1999optimal}. We give a proof using the Tonelli-Fubini theorem and the iterated law of expectations, we have that:
\begin{align*}
&\|PJ\|^2=\mathbb{E}\left[(PJ)^2[s_0]\right]=\mathbb{E}\left(\left[\mathbb{E}\left[J[s_1]|s_0\right]\right)^2\right]
\leq \mathbb{E}\left[\mathbb{E}\left[J^2[s_1]|s_0\right]\right] 
= \mathbb{E}\left[J^2[s_1]\right]=\|J\|^2,
\end{align*}
where we have used Jensen's inequality to generate the inequality. This completes the proof.
\end{proof}

\section*{Proof of Prop. \ref{preservation_lemma}}
\begin{proof}
To prove (i) of the proposition it suffices to prove that the term $\sum_{t=0}^T\gamma^{t}F(\theta^c_t,\theta^c_{t-1})I(t)$ converges to $0$ in the limit as $T\to \infty$. As in classic potential-based reward shaping \citep{ng1999policy}, central to this observation is the telescoping sum that emerges by construction of $F$.

First recall $\tilde{v}^{\boldsymbol{\pi},\mathfrak{g}}(s,I_0)$, for any $(s,I_0)\in\mathcal{S}\times\{0,1\}$ is given by:
\begin{align}
&\tilde{v}^{\boldsymbol{\pi},\mathfrak{g}}(s,I_0)=\mathbb{E}_{\boldsymbol{\pi},g}\left[\sum_{t=0}^\infty \gamma^t\left\{R(s_t,\boldsymbol{a}_t)+F(\theta^c_t,\theta^c_{t-1})I_t\right\}\right]
\\&=\mathbb{E}_{\boldsymbol{\pi},g}\left[\sum_{t=0}^\infty \gamma^tR(s_t,\boldsymbol{a}_t)+\sum_{t=0}^\infty \gamma^tF(\theta^c_t,\theta^c_{t-1})I_t\right]
\\&=\mathbb{E}_{\boldsymbol{\pi},g}\left[\sum_{t=0}^\infty \gamma^tR(s_t,\boldsymbol{a}_t)\right]+\mathbb{E}_{\boldsymbol{\pi},g}\left[\sum_{t=0}^\infty \gamma^tF(\theta^c_t,\theta^c_{t-1}))I_t\right].
\end{align}

Hence it suffices to prove that $\mathbb{E}_{\boldsymbol{\pi},g}\left[\sum_{t=0}^\infty \gamma^tF(\theta^c_t,\theta^c_{t-1}))I_t\right]=0$.

Recall there a number of time steps that elapse between $\tau_k$ and $\tau_{k+1}$, now
\begin{align*}
&\sum_{t=0}^\infty\gamma^{t}F(\theta^c_t,\theta^c_{t-1}))I(t)
\\&=\sum_{t=\tau_1+1}^{\tau_2}\gamma^{t}\theta^c_t-\gamma^{t-1}\theta^c_{t-1}+\gamma^{\tau_1}\theta^c_{\tau_1} +\sum_{t=\tau_3+1}^{\tau_4}\gamma^{t}\theta^c_t-\gamma^{t-1}\theta^c_{t-1}+\gamma^{\tau_3}\theta^c_{\tau_3}
\\&\quad+\ldots+ \sum_{t=\tau_{(2k-1)}+1}^{\tau_{2k
}}\gamma^{t}\theta^c_t-\gamma^{t-1}\theta^c_{t-1}+\gamma^{\tau_1}\theta^c_{\tau_{2k+1}}+\ldots+
\\&=\sum_{t=\tau_1}^{\tau_2-1}\gamma^{t+1}\theta^c_{t+1}-\gamma^{t}\theta^c_{t}+\gamma^{\tau_1}\theta^c_{\tau_1}+\sum_{t=\tau_3}^{\tau_4-1}\gamma^{t+1}\theta^c_{t+1}-\gamma^{t}\theta^c_{t}+\gamma^{\tau_3}\theta^c_{\tau_3}
\\&\quad+\ldots+ \sum_{t=\tau_{(2k-1)}}^{\tau_{2K-1}}\gamma^{t}\theta^c_t-\gamma^{t-1}\theta^c_{t-1}+\gamma^{\tau_{2k-1}}\theta^c_{\tau_{2k-1}}+\ldots+
\\&=\sum_{k=1}^\infty\sum_{t=\tau_{2k-1}}^{\tau_{2K-1}}\gamma^{t+1}\theta^c_{t+1}-\gamma^{t}\theta^c_{t}-\sum_{k=1}^\infty\gamma^{\tau_{2k-1}}\theta^c_{\tau_{2k-1}}
\\&=\sum_{k=1}^\infty\gamma^{\tau_{2k}}\theta^c_{\tau_{2k}}-\sum_{k=1}^\infty\gamma^{\tau_{2k-1}}\theta^c_{\tau_{2k-1}}
\\&=\sum_{k=1}^\infty\gamma^{\tau_{2k}}0-\sum_{k=1}^\infty\gamma^{\tau_{2k-1}}0=0,
\end{align*}
where we have used the fact that by construction  $\theta^c_t\equiv 0$ whenever $t=\tau_1,\tau_2,\ldots$.


We now note that it is easy to see that $\hat{v}^{\boldsymbol{\pi},\mathfrak{g}}_c(s_0,I_0)$ is bounded above, indeed using the above we have that

\begin{align}
\hat{v}^{\boldsymbol{\pi},\mathfrak{g}}_c(s_0,I_0)&=\mathbb{E}_{\boldsymbol{\pi},g}\left[ \sum_{t=0}^\infty \gamma^t\left(\hat{R} -\sum_{k\geq 1} \delta^t_{\tau_{2k-1}}
+L_n(s_t)\right)\right]
\\&=\mathbb{E}_{\boldsymbol{\pi},g}\left[ \sum_{t=0}^\infty \gamma^t\left({R} -\sum_{k\geq 1} \delta^t_{\tau_{2k-1}}
+L_n(s_t)\right)+\sum_{t=0}^\infty \gamma^tFI_t\right]
\\&\leq \mathbb{E}_{\boldsymbol{\pi},g}\left[ \sum_{t=0}^\infty \gamma^t\left({R} +L_n(s_t)\right)\right]
\\&\leq \left|\mathbb{E}_{\boldsymbol{\pi},g}\left[ \sum_{t=0}^\infty \gamma^t\left({R}+L_n(s_t)\right)\right]\right|
\\&\leq \mathbb{E}_{\boldsymbol{\pi},g}\left[ \sum_{t=0}^\infty \gamma^t\left\|{R} +
L_n\right\|\right]
\\&\leq  \sum_{t=0}^\infty \gamma^t\left(\left\|{R}\right\| +\left\|
L_n\right\|\right)
\\&=\frac{1}{1-\gamma}\left(\left\|{R}\right\| +\left\|
L\right\|\right),
\end{align}
using the triangle inequality, the definition of $\hat{R}$ and the (upper-)boundedness of $L$ and $R$ (Assumption 5).
We now note that by the dominated convergence theorem we have that $\forall (s_0,I_0)\in\mathcal{S}\times\{0,1\}$
\begin{align}
&\underset{n\to\infty}{\lim}\; \hat{v}^{\boldsymbol{\pi},\mathfrak{g}}_c(s_0,I_0)  = \underset{n\to\infty}{\lim}\; \mathbb{E}_{\boldsymbol{\pi},g}\left[ \sum_{t=0}^\infty \gamma^t\left(\hat{R} -\sum_{k\geq 1} \delta^t_{\tau_{2k-1}}
+L_n(s_t)\right)\right]
\\&=\mathbb{E}_{\boldsymbol{\pi},g}\underset{n\to\infty}{\lim}\;\left[ \sum_{t=0}^\infty \gamma^t\left(\hat{R} -\sum_{k\geq 1} \delta^t_{\tau_{2k-1}}
+L_n(s_t)\right)\right]
\\&=\mathbb{E}_{\boldsymbol{\pi},g}\left[ \sum_{t=0}^\infty \gamma^t\left(\hat{R} -\sum_{k\geq 1} \delta^t_{\tau_{2k-1}}\right)\right]
\\&=\mathbb{E}_{\boldsymbol{\pi},g}\left[ \sum_{t=0}^\infty \gamma^t\left(R -\sum_{k\geq 1} \delta^t_{\tau_{2k-1}}\right)\right]=-\frac{K}{1-\gamma}+v^{\pi}(s_0),
\end{align}
using Assumption 6 in the last step, after which we deduce (i).

To deduce (ii) we simply note that $\hat{v}^{\boldsymbol{\pi},\mathfrak{g}}_c(s_0,I_0)$ and $v^{\pi}(s_0)$ differ by only a constant and hence share the same optimisation.

\end{proof}




\section*{Proof of Theorem \ref{theorem:existence_2}}
\begin{proof}
Theorem \ref{theorem:existence_2} is proved by firstly showing that when the players jointly maximise the same objective there exists a fixed point equilibrium of the game when all players use Markov policies and {\fontfamily{cmss}\selectfont Generator} uses switching control. The proof then proceeds by showing that the MG $\mathcal{G}$ admits a dual representation as an MG in which jointly maximise the same objective which has a stable point that can be computed by solving an MDP. Thereafter, we use both results to prove the existence of a fixed point for the game as a limit point of a sequence generated by successively applying the Bellman operator to a test function.  

Therefore, the scheme of the proof is summarised with the following steps:
\begin{itemize}
    \item[\textbf{I)}] Prove that the solution to Markov Team games (that is games in which both players maximise \textit{identical objectives}) in which one of the players uses switching control is the limit point of a sequence of Bellman operators (acting on some test function).
    \item[\textbf{II)}] Prove that for the MG $\mathcal{G}$ that is there exists a function $B^{\boldsymbol{\pi},\mathfrak{g}}:\mathcal{S}\times \{0,1\}\to \mathbb{R}$ such that\footnote{This property is analogous to  the condition in Markov potential games \citep{macua2018learning,mguni2021learning}} $
 v^{\boldsymbol{\pi},\mathfrak{g}}(z)-v^{\boldsymbol{\pi'},\mathfrak{g}}(z)
=B^{\boldsymbol{\pi},\mathfrak{g}}(z)-B^{\boldsymbol{\pi'},\mathfrak{g}}(z),\;\;\forall z\equiv (s,I_0)\in\mathcal{S}\times \{0,1\},\; \forall \mathfrak{g}$,
and
$
 \hat{v}^{\boldsymbol{\pi},\mathfrak{g}}_c(z)-\hat{v}^{\boldsymbol{\pi},\mathfrak{g}'}_c(z)
=B^{\boldsymbol{\pi},\mathfrak{g}}(z)-B^{\boldsymbol{\pi'},\mathfrak{g}}(z),\;\;\forall z\equiv (s,I_0)\in\mathcal{S}\times \{0,1\}, \forall \boldsymbol{\pi}\in\boldsymbol{\Pi}$,
    \item[\textbf{III)}] Prove that the MG $\mathcal{G}$ has a dual representation as a \textit{Markov Team Game} which admits a representation as an MDP.
\end{itemize}

\subsection*{Proof of Part \textbf{I}}

Our first result proves that the operator  $T$ is a contraction operator. First let us recall that the \textit{switching time} $\tau_k$ is defined recursively $\tau_k=\inf\{t>\tau_{k-1}|s_t\in A,\tau_k\in\mathcal{F}_t\}$ where $A=\{s\in \mathcal{S},m\in M|\mathfrak{g}_c(m|s_t)>0\}$.
To this end, we show that the following bounds holds:
\begin{lemma}\label{lemma:bellman_contraction}
The Bellman operator $T$ is a contraction, that is the following bound holds:
\begin{align*}
&\left\|T\psi-T\psi'\right\|\leq \gamma\left\|\psi-\psi'\right\|.
\end{align*}
\end{lemma}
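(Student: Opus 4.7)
The plan is to reduce the bound to a standard $\gamma$-contraction argument by peeling off the two nested maxima in the definition of $T$ via Lemma~\ref{max_lemma}, and then exploiting the fact that the pieces inside each branch depend on $\psi$ only through a linear application of the transition kernel $P$, which is non-expansive by Lemma~\ref{non_expansive_P}. Concretely, I would start from
\begin{align*}
\bigl|T\psi(s,I)-T\psi'(s,I)\bigr|
=\Bigl|\max\{\mathcal{M}^{\boldsymbol{\pi},g}\psi,\; \mathcal{R}\psi\}-\max\{\mathcal{M}^{\boldsymbol{\pi},g}\psi',\; \mathcal{R}\psi'\}\Bigr|
\end{align*}
where $\mathcal{R}\psi(s,I):=R(s,\boldsymbol{a})+\gamma\max_{\boldsymbol{a}\in\boldsymbol{\mathcal{A}}}\sum_{s'}P(s';\boldsymbol{a},s)\psi(s',I)$ is the non-intervention branch, and then apply Lemma~\ref{max_lemma} to upper bound the right-hand side by $\max\bigl\{|\mathcal{M}^{\boldsymbol{\pi},g}\psi-\mathcal{M}^{\boldsymbol{\pi},g}\psi'|,\;|\mathcal{R}\psi-\mathcal{R}\psi'|\bigr\}$.

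Next I would handle each branch separately. In the intervention branch, the terms $R(s_{\tau_k},\boldsymbol{a}_{\tau_k})$, $F^{(\theta_{\tau_k},\theta_{\tau_{k-1}})}$ and $-\delta^{\tau_k}_{\tau_k}$ in \eqref{intervention_op} do not depend on $\psi$, so they cancel in the difference $\mathcal{M}^{\boldsymbol{\pi},g}\psi-\mathcal{M}^{\boldsymbol{\pi},g}\psi'$, leaving only $\gamma\sum_{s'}P(s';\boldsymbol{a}_{\tau_k},s)\bigl[\psi(s',I(\tau_{k+1}))-\psi'(s',I(\tau_{k+1}))\bigr]$, which by Lemma~\ref{non_expansive_P} is bounded in norm by $\gamma\|\psi-\psi'\|$. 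In the non-intervention branch, the $R$ term again cancels, and a second application of Lemma~\ref{max_lemma} to the inner $\max_{\boldsymbol{a}}$ together with Lemma~\ref{non_expansive_P} gives $\|\mathcal{R}\psi-\mathcal{R}\psi'\|\leq \gamma\|\psi-\psi'\|$.

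Combining the two bounds through the outer max, I obtain $\|T\psi-T\psi'\|\leq \gamma\|\psi-\psi'\|$, as claimed. The main subtlety, rather than the main difficulty, will be bookkeeping: ensuring that the $\psi$-independent terms in $\mathcal{M}^{\boldsymbol{\pi},g}$ (the intrinsic-reward term $F$ and the switching cost $\delta^{\tau_k}_{\tau_k}$) really do cancel when $\boldsymbol{\pi}$ and $g$ — and hence the induced switching times $\tau_k$ and draws $\theta_{\tau_k}$ — are held fixed between $\psi$ and $\psi'$, and that the indicator/switch variable $I$ is propagated identically on both sides so that the kernel $P$ acts on the same argument of $\psi-\psi'$. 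Once that is verified, the remainder is essentially the classical discounted Bellman contraction argument adapted to a two-branch $\max$ via Lemma~\ref{max_lemma}.
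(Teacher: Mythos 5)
Your proposal is correct and follows essentially the same route as the paper's proof: both cancel the $\psi$-independent reward and cost terms inside each branch and then invoke Lemma~\ref{max_lemma} together with the non-expansiveness of $P$ (Lemma~\ref{non_expansive_P}) to extract the factor $\gamma$, with the same bookkeeping point about holding $\boldsymbol{\pi}$, $g$, and the switch variable fixed across $\psi$ and $\psi'$. The only difference is organisational: by applying the max-difference inequality once to the outer two-branch maximum you need only the two ``diagonal'' branch estimates (the paper's cases~i and~ii), whereas the paper additionally proves a cross-term bound (its case~iii, itself split into two sub-cases) to cover the possibility that the maximising branch switches between $\psi$ and $\psi'$ --- a step your cleaner decomposition renders unnecessary.
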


\begin{proof}
Recall we define the Bellman operator $T_\psi$ of $\mathcal{G}$ acting on a function $\Lambda:\mathcal{S}\times\mathbb{N}\to\mathbb{R}$ by
\begin{align}
T_\psi \Lambda(s_{\tau_k},I(\tau_k)):=\max\left\{\mathcal{M}^{\boldsymbol{\pi},g}\Lambda(s_{\tau_k},I(\tau_k)),\left[ \psi(s_{\tau_k},\boldsymbol{a})+\gamma\underset{\boldsymbol{a}\in\boldsymbol{\mathcal{A}}}{\max}\;\sum_{s'\in\mathcal{S}}P(s';\boldsymbol{a},s_{\tau_k})\Lambda(s',I(\tau_k))\right]\right\}\label{bellman_proof_start}
\end{align}

In what follows and for the remainder of the script, we employ the following shorthands:
\begin{align*}
&\mathcal{P}^{\boldsymbol{a}}_{ss'}=:\sum_{s'\in\mathcal{S}}P(s';\boldsymbol{a},s), \quad\mathcal{P}^{\boldsymbol{\pi}}_{ss'}=:\sum_{\boldsymbol{a}\in\boldsymbol{\mathcal{A}}}\boldsymbol{\pi}(\boldsymbol{a}|s)\mathcal{P}^{\boldsymbol{a}}_{ss'}, \quad \mathcal{R}^{\boldsymbol{\pi}}(z_{t}):=\sum_{\boldsymbol{a}_t\in\boldsymbol{\mathcal{A}}}\boldsymbol{\pi}(\boldsymbol{a}_t|s)\hat{R}(z_t,\boldsymbol{a}_t,\theta_t,\theta_{t-1})
\end{align*}

To prove that $T$ is a contraction, we consider the three cases produced by \eqref{bellman_proof_start}, that is to say we prove the following statements:

i) $\qquad\qquad
\left| \Theta(z_t,\boldsymbol{a},\theta^c_t,\theta^c_{t-1})+\gamma\underset{\boldsymbol{a}\in\boldsymbol{\mathcal{A}}}{\max}\;\mathcal{P}^{\boldsymbol{a}}_{s's_t}\psi(s',\cdot)-\left( \Theta(z_t,\boldsymbol{a},\theta^c_t,\theta^c_{t-1})+\gamma\underset{\boldsymbol{a}\in\boldsymbol{\mathcal{A}}}{\max}\;\mathcal{P}^{\boldsymbol{a}}_{s's_t}\psi'(s',\cdot)\right)\right|\leq \gamma\left\|\psi-\psi'\right\|$

ii) $\qquad\qquad
\left\|\mathcal{M}^{\boldsymbol{\pi},g}\psi-\mathcal{M}^{\boldsymbol{\pi},g}\psi'\right\|\leq    \gamma\left\|\psi-\psi'\right\|,\qquad \qquad$
  (and hence $\mathcal{M}$ is a contraction).

iii) $\qquad\qquad
    \left\|\mathcal{M}^{\boldsymbol{\pi},g}\psi-\left[ \Theta(\cdot,\boldsymbol{a})+\gamma\underset{\boldsymbol{a}\in\boldsymbol{\mathcal{A}}}{\max}\;\mathcal{P}^{\boldsymbol{a}}\psi'\right]\right\|\leq \gamma\left\|\psi-\psi'\right\|.
$
where $z_t\equiv (s_t,I_t)\in\mathcal{S}\times \{0,1\}$.

We begin by proving i).

Indeed, for any $\boldsymbol{a}\in\boldsymbol{\mathcal{A}}$ and $\forall z_t\in\mathcal{S}\times\{0,1\}, \forall \theta_t,\theta_{t-1}\in \Theta, \forall s'\in\mathcal{S}$ we have that 
\begin{align*}
&\left| \Theta(z_t,\boldsymbol{a},\theta^c_t,\theta^c_{t-1})+\gamma\mathcal{P}^\pi_{s's_t}\psi(s',\cdot)-\left[ \Theta(z_t,\boldsymbol{a},\theta^c_t,\theta^c_{t-1})+\gamma\underset{\boldsymbol{a}\in\boldsymbol{\mathcal{A}}}{\max}\;\;\mathcal{P}^{\boldsymbol{a}}_{s's_t}\psi'(s',\cdot)\right]\right|
\\&\leq \underset{\boldsymbol{a}\in\boldsymbol{\mathcal{A}}}{\max}\;\left|\gamma\mathcal{P}^{\boldsymbol{a}}_{s's_t}\psi(s',\cdot)-\gamma\mathcal{P}^{\boldsymbol{a}}_{s's_t}\psi'(s',\cdot)\right|
\\&\leq \gamma\left\|P\psi-P\psi'\right\|
\\&\leq \gamma\left\|\psi-\psi'\right\|,
\end{align*}
again using the fact that $P$ is non-expansive and Lemma \ref{max_lemma}.

We now prove ii).

For any $\tau\in\mathcal{F}$, define by $\tau'=\inf\{t>\tau|s_t\in A,\tau\in\mathcal{F}_t\}$. Now using the definition of $\mathcal{M}$ we have that for any $s_\tau\in\mathcal{S}$
\begin{align*}
&\left|(\mathcal{M}^{\boldsymbol{\pi},g}\psi-\mathcal{M}^{\boldsymbol{\pi},g}\psi')(s_{\tau},I(\tau))\right|
\\&\leq \underset{\boldsymbol{a}_\tau,\theta^c_\tau,\theta^c_{\tau-1}\in \boldsymbol{\mathcal{A}}\times \Theta^2}{\max}    \Bigg|\Theta(z_\tau,\boldsymbol{a}_\tau,\theta^c_\tau,\theta^c_{\tau-1})-\delta^{\tau}_t+\gamma\mathcal{P}^{\boldsymbol{\pi}}_{s's_\tau}\mathcal{P}^{\boldsymbol{a}}\psi(s_{\tau},I(\tau'))
\\&\qquad\qquad-\left(\Theta(z_\tau,\boldsymbol{a}_\tau,\theta^c_\tau,\theta^c_{\tau-1})-\delta^{\tau}_t+\gamma\mathcal{P}^{\boldsymbol{\pi}}_{s's_\tau}\mathcal{P}^{\boldsymbol{a}}\psi'(s_{\tau},I(\tau'))\right)\Bigg| 
\\&= \gamma\left|\mathcal{P}^{\boldsymbol{\pi}}_{s's_\tau}\mathcal{P}^{\boldsymbol{a}}\psi(s_{\tau},I(\tau'))-\mathcal{P}^{\boldsymbol{\pi}}_{s's_\tau}\mathcal{P}^{\boldsymbol{a}}\psi'(s_{\tau},I(\tau'))\right| 
\\&\leq \gamma\left\|P\psi-P\psi'\right\|
\\&\leq \gamma\left\|\psi-\psi'\right\|,
\end{align*}
using the fact that $P$ is non-expansive. The result can then be deduced easily by applying max on both sides.

We now prove iii). We split the proof of the statement into two cases:

\textbf{Case 1:} 
\begin{align}\mathcal{M}^{\boldsymbol{\pi},g}\psi(s_{\tau},I(\tau))-\left(\Theta(z_\tau,\boldsymbol{a}_\tau,\theta^c_{\tau},\theta^c_{\tau-1})+\gamma\underset{\boldsymbol{a}\in\boldsymbol{\mathcal{A}}}{\max}\;\mathcal{P}^{\boldsymbol{a}}_{s's_\tau}\psi'(s',I(\tau))\right)<0.
\end{align}

We now observe the following:
\begin{align*}
&\mathcal{M}^{\boldsymbol{\pi},g}\psi(s_{\tau},I(\tau))-\Theta(z_\tau,\boldsymbol{a}_\tau,\theta^c_{\tau},\theta^c_{\tau-1})+\gamma\underset{\boldsymbol{a}\in\boldsymbol{\mathcal{A}}}{\max}\;\mathcal{P}^{\boldsymbol{a}}_{s's_\tau}\psi'(s',I(\tau))
\\&\leq\max\left\{\Theta(z_\tau,\boldsymbol{a}_\tau,\theta^c_{\tau},\theta^c_{\tau-1})+\gamma\mathcal{P}^{\boldsymbol{\pi}}_{s's_\tau}\mathcal{P}^{\boldsymbol{a}}\psi(s',I({\tau})),\mathcal{M}^{\boldsymbol{\pi},g}\psi(s_{\tau},I(\tau))\right\}
\\&\qquad-\Theta(z_\tau,\boldsymbol{a}_\tau,\theta^c_{\tau},\theta^c_{\tau-1})+\gamma\underset{\boldsymbol{a}\in\boldsymbol{\mathcal{A}}}{\max}\;\mathcal{P}^{\boldsymbol{a}}_{s's_\tau}\psi'(s',I(\tau))
\\&\leq \Bigg|\max\left\{\Theta(z_\tau,\boldsymbol{a}_\tau,\theta^c_{\tau},\theta^c_{\tau-1})+\gamma\mathcal{P}^{\boldsymbol{\pi}}_{s's_\tau}\mathcal{P}^{\boldsymbol{a}}\psi(s',I({\tau})),\mathcal{M}^{\boldsymbol{\pi},g}\psi(s_{\tau},I(\tau))\right\}
\\&\qquad-\max\left\{\Theta(z_\tau,\boldsymbol{a}_\tau,\theta^c_{\tau},\theta^c_{\tau-1})+\gamma\underset{\boldsymbol{a}\in\boldsymbol{\mathcal{A}}}{\max}\;\mathcal{P}^{\boldsymbol{a}}_{s's_\tau}\psi'(s',I({\tau})),\mathcal{M}^{\boldsymbol{\pi},g}\psi(s_{\tau},I(\tau))\right\}
\\&+\max\left\{\Theta(z_\tau,\boldsymbol{a}_\tau,\theta^c_{\tau},\theta^c_{\tau-1})+\gamma\underset{\boldsymbol{a}\in\boldsymbol{\mathcal{A}}}{\max}\;\mathcal{P}^{\boldsymbol{a}}_{s's_\tau}\psi'(s',I({\tau})),\mathcal{M}^{\boldsymbol{\pi},g}\psi(s_{\tau},I(\tau))\right\}
\\&\qquad-\Theta(z_\tau,\boldsymbol{a}_\tau,\theta^c_{\tau},\theta^c_{\tau-1})+\gamma\underset{\boldsymbol{a}\in\boldsymbol{\mathcal{A}}}{\max}\;\mathcal{P}^{\boldsymbol{a}}_{s's_\tau}\psi'(s',I(\tau))\Bigg|
\\&\leq \Bigg|\max\left\{\Theta(z_\tau,\boldsymbol{a}_\tau,\theta^c_{\tau},\theta^c_{\tau-1})+\gamma\underset{\boldsymbol{a}\in\boldsymbol{\mathcal{A}}}{\max}\;\mathcal{P}^{\boldsymbol{a}}_{s's_\tau}\psi(s',I({\tau})),\mathcal{M}^{\boldsymbol{\pi},g}\psi(s_{\tau},I(\tau))\right\}
\\&\qquad-\max\left\{\Theta(z_\tau,\boldsymbol{a}_\tau,\theta^c_{\tau},\theta^c_{\tau-1})+\gamma\underset{\boldsymbol{a}\in\boldsymbol{\mathcal{A}}}{\max}\;\mathcal{P}^{\boldsymbol{a}}_{s's_\tau}\psi'(s',I({\tau})),\mathcal{M}^{\boldsymbol{\pi},g}\psi(s_{\tau},I(\tau))\right\}\Bigg|
\\&\qquad+\Bigg|\max\left\{\Theta(z_\tau,\boldsymbol{a}_\tau,\theta^c_{\tau},\theta^c_{\tau-1})+\gamma\underset{\boldsymbol{a}\in\boldsymbol{\mathcal{A}}}{\max}\;\mathcal{P}^{\boldsymbol{a}}_{s's_\tau}\psi'(s',I({\tau})),\mathcal{M}^{\boldsymbol{\pi},g}\psi(s_{\tau},I(\tau))\right\}\\&\qquad\qquad-\Theta(z_\tau,\boldsymbol{a}_\tau,\theta^c_{\tau},\theta^c_{\tau-1})+\gamma\underset{\boldsymbol{a}\in\boldsymbol{\mathcal{A}}}{\max}\;\mathcal{P}^{\boldsymbol{a}}_{s's_\tau}\psi'(s',I(\tau))\Bigg|
\\&\leq \gamma\underset{a\in\mathcal{A}}{\max}\;\left|\mathcal{P}^{\boldsymbol{\pi}}_{s's_\tau}\mathcal{P}^{\boldsymbol{a}}\psi(s',I(\tau))-\mathcal{P}^{\boldsymbol{\pi}}_{s's_\tau}\mathcal{P}^{\boldsymbol{a}}\psi'(s',I(\tau))\right|
\\&\qquad+\left|\max\left\{0,\mathcal{M}^{\boldsymbol{\pi},g}\psi(s_{\tau},I(\tau))-\left(\Theta(z_\tau,\boldsymbol{a}_\tau,\theta^c_{\tau},\theta^c_{\tau-1})+\gamma\underset{\boldsymbol{a}\in\boldsymbol{\mathcal{A}}}{\max}\;\mathcal{P}^{\boldsymbol{a}}_{s's_\tau}\psi'(s',I(\tau))\right)\right\}\right|
\\&\leq \gamma\left\|P\psi-P\psi'\right\|
\\&\leq \gamma\|\psi-\psi'\|,
\end{align*}
where we have used the fact that for any scalars $a,b,c$ we have that $
    \left|\max\{a,b\}-\max\{b,c\}\right|\leq \left|a-c\right|$ and the non-expansiveness of $P$.

\textbf{Case 2: }
\begin{align*}\mathcal{M}^{\boldsymbol{\pi},g}\psi(s_{\tau},I(\tau))-\left(\Theta(z_\tau,\boldsymbol{a}_\tau,\theta^c_{\tau},\theta^c_{\tau-1})+\gamma\underset{\boldsymbol{a}\in\boldsymbol{\mathcal{A}}}{\max}\;\mathcal{P}^{\boldsymbol{a}}_{s's_\tau}\psi'(s',I(\tau))\right)\geq 0.
\end{align*}

\begin{align*}
&\mathcal{M}^{\boldsymbol{\pi},g}\psi(s_{\tau},I(\tau))-\left(\Theta(z_\tau,\boldsymbol{a}_\tau,\theta^c_{\tau},\theta^c_{\tau-1})+\gamma\underset{\boldsymbol{a}\in\boldsymbol{\mathcal{A}}}{\max}\;\mathcal{P}^{\boldsymbol{a}}_{s's_\tau}\psi'(s',I(\tau))\right)
\\&\leq \mathcal{M}^{\boldsymbol{\pi},g}\psi(s_{\tau},I(\tau))-\left(\Theta(z_\tau,\boldsymbol{a}_\tau,\theta^c_{\tau},\theta^c_{\tau-1})+\gamma\underset{\boldsymbol{a}\in\boldsymbol{\mathcal{A}}}{\max}\;\mathcal{P}^{\boldsymbol{a}}_{s's_\tau}\psi'(s',I(\tau))\right)+\delta^{\tau}_t
\\&\leq \Theta(z_\tau,\boldsymbol{a}_\tau,\theta^c_{\tau},\theta^c_{\tau-1})-\delta^{\tau}_t+\gamma\mathcal{P}^{\boldsymbol{\pi}}_{s's_\tau}\mathcal{P}^{\boldsymbol{a}}\psi(s',I(\tau'))
\\&\qquad\qquad\qquad\qquad\quad-\left(\Theta(z_\tau,\boldsymbol{a}_\tau,\theta^c_{\tau},\theta^c_{\tau-1})-\delta^{\tau}_t+\gamma\underset{\boldsymbol{a}\in\boldsymbol{\mathcal{A}}}{\max}\;\mathcal{P}^{\boldsymbol{a}}_{s's_\tau}\psi'(s',I(\tau))\right)
\\&\leq \gamma\underset{\boldsymbol{a}\in\boldsymbol{\mathcal{A}}}{\max}\;\left|\mathcal{P}^{\boldsymbol{\pi}}_{s's_\tau}\mathcal{P}^{\boldsymbol{a}}\left(\psi(s',I(\tau'))-\psi'(s',I(\tau))\right)\right|
\\&\leq \gamma\left|\psi(s',I(\tau'))-\psi'(s',I(\tau))\right|
\\&\leq \gamma\left\|\psi-\psi'\right\|,
\end{align*}
again using the fact that $P$ is non-expansive. Hence we have succeeded in showing that for any $\Lambda\in L_2$ we have that
\begin{align}
    \left\|\mathcal{M}^{\boldsymbol{\pi},g}\Lambda-\underset{\boldsymbol{a}\in\boldsymbol{\mathcal{A}}}{\max}\;\left[ \psi(\cdot,a)+\gamma\mathcal{P}^{\boldsymbol{a}}\Lambda'\right]\right\|\leq \gamma\left\|\Lambda-\Lambda'\right\|.\label{off_M_bound_gen}
\end{align}
Gathering the results of the three cases gives the desired result. 
\end{proof}
\subsection*{Proof of Part \textbf{II}}

To prove Part \textbf{II}, we prove the following result:
\begin{proposition}\label{dpg_proposition}
For any ${\pi}\in{\Pi}$ and for any {\fontfamily{cmss}\selectfont Generator} policy $\mathfrak{g}$, there exists a function $B^{\boldsymbol{\pi},\mathfrak{g}}:\mathcal{S}\times \{0,1\}\to \mathbb{R}$ such that
\begin{align}
 v^{\boldsymbol{\pi},\mathfrak{g}}_i-v^{{\boldsymbol{\pi'}},\mathfrak{g}}_i
=B^{\boldsymbol{\pi},\mathfrak{g}}(z)-B^{\boldsymbol{\pi'},\mathfrak{g}}(z),\;\;\forall z\equiv (s,I_0)\in\mathcal{S}\times \{0,1\}\label{potential_relation_proof}
\end{align}
where in particular the function $B$ is given by:
\begin{align}
B^{\boldsymbol{\pi},\mathfrak{g}}(s_0,I_0) =\mathbb{E}_{\boldsymbol{\pi},\mathfrak{g}}\left[ \sum_{t=0}^\infty \gamma^tR \right],\end{align}
for any $(s_0,I_0)\in\mathcal{S}\times \{0,1\}$.
\end{proposition}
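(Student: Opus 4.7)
The plan is to leverage the pathwise telescoping identity already established inside the proof of Proposition \ref{preservation_lemma}. Recall that $F^{(\theta^c_t,\theta^c_{t-1})}=\theta^c_t-\gamma^{-1}\theta^c_{t-1}$, that $F$ is only credited to the agents when the switch is active (i.e.\ multiplied by $I_t$), and that by construction $\theta^c_{\tau_k}\equiv 0$ at every switching instant. The crucial observation is that, for any realization of the trajectory and of $\mathfrak{g}$, the discounted sum of $FI_t$ collapses over each active window $[\tau_{2k-1}+1,\tau_{2k}]$ to the boundary differences $\gamma^{\tau_{2k}}\theta^c_{\tau_{2k}}-\gamma^{\tau_{2k-1}}\theta^c_{\tau_{2k-1}}$, all of which vanish.

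First I would decompose the value function along the additivity of the reward stream:
\begin{align*}
v^{\boldsymbol{\pi},\mathfrak{g}}(s_0,I_0)
&=\mathbb{E}_{\boldsymbol{\pi},\mathfrak{g}}\!\left[\sum_{t=0}^\infty \gamma^t R(s_t,\boldsymbol{a}_t)\right]
+\mathbb{E}_{\boldsymbol{\pi},\mathfrak{g}}\!\left[\sum_{t=0}^\infty \gamma^t F^{(\theta^c_t,\theta^c_{t-1})} I_t\right].
\end{align*}
The first term is precisely $B^{\boldsymbol{\pi},\mathfrak{g}}(s_0,I_0)$. For the second term I would invoke Assumptions 2 and 5 (boundedness of $R$, $F$) together with dominated convergence to interchange expectation and the infinite sum, and then apply the pathwise telescoping argument of Proposition \ref{preservation_lemma} to conclude the second summand is identically zero. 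Note that although the switching instants $\{\tau_k\}$ depend on both $\mathfrak{g}_c$ and the state trajectory induced by $\boldsymbol{\pi}$, the telescoping collapses the sum to boundary terms that vanish for every realization, and hence the argument is policy-agnostic in $\boldsymbol{\pi}$.

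It follows that $v^{\boldsymbol{\pi},\mathfrak{g}}(z)=B^{\boldsymbol{\pi},\mathfrak{g}}(z)$ for every $z=(s_0,I_0)\in\mathcal{S}\times\{0,1\}$ and every $\boldsymbol{\pi}\in\boldsymbol{\Pi}$. Substituting this identity into both sides of \eqref{potential_relation_proof} yields
\begin{align*}
v^{\boldsymbol{\pi},\mathfrak{g}}(z)-v^{\boldsymbol{\pi}',\mathfrak{g}}(z)
=B^{\boldsymbol{\pi},\mathfrak{g}}(z)-B^{\boldsymbol{\pi}',\mathfrak{g}}(z),
\end{align*}
which is precisely the claimed potential relation, and the proposition follows.

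The only real technical subtlety is justifying the pathwise telescoping under an arbitrary Markov joint policy $\boldsymbol{\pi}$, since $\boldsymbol{\pi}$ indirectly drives the random times $\{\tau_k\}$ through the state dynamics. I anticipate this being the most delicate step, but it is resolved by the observation that the telescoping only requires the boundary convention $\theta^c_{\tau_k}\equiv 0$ together with the finite-switching Assumption 6, both of which are independent of $\boldsymbol{\pi}$. Everything else is a direct rearrangement of the identity already exploited for Proposition \ref{preservation_lemma}, so no new contraction or fixed-point machinery is needed.
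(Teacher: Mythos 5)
Your proof is correct and follows essentially the same route as the paper's: both arguments reduce to the telescoping cancellation of the discounted intrinsic-reward stream established in the proof of Proposition \ref{preservation_lemma}, which yields $v^{\boldsymbol{\pi},\mathfrak{g}}(z)=B^{\boldsymbol{\pi},\mathfrak{g}}(z)$ pointwise and hence the potential relation trivially. The only difference is one of emphasis: the paper's write-up routes through the {\fontfamily{cmss}\selectfont Generator}'s value $\hat{v}^{\boldsymbol{\pi},\mathfrak{g}}_c=B^{\boldsymbol{\pi},\mathfrak{g}}-K$ and also records the analogous identity over deviations in $\mathfrak{g}$ (needed later in Prop.~\ref{reduction_prop}), while you argue the agent-side identity directly, which is exactly what the stated proposition requires.
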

\begin{proof}
Note that by the deduction of (ii) in Prop \ref{invariance_prop}, we may consider the following quantity for the {\fontfamily{cmss}\selectfont Generator} expected return:
\begin{align}
    \hat{v}^{\boldsymbol{\pi},\mathfrak{g}}_c(s_0,I_0)=\mathbb{E}_{\boldsymbol{\pi},\mathfrak{g}}\left[ \sum_{t=0}^\infty \gamma^t\left(R -\sum_{k\geq 1} \delta^t_{\tau_{2k-1}}\right)\right].
\end{align}

Therefore, we immediately observe that 
\begin{align}
    \hat{v}^{\boldsymbol{\pi},\mathfrak{g}}_c(s_0,I_0)=B^{\boldsymbol{\pi},\mathfrak{g}}(s_0,I_0)-K, \;\; \forall (s_0,I_0)\in\mathcal{S}\times\{0,1\}.
\end{align}
We therefore immediately deduce that for any two {\fontfamily{cmss}\selectfont Generator} policies $\mathfrak{g}$ and $\mathfrak{g}'$ the following expression holds $\forall (s_0,I_0)\in\mathcal{S}\times\{0,1\}$:
\begin{align}
    \hat{v}^{\boldsymbol{\pi},\mathfrak{g}}_c(s_0,I_0)-\hat{v}^{\boldsymbol{\pi},\mathfrak{g}'}_c(s_0,I_0)=B^{\boldsymbol{\pi},\mathfrak{g}}(s_0,I_0)-B^{{\pi},\mathfrak{g}'}(s_0,I_0).
\end{align}

Our aim now is to show that the following expression holds $\forall (s_0,I_0)\in\mathcal{S}\times\{0,1\}$:
\begin{align}\nonumber
 \hat{v}^{\boldsymbol{\pi},\mathfrak{g}}_c(I_{0},s_{0})-\hat{v}^{\boldsymbol{\pi'},\mathfrak{g}}_c(I_{0},s_{0})=B^{\boldsymbol{\pi},\mathfrak{g}}(I_{0},s_{0})-B^{\boldsymbol{\pi'},\mathfrak{g}}(I_{0},s_{0}),
\end{align}
This is manifest from the construction of $B$.
\end{proof}
\subsection*{Proof of Part \textbf{III}}
To prove Part \textbf{III}, we firstly define precisely the notion of a stable point of the MG, $\mathcal{G}$:
\begin{definition}
A policy profile $\boldsymbol{\sigma^\star}=(g^\star,\pi^\star_i,\pi_{-i}^\star)\in\boldsymbol{\Pi}$ is a Markov perfect equilibrium (MPE) in Markov strategies if the following condition holds for any $i\in\mathcal{N}\times\{0\}$:
\begin{align}\label{MP_NE_condition}
&v_i^{(g^\star,\pi^\star_i,\pi^\star_{-i})}(z)\geq v_i^{g^\star,(\pi'_i,\pi^\star_{-i})}(z), \; \forall z\equiv (s_0,I_0)\in \mathcal{S}\times\{0,1\}, \;\forall \pi_i'\in\Pi_i.
\\&v_c^{(g^\star,\pi^\star_i,\pi^\star_{-i})}(z)\geq v_c^{g',(\pi_i,\pi^\star_{-i})}(z), \; \forall z\equiv (s_0,I_0)\in \mathcal{S}\times\{0,1\}, \;\forall g'.\end{align}
\end{definition}
The condition characterises strategic configurations which are stable points of the MG, $\mathcal{G}$. In particular, an MPE is achieved when at any state no agent can improve their expected cumulative rewards by unilaterally deviating from their current policy. We denote by $NE\{\mathcal{G}\}$ the set of MPE strategies for the MG, $\mathcal{G}$.

Next we prove that the set of maxima of the function $B$ are the MPE of the MG $\mathcal{G}$:

\begin{proposition}\label{reduction_prop}
The following implication holds: 
\begin{align}
\boldsymbol{\sigma}\in \underset{{g',\boldsymbol{\pi'}}\in\boldsymbol{\Pi}}{\arg\sup}\; B^{g',{\boldsymbol{\pi'}}}(s)\implies \boldsymbol{\sigma}\in NE\{\mathcal{G}\}.
\end{align}
where $B$ is the function in Prop. \ref{dpg_proposition}.
\end{proposition}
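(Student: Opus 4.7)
The plan is to exploit the ``potential-game'' structure established in Proposition \ref{dpg_proposition}: because unilateral deviations in the agents' policies change $\hat v_i$ by the same amount as they change $B$, and an analogous identity holds for deviations of the {\fontfamily{cmss}\selectfont Generator}'s policy $\mathfrak{g}$, a joint maximiser of $B$ automatically satisfies the one-sided inequalities defining an MPE. So the argument is a short deduction from the identities proved in Part \textbf{II}, together with the explicit formula $\hat v^{\boldsymbol{\pi},\mathfrak{g}}_c(s_0,I_0) = B^{\boldsymbol{\pi},\mathfrak{g}}(s_0,I_0) - K$ obtained there.

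Fix $\boldsymbol{\sigma}^\star=(g^\star,\boldsymbol{\pi}^\star)\in \arg\sup_{g',\boldsymbol{\pi}'} B^{g',\boldsymbol{\pi}'}(s)$ for every $s\in\mathcal{S}$. I will verify the two conditions of the MPE definition in turn. For any agent $i\in\mathcal{N}$ and any Markov deviation $\pi_i'\in\Pi_i$, apply the identity \eqref{potential_relation_proof} of Proposition \ref{dpg_proposition} with $\boldsymbol{\pi} = (\pi_i^\star,\pi_{-i}^\star)$, $\boldsymbol{\pi}'=(\pi_i',\pi_{-i}^\star)$ and $\mathfrak{g}=g^\star$ to obtain
\begin{align*}
v_i^{(g^\star,\pi_i^\star,\pi_{-i}^\star)}(z) - v_i^{(g^\star,\pi_i',\pi_{-i}^\star)}(z)
= B^{g^\star,(\pi_i^\star,\pi_{-i}^\star)}(z) - B^{g^\star,(\pi_i',\pi_{-i}^\star)}(z) \geq 0,
\end{align*}
for every $z\in\mathcal{S}\times\{0,1\}$, where the inequality is the maximality of $\boldsymbol{\sigma}^\star$ for $B$. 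This establishes the first MPE inequality.

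For the {\fontfamily{cmss}\selectfont Generator}, I will use the constant-offset relation derived in the proof of Part \textbf{II}, namely $\hat v_c^{\boldsymbol{\pi},\mathfrak{g}}(z) = B^{\boldsymbol{\pi},\mathfrak{g}}(z) - K$ for a constant $K$ that does not depend on $\mathfrak{g}$ (this is where Assumption~6 enters, bounding the number of switches uniformly). For any deviation $g'$ this yields
\begin{align*}
\hat v_c^{(g^\star,\boldsymbol{\pi}^\star)}(z) - \hat v_c^{(g',\boldsymbol{\pi}^\star)}(z)
= B^{g^\star,\boldsymbol{\pi}^\star}(z) - B^{g',\boldsymbol{\pi}^\star}(z) \geq 0,
\end{align*}
again by maximality. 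Combining the two inequalities gives $\boldsymbol{\sigma}^\star\in NE\{\mathcal G\}$.

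The step I expect to require the most care is checking that the same maximiser of $B$ simultaneously realises the pointwise supremum at \emph{every} initial state $z\in\mathcal{S}\times\{0,1\}$, so that the MPE inequalities hold at all $z$ and not only at a distinguished starting state. By Assumption~1 the process is ergodic so every state is reachable under any Markov policy with positive probability, and a standard one-shot-deviation argument on the Bellman equation for $B$ (which inherits a contractive Bellman operator from Lemma \ref{lemma:bellman_contraction}) shows that any policy profile maximising $B$ from one state maximises it from all states. Apart from that observation, the proof is a direct consequence of the potential identity and so only requires lining up the two invocations of Proposition \ref{dpg_proposition} as above.
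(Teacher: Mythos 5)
Your proof is correct and takes essentially the same route as the paper's: both arguments reduce the MPE inequalities to the maximality of $B$ via the potential identities of Proposition \ref{dpg_proposition} (and the constant-offset relation for the {\fontfamily{cmss}\selectfont Generator}), the only difference being that the paper phrases the deduction as a contradiction while you state it directly. Your additional remark about the maximiser realising the supremum at every state is a point the paper's proof silently glosses over, but it does not change the substance of the argument.
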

Prop. \ref{reduction_prop} indicates that the game has an equivalent representation in which all agents maximise the same function and thus  play a \textit{team game}.
\begin{proof}
We do the proof by contradiction. Let $\boldsymbol{\sigma}=(\pi_1,\ldots,\pi_N,g)\in \underset{\boldsymbol{\pi'}\in\boldsymbol{\Pi},g'}{\arg\sup}\; B^{\boldsymbol{\pi'},g'}(s)$ for any $s\in\mathcal{S}$. Let us now therefore assume that $\boldsymbol{\sigma}\notin NE\{\mathcal{G}\}$, hence there exists some other policy profile $\boldsymbol{\tilde{\sigma}}=(\pi_1,\ldots,\tilde{\pi}_i,\ldots,\pi_N,g)$ which contains at least one profitable deviation by one of the agents $i\in\mathcal{N}\times\{0,\}$. For now let us consider the case in which the profitable deviation is for a agent $i\in\mathcal{N}$  so that $\pi_i'\neq \pi_i$ for $i\in\mathcal{N}$ i.e. $v^{(\pi'_i,\pi_{-i}),g}_i(s)> v^{(\pi_i,\pi_{-i}),g}_i(s)$ (using the preservation of signs of integration). Prop. \ref{dpg_proposition} however implies that $B^{(\pi'_i,\pi_{-i}),g}(s)-B^{(\pi_i,\pi_{-i}),g}(s)>0$ which is a contradiction since $\boldsymbol{\sigma}=(\pi_i,\pi_{-i},g)$ is a maximum of $B$.  The proof can be straightforwardly adapted to cover the case in which the deviating agent is the  {\fontfamily{cmss}\selectfont Generator} after which we deduce the desired result.
\end{proof}
The last result completes the proof of Theorem \ref{theorem:existence_2}. 
\end{proof}
\section*{Proof of Proposition \ref{prop:switching_times}}
\begin{proof}
The proof is given by establishing a contradiction. Therefore suppose that $\mathcal{M}^{\boldsymbol{\pi},g}\psi(s_{\tau_k},I(\tau_k))\leq \psi(s_{\tau_k},I(\tau_k))$ and suppose that the switching time $\tau'_1>\tau_1$ is an optimal switching time. Construct the {\fontfamily{cmss}\selectfont Generator} $g'$ and $\tilde{g}$ policy switching times by $(\tau'_0,\tau'_1,\ldots,)$ and $g'^2$ policy by $(\tau'_0,\tau_1,\ldots)$ respectively.  Define by $l=\inf\{t>0;\mathcal{M}^{\boldsymbol{\pi},g}\psi(s_{t},I_0)= \psi(s_{t},I_0)\}$ and $m=\sup\{t;t<\tau'_1\}$.
By construction we have that
\begin{align*}
& \quad v^{\boldsymbol{\pi},g'}_c(s,I_0)
\\&=\mathbb{E}\left[R(s_{0},\boldsymbol{a}_{0})+\mathbb{E}\left[\ldots+\gamma^{l-1}\mathbb{E}\left[R(s_{\tau_1-1},\boldsymbol{a}_{\tau_1-1})+\ldots+\gamma^{m-l-1}\mathbb{E}\left[ R(s_{\tau'_1-1},\boldsymbol{a}_{\tau'_1-1})+\gamma\mathcal{M}^{\boldsymbol{\pi},\mathfrak{g}}v^{\boldsymbol{\pi},g'}_c(s',I(\tau'_{1}))\right]\right]\right]\right]
\\&<\mathbb{E}\left[R(s_{0},\boldsymbol{a}_{0})+\mathbb{E}\left[\ldots+\gamma^{l-1}\mathbb{E}\left[ R(s_{\tau_1-1},\boldsymbol{a}_{\tau_1-1})+\gamma\mathcal{M}^{\boldsymbol{\pi},\tilde{g}}v^{\boldsymbol{\pi},g'}_c(s_{\tau_1},I(\tau_1))\right]\right]\right]
\end{align*}
We now use the following observation $\mathbb{E}\left[ R(s_{\tau_1-1},\boldsymbol{a}_{\tau_1-1})+\gamma\mathcal{M}^{\boldsymbol{\pi},\tilde{g}}v^{\boldsymbol{\pi},g'}_c(s_{\tau_1},I(\tau_1))\right]\\\ \text{\hspace{30 mm}}\leq \max\left\{\mathcal{M}^{\boldsymbol{\pi},\tilde{g}}v^{\boldsymbol{\pi},g'}_c(s_{\tau_1},I(\tau_1)),\underset{a_{\tau_1}\in\mathcal{A}}{\max}\;\left[ R(s_{\tau_{k}},\boldsymbol{a}_{\tau_{k}})+\gamma\sum_{s'\in\mathcal{S}}P(s';\boldsymbol{a}_{\tau_1},s_{\tau_1})v^{\boldsymbol{\pi},g}_c(s',I(\tau_1))\right]\right\}$.

Using this we deduce that
\begin{align*}
&v^{\boldsymbol{\pi},g'}_2(s,I_0)\leq\mathbb{E}\Bigg[R(s_{0},\boldsymbol{a}_{0})+\mathbb{E}\Bigg[\ldots
\\&+\gamma^{l-1}\mathbb{E}\left[ R(s_{\tau_1-1},\boldsymbol{a}_{\tau_1-1})+\gamma\max\left\{\mathcal{M}^{\boldsymbol{\pi},\tilde{g}}v^{\boldsymbol{\pi},g'}_c(s_{\tau_1},I(\tau_1)),\underset{a_{\tau_1}\in\mathcal{A}}{\max}\;\left[ R(s_{\tau_{k}},\boldsymbol{a}_{\tau_{k}})+\gamma\sum_{s'\in\mathcal{S}}P(s';\boldsymbol{a}_{\tau_1},s_{\tau_1})v^{\boldsymbol{\pi},g}_c(s',I(\tau_1))\right]\right\}\right]\Bigg]\Bigg]
\\&=\mathbb{E}\left[R(s_{0},\boldsymbol{a}_{0})+\mathbb{E}\left[\ldots+\gamma^{l-1}\mathbb{E}\left[ R(s_{\tau_1-1},\boldsymbol{a}_{\tau_1-1})+\gamma\left[T v^{\boldsymbol{\pi},\tilde{g}}_c\right](s_{\tau_1},I(\tau_1))\right]\right]\right]=v^{\boldsymbol{\pi},\tilde{g}}_c(s,I_0))
\end{align*}
where the first inequality is true by assumption on $\mathcal{M}$. This is a contradiction since $g'$ is an optimal policy for the {\fontfamily{cmss}\selectfont Generator}. Using analogous reasoning, we deduce the same result for $\tau'_k<\tau_k$ after which deduce the result. Moreover, by invoking the same reasoning, we can conclude that it must be the case that $(\tau_0,\tau_1,\ldots,\tau_{k-1},\tau_k,\tau_{k+1},\ldots,)$ are the optimal switching times.

\end{proof}

\section*{Proof of Theorem \ref{NE_improve_prop}}
\begin{proof}
The proof which is done by contradiction follows from the definition of $v_c$.
Denote by $v^{\boldsymbol{\pi},g\equiv \boldsymbol{0}}_i$  value function an agent $i\in\mathcal{N}$ \textit{excluding the {\fontfamily{cmss}\selectfont Generator}} and its intrinsic-reward function. 
Indeed, 
let $(\boldsymbol{\hat{\pi}},\hat{g})$ be the policy profile induced by the Nash equilibrium policy profile and assume that the intrinsic-reward $F$ leads to a decrease in payoff for agent $i$. Then by construction 
$    v^{\boldsymbol{\pi},g}(s)< v^{\boldsymbol{\pi},g\equiv \boldsymbol{0}}(s)
$
which is a contradiction since $(\boldsymbol{\hat{\pi}},\hat{g})$ is an MPE profile.
\end{proof}

\section*{Proof of Theorem  \ref{primal_convergence_theorem}}
To prove the theorem, we make use of the following result:
\begin{theorem}[Theorem 1, pg 4 in \cite{jaakkola1994convergence}]
Let $\Xi_t(s)$ be a random process that takes values in $\mathbb{R}^n$ and given by the following:
\begin{align}
    \Xi_{t+1}(s)=\left(1-\alpha_t(s)\right)\Xi_{t}(s)\alpha_t(s)L_t(s),
\end{align}
then $\Xi_t(s)$ converges to $0$ with probability $1$ under the following conditions:
\begin{itemize}
\item[i)] $0\leq \alpha_t\leq 1, \sum_t\alpha_t=\infty$ and $\sum_t\alpha_t<\infty$
\item[ii)] $\|\mathbb{E}[L_t|\mathcal{F}_t]\|\leq \gamma \|\Xi_t\|$, with $\gamma <1$;
\item[iii)] ${\rm Var}\left[L_t|\mathcal{F}_t\right]\leq c(1+\|\Xi_t\|^2)$ for some $c>0$.
\end{itemize}
\end{theorem}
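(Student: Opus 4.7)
}

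My plan is to split the statement into three tasks and handle them in order: (a) tabular convergence to the stable point of $\mathcal{G}$, (b) existence, uniqueness and convergence to the projected fixed point $r^\star$ under linear function approximation, and (c) the $(1-\gamma^2)^{-1/2}$ approximation bound. Throughout, the workhorse is the contraction property of the Bellman operator $T$ established in Lemma \ref{lemma:bellman_contraction}, together with Theorem 1 of \cite{jaakkola1994convergence} quoted at the end of the excerpt.

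For (a), I would identify the error process $\Xi_t(z):=Q_t(z)-Q^\star(z)$ where $z=(s,I)$ and $Q^\star$ is the stable point action-value function of $\mathcal{G}$ guaranteed by Theorem \ref{theorem:existence_2}. The stochastic update performed by LIGS has the form $Q_{t+1}=(1-\alpha_t)Q_t+\alpha_t L_t$, with $L_t$ a sample Bellman target whose conditional expectation is $TQ_t$. Condition (i) in Theorem 1 is the standard Robbins--Monro step-size assumption. Condition (ii) follows directly from Lemma \ref{lemma:bellman_contraction}: $\|\mathbb{E}[L_t\mid\mathcal{F}_t]\|=\|TQ_t-TQ^\star\|\leq\gamma\|Q_t-Q^\star\|=\gamma\|\Xi_t\|$, using $TQ^\star=Q^\star$. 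Condition (iii) is a variance bound that reduces to boundedness of the one-step reward $\hat R=R+F^{\boldsymbol{\theta}}-\sum_k\delta^t_{\tau_{2k-1}}+L$, which holds by Assumptions 2, 5 and 7. Invoking the theorem gives $\Xi_t\to 0$ a.s., and Corollary \ref{invariance_prop} identifies the limit with the solution of $\mathfrak{M}$.

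For (b), I would work in the Hilbert space $L_2(\mathcal{S}\times\{0,1\})$ weighted by the stationary distribution induced by the ergodicity of Assumption 1. First I would verify that $\mathfrak{F}\Lambda:=\hat R+\gamma P\max\{\mathcal{M}\Lambda,\Lambda\}$ is a $\gamma$-contraction in this norm. This is essentially Lemma \ref{lemma:bellman_contraction} applied pointwise: $\mathcal{M}$ is a contraction (case ii) and the $\max\{\mathcal{M}\Lambda,\Lambda\}$ operation is non-expansive (case iii, via $|\max\{a,b\}-\max\{b,c\}|\le|a-c|$), while $P$ is non-expansive by Lemma A.4. Second, the projection $\Pi$ onto the span of $\Phi$ is an orthogonal projection in the weighted norm, hence non-expansive. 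Composing gives that $\Pi\mathfrak{F}$ is a $\gamma$-contraction on the closed subspace $\{\Phi r\mid r\in\mathbb{R}^p\}$; by Banach's fixed-point theorem there is a unique fixed point $\Phi r^\star$ satisfying $\Pi\mathfrak{F}(\Phi r^\star)=\Phi r^\star$, and the LIGS iterates projected onto this subspace converge to $r^\star$. (Uniqueness of $r^\star$ as a vector uses linear independence of $\Phi$.)

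For (c), the bound follows from the Pythagorean identity applied to the orthogonal projection $\Pi$: writing $Q^\star-\Phi r^\star=(Q^\star-\Pi Q^\star)+(\Pi Q^\star-\Phi r^\star)$, the two summands are orthogonal, so
\begin{align*}
\|Q^\star-\Phi r^\star\|^2=\|Q^\star-\Pi Q^\star\|^2+\|\Pi Q^\star-\Phi r^\star\|^2.
\end{align*}
Since $Q^\star=\mathfrak{F}Q^\star$ and $\Phi r^\star=\Pi\mathfrak{F}(\Phi r^\star)$, I can bound the second summand by $\|\Pi\mathfrak{F}Q^\star-\Pi\mathfrak{F}(\Phi r^\star)\|\le\gamma\|Q^\star-\Phi r^\star\|$ using non-expansiveness of $\Pi$ and contraction of $\mathfrak{F}$. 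Substituting and rearranging yields $(1-\gamma^2)\|Q^\star-\Phi r^\star\|^2\le\|Q^\star-\Pi Q^\star\|^2$, which is the claimed inequality.

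The main obstacle I anticipate is step (b): carefully justifying that $\mathfrak{F}$ is a contraction in the \emph{weighted} $L_2$ norm (rather than the sup norm under which Lemma \ref{lemma:bellman_contraction} was stated). This requires that $P$ is non-expansive in the weighted $L_2$ norm induced by the stationary distribution (a Tsitsiklis--Van Roy style argument, also essentially shown in Lemma A.4 via Jensen's inequality) and that the $\max\{\mathcal{M}\Lambda,\Lambda\}$ operation inherits non-expansiveness pointwise and therefore in the weighted norm. Once this is in place, the composition with $\Pi$ and the fixed-point / projection bound are standard.
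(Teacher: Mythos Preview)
Your parts (a) and (c) match the paper closely. For (a), the paper verifies the three Jaakkola conditions exactly as you outline: condition (ii) via $\|\mathbb{E}[L_t\mid\mathcal{F}_t]\|=\|T_\phi Q_t-T_\phi Q^\star\|\le\gamma\|\Xi_t\|$ from Lemma \ref{lemma:bellman_contraction}, and condition (iii) from boundedness of $Q$. For (c), the paper's Lemma \ref{projection_F_contraction_lemma}(ii) is precisely your Pythagorean argument.

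The gap is in (b). Banach's fixed-point theorem gives existence and uniqueness of $r^\star$ and convergence of the \emph{deterministic} Picard iterates $r\mapsto\Pi\mathfrak{F}(\Phi r)$, but it says nothing about the \emph{stochastic} LIGS updates, which use sampled transitions and rewards rather than the exact operator $\Pi\mathfrak{F}$. The paper does not rely on Banach alone here: it invokes a separate stochastic-approximation result (Theorem 17 of \cite{benveniste2012adaptive}) and verifies its eight hypotheses. The central ingredient beyond contraction of $\Pi\mathfrak{F}$ is a Lyapunov-type stability condition (Lemma \ref{iteratation_property_lemma}): writing the mean update field componentwise as $\boldsymbol{\Xi}_k(r)=\langle\phi_k,\mathfrak{F}\Phi r-\Phi r\rangle$, the paper shows $(r-r^\star)'\boldsymbol{\Xi}(r)<0$ for $r\neq r^\star$ and $\boldsymbol{\Xi}(r^\star)=0$, using the contraction of $\Pi\mathfrak{F}$ inside the inner product. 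Further lemmas establish Lipschitz and polynomial-growth bounds on the sample update $\Xi(w,r)$ needed for the remaining Benveniste conditions. Your proposal would need this machinery, or an equivalent ODE-method argument, to pass from ``$\Pi\mathfrak{F}$ has a unique fixed point'' to ``the stochastic iterates converge to it almost surely''. The obstacle you flag (contraction in the weighted $L_2$ norm) is real and the paper does address it, but it is not the main missing piece.
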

\begin{proof}
To prove the result, we show (i) - (iii) hold. Condition (i) holds by choice of learning rate. It therefore remains to prove (ii) - (iii). We first prove (ii). For this, we consider our variant of the Q-learning update rule:
\begin{align*}
Q_{t+1}(s_t,I_t,\boldsymbol{a}_t)=Q_{t}&(s_t,I_t,\boldsymbol{a}_t)
\\&+\alpha_t(s_t,I_t,\boldsymbol{a}_t)\left[\max\left\{\mathcal{M}^{\boldsymbol{\pi},g}Q(s_{\tau_k},I_{\tau_k},\boldsymbol{a}), \phi(s_{\tau_k},\boldsymbol{a})+\gamma\underset{a'\in\mathcal{A}}{\max}\;Q(s',I_{\tau_k},\boldsymbol{a'})\right\}-Q_{t}(s_t,I_t,\boldsymbol{a}_t)\right].
\end{align*}
After subtracting $Q^\star(s_t,I_t,\boldsymbol{a}_t)$ from both sides and some manipulation we obtain that:
\begin{align*}
&\Xi_{t+1}(s_t,I_t,\boldsymbol{a}_t)
\\&=(1-\alpha_t(s_t,I_t,\boldsymbol{a}_t))\Xi_{t}(s_t,I_t,\boldsymbol{a}_t)
\\&\qquad\qquad\qquad\qquad\;\;+\alpha_t(s_t,I_t,\boldsymbol{a}_t))\left[\max\left\{\mathcal{M}^{\boldsymbol{\pi},g}Q(s_{\tau_k},I_{\tau_k},\boldsymbol{a}), \phi(s_{\tau_k},\boldsymbol{a})+\gamma\underset{a'\in\mathcal{A}}{\max}\;Q(s',I_{\tau_k},\boldsymbol{a'})\right\}-Q^\star(s_t,I_t,\boldsymbol{a}_t)\right],  \end{align*}
where $\Xi_{t}(s_t,I_t,\boldsymbol{a}_t):=Q_t(s_t,I_t,\boldsymbol{a}_t)-Q^\star(s_t,I_t,\boldsymbol{a}_t)$.

Let us now define by 
\begin{align*}
L_t(s_{\tau_k},I_{\tau_k},\boldsymbol{a}):=\max\left\{\mathcal{M}^{\boldsymbol{\pi},g}Q(s_{\tau_k},I_{\tau_k},\boldsymbol{a}), \phi(s_{\tau_k},\boldsymbol{a})+\gamma\underset{a'\in\mathcal{A}}{\max}\;Q(s',I_{\tau_k},\boldsymbol{a'})\right\}-Q^\star(s_t,I_t,a).
\end{align*}
Then
\begin{align}
\Xi_{t+1}(s_t,I_t,\boldsymbol{a}_t)=(1-\alpha_t(s_t,I_t,\boldsymbol{a}_t))\Xi_{t}(s_t,I_t,\boldsymbol{a}_t)+\alpha_t(s_t,I_t,\boldsymbol{a}_t))\left[L_t(s_{\tau_k},a)\right].   
\end{align}

We now observe that
\begin{align}\nonumber
\mathbb{E}\left[L_t(s_{\tau_k},I_{\tau_k},\boldsymbol{a})|\mathcal{F}_t\right]&=\sum_{s'\in\mathcal{S}}P(s';a,s_{\tau_k})\max\left\{\mathcal{M}^{\boldsymbol{\pi},g}Q(s_{\tau_k},I_{\tau_k},\boldsymbol{a}), \phi(s_{\tau_k},\boldsymbol{a})+\gamma\underset{a'\in\mathcal{A}}{\max}\;Q(s',I_{\tau_k},\boldsymbol{a'})\right\}-Q^\star(s_{\tau_k},a)
\\&= T_\phi Q_t(s,I_{\tau_k},\boldsymbol{a})-Q^\star(s,I_{\tau_k},\boldsymbol{a}). \label{expectation_L}
\end{align}
Now, using the fixed point property that implies $Q^\star=T_\phi Q^\star$, we find that
\begin{align}\nonumber
    \mathbb{E}\left[L_t(s_{\tau_k},I_{\tau_k},\boldsymbol{a})|\mathcal{F}_t\right]&=T_\phi Q_t(s,I_{\tau_k},\boldsymbol{a})-T_\phi Q^\star(s,I_{\tau_k},\boldsymbol{a})
    \\&\leq\left\|T_\phi Q_t-T_\phi Q^\star\right\|\nonumber
    \\&\leq \gamma\left\| Q_t- Q^\star\right\|_\infty=\gamma\left\|\Xi_t\right\|_\infty.
\end{align}
using the contraction property of $T$ established in Lemma \ref{lemma:bellman_contraction}. This proves (ii).

We now prove iii), that is
\begin{align}
    {\rm Var}\left[L_t|\mathcal{F}_t\right]\leq c(1+\|\Xi_t\|^2).
\end{align}
Now by \eqref{expectation_L} we have that
\begin{align*}
  {\rm Var}\left[L_t|\mathcal{F}_t\right]&= {\rm Var}\left[\max\left\{\mathcal{M}^{\boldsymbol{\pi},g}Q(s_{\tau_k},I_{\tau_k},\boldsymbol{a}), \phi(s_{\tau_k},\boldsymbol{a})+\gamma\underset{a'\in\mathcal{A}}{\max}\;Q(s',I_{\tau_k},\boldsymbol{a'})\right\}-Q^\star(s_t,I_t,a)\right]
  \\&= \mathbb{E}\Bigg[\Bigg(\max\left\{\mathcal{M}^{\boldsymbol{\pi},g}Q(s_{\tau_k},I_{\tau_k},\boldsymbol{a}), \phi(s_{\tau_k},\boldsymbol{a})+\gamma\underset{a'\in\mathcal{A}}{\max}\;Q(s',I_{\tau_k},\boldsymbol{a'})\right\}
  \\&\qquad\qquad\qquad\qquad\qquad\quad\quad\quad-Q^\star(s_t,I_t,a)-\left(T_\Phi Q_t(s,I_{\tau_k},\boldsymbol{a})-Q^\star(s,I_{\tau_k},\boldsymbol{a})\right)\Bigg)^2\Bigg]
      \\&= \mathbb{E}\left[\left(\max\left\{\mathcal{M}^{\boldsymbol{\pi},g}Q(s_{\tau_k},I_{\tau_k},\boldsymbol{a}), \phi(s_{\tau_k},\boldsymbol{a})+\gamma\underset{a'\in\mathcal{A}}{\max}\;Q(s',I_{\tau_k},\boldsymbol{a'})\right\}-T_\Phi Q_t(s,I_{\tau_k},\boldsymbol{a})\right)^2\right]
    \\&= {\rm Var}\left[\max\left\{\mathcal{M}^{\boldsymbol{\pi},g}Q(s_{\tau_k},I_{\tau_k},\boldsymbol{a}), \phi(s_{\tau_k},\boldsymbol{a})+\gamma\underset{a'\in\mathcal{A}}{\max}\;Q(s',I_{\tau_k},\boldsymbol{a'})\right\}-T_\Phi Q_t(s,I_{\tau_k},\boldsymbol{a}))^2\right]
    \\&\leq c(1+\|\Xi_t\|^2),
\end{align*}
for some $c>0$ where the last line follows due to the boundedness of $Q$ (which follows from Assumptions 2 and 4). This concludes the proof of the Theorem.
\end{proof}
\section*{Proof of Convergence with Function Approximation}
First let us recall the statement of the theorem:
\begin{customthm}{3}
LIGS converges to a limit point $r^\star$ which is the unique solution to the equation:
\begin{align}
\Pi \mathfrak{F} (\Phi r^\star)=\Phi r^\star, \qquad \text{a.e.}
\end{align}
where we recall that for any test function $\Lambda \in \mathcal{V}$, the operator $\mathfrak{F}$ is defined by $
    \mathfrak{F}\Lambda:=\Theta+\gamma P \max\{\mathcal{M}\Lambda,\Lambda\}$.

Moreover, $r^\star$ satisfies the following:
\begin{align}
    \left\|\Phi r^\star - Q^\star\right\|\leq c\left\|\Pi Q^\star-Q^\star\right\|.
\end{align}
\end{customthm}

The theorem is proven using a set of results that we now establish. To this end, we first wish to prove the following bound:    
\begin{lemma}
For any $Q\in\mathcal{V}$ we have that
\begin{align}
    \left\|\mathfrak{F}Q-Q'\right\|\leq \gamma\left\|Q-Q'\right\|,
\end{align}
so that the operator $\mathfrak{F}$ is a contraction.
\end{lemma}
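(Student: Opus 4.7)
The plan is to show the contraction by peeling off the structure of $\mathfrak{F}$ one operator at a time: the deterministic reward cancels, the $\gamma P$ prefactor is peeled off via non-expansiveness of the transition kernel, and finally the $\max$ comparison is reduced to the pair of differences $\mathcal{M}Q-\mathcal{M}Q'$ and $Q-Q'$, each of which is controlled by already-established contraction/non-expansiveness bounds. (I interpret the statement as $\|\mathfrak{F}Q-\mathfrak{F}Q'\|\leq \gamma\|Q-Q'\|$, since the displayed RHS "$Q'$" with no $\mathfrak F$ would make no sense for a contraction claim.)

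First, I would unfold the definition and observe that the $\Theta$ term is independent of the argument, so
\begin{align*}
\mathfrak{F}Q-\mathfrak{F}Q' \;=\; \gamma\, P\bigl(\max\{\mathcal{M}Q,Q\}-\max\{\mathcal{M}Q',Q'\}\bigr).
\end{align*}
Taking norms and applying Lemma~A.4 (non-expansiveness of $P$) yields
\begin{align*}
\|\mathfrak{F}Q-\mathfrak{F}Q'\| \;\leq\; \gamma\,\bigl\|\max\{\mathcal{M}Q,Q\}-\max\{\mathcal{M}Q',Q'\}\bigr\|.
\end{align*}

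Next, I would use the elementary pointwise inequality $|\max\{a,b\}-\max\{c,d\}|\leq \max\{|a-c|,|b-d|\}$ (a two-line case analysis, already implicitly used in the proof of Lemma~\ref{lemma:bellman_contraction} via $|\max\{a,b\}-\max\{b,c\}|\le |a-c|$). This gives
\begin{align*}
\bigl\|\max\{\mathcal{M}Q,Q\}-\max\{\mathcal{M}Q',Q'\}\bigr\|
\;\leq\; \max\bigl\{\|\mathcal{M}Q-\mathcal{M}Q'\|,\;\|Q-Q'\|\bigr\}.
\end{align*}

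Now I would invoke part (ii) of the proof of Lemma~\ref{lemma:bellman_contraction}, which already established that the intervention operator $\mathcal{M}$ is itself a $\gamma$-contraction, i.e.\ $\|\mathcal{M}Q-\mathcal{M}Q'\|\leq \gamma\|Q-Q'\|$. Since $\gamma<1$, the max above is attained by $\|Q-Q'\|$, so combining the three displayed estimates gives $\|\mathfrak{F}Q-\mathfrak{F}Q'\|\leq \gamma\|Q-Q'\|$ as required.

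The only step that requires any real care is the $\max$-of-maxes inequality, because one must keep track of which argument attains each maximum on each side; this is a short case split ($a\geq b$ vs.\ $a<b$, and similarly for $c,d$), and is the reason we get the bound with $\max\{\cdot,\cdot\}$ rather than a sum. Everything else is a direct appeal to results already in hand (Lemma~A.4 and the contraction of $\mathcal{M}$ proved within Lemma~\ref{lemma:bellman_contraction}), so no new estimates on $\Theta$, $F^{\boldsymbol{\theta}}$, or the switching cost are needed.
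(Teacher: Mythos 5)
Your proof is correct and follows essentially the same route as the paper's: cancel the $\Theta$ term, peel off $\gamma P$ by non-expansiveness of the transition kernel, reduce the difference of maxima to differences of the arguments, and close with the contraction of $\mathcal{M}$ established in part (ii) of the proof of Lemma~\ref{lemma:bellman_contraction}. The only (cosmetic) difference is that you use the aligned two-term bound $|\max\{a,b\}-\max\{c,d\}|\leq\max\{|a-c|,|b-d|\}$, whereas the paper bounds the same quantity by a maximum over all four cross-differences and additionally invokes the mixed bound $\|\mathcal{M}Q-\hat{Q}\|\leq\gamma\|Q-\hat{Q}\|$; your version needs slightly less.
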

\begin{proof}
Recall, for any test function $\psi$ , a projection operator $\Pi$ acting $\Lambda$ is defined by the following 
\begin{align*}
\Pi \Lambda:=\underset{\bar{\Lambda}\in\{\Phi r|r\in\mathbb{R}^p\}}{\arg\min}\left\|\bar{\Lambda}-\Lambda\right\|. 
\end{align*}
Now, we first note that in the proof of Lemma \ref{lemma:bellman_contraction}, we deduced that for any $\Lambda\in L_2$ we have that
\begin{align*}
    \left\|\mathcal{M}\Lambda-\left[ \psi(\cdot,a)+\gamma\underset{\boldsymbol{a}\in\boldsymbol{\mathcal{A}}}{\max}\;\mathcal{P}^{\boldsymbol{a}}\Lambda'\right]\right\|\leq \gamma\left\|\Lambda-\Lambda'\right\|,
\end{align*}
(c.f. Lemma \ref{lemma:bellman_contraction}). 

Setting $\Lambda=Q$ and $\psi=\Theta$, it can be straightforwardly deduced that for any $Q,\hat{Q}\in L_2$:
    $\left\|\mathcal{M}Q-\hat{Q}\right\|\leq \gamma\left\|Q-\hat{Q}\right\|$. Hence, using the contraction property of $\mathcal{M}$, we readily deduce the following bound:
\begin{align}\max\left\{\left\|\mathcal{M}Q-\hat{Q}\right\|,\left\|\mathcal{M}Q-\mathcal{M}\hat{Q}\right\|\right\}\leq \gamma\left\|Q-\hat{Q}\right\|,
\label{m_bound_q_twice}
\end{align}
    
We now observe that $\mathfrak{F}$ is a contraction. Indeed, since for any $Q,Q'\in L_2$ we have that:
%
%
%
\begin{align*}
\left\|\mathfrak{F}Q-\mathfrak{F}Q'\right\|&=\left\|\Theta+\gamma P \max\{\mathcal{M}Q,Q\}-\left(\Theta+\gamma P \max\{\mathcal{M}Q',Q'\}\right)\right\|
\\&=\gamma \left\|P \max\{\mathcal{M}Q,Q\}-P \max\{\mathcal{M}Q',Q'\}\right\|
\\&\leq\gamma \left\| \max\{\mathcal{M}Q,Q\}- \max\{\mathcal{M}Q',Q'\}\right\|
\\&\leq\gamma \left\| \max\{\mathcal{M}Q-\mathcal{M}Q',Q-\mathcal{M}Q',\mathcal{M}Q-Q',Q-Q'\}\right\|
\\&\leq\gamma \max\{\left\|\mathcal{M}Q-\mathcal{M}Q'\right\|,\left\|Q-\mathcal{M}Q'\right\|,\left\|\mathcal{M}Q-Q'\right\|,\left\|Q-Q'\right\|\}
\\&=\gamma\left\|Q-Q'\right\|,
\end{align*}
using \eqref{m_bound_q_twice} and again using the non-expansiveness of $P$.
\end{proof}
We next show that the following two bounds hold:
\begin{lemma}\label{projection_F_contraction_lemma}
For any $Q\in\mathcal{V}$ we have that
\begin{itemize}
    \item[i)] 
$\qquad\qquad
    \left\|\Pi \mathfrak{F}Q-\Pi \mathfrak{F}\bar{Q}\right\|\leq \gamma\left\|Q-\bar{Q}\right\|$,
    \item[ii)]$\qquad\qquad\left\|\Phi r^\star - Q^\star\right\|\leq \frac{1}{\sqrt{1-\gamma^2}}\left\|\Pi Q^\star - Q^\star\right\|$. 
\end{itemize}
\end{lemma}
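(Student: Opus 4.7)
The plan is to exploit two standard facts about Hilbert-space projections combined with the contraction property of $\mathfrak{F}$ established in the preceding lemma.

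For part (i), the key observation is that $\Pi$ is an orthogonal projection onto the closed subspace $\{\Phi r \mid r\in\mathbb{R}^p\}\subset L_2$, and hence is non-expansive with respect to the $L_2$ norm: $\|\Pi x - \Pi y\| \leq \|x-y\|$ for all $x,y\in L_2$. Chaining this with the contraction bound from the preceding lemma, which gives $\|\mathfrak{F}Q - \mathfrak{F}\bar Q\|\leq \gamma\|Q-\bar Q\|$, yields $\|\Pi\mathfrak{F}Q - \Pi\mathfrak{F}\bar Q\| \leq \|\mathfrak{F}Q - \mathfrak{F}\bar Q\| \leq \gamma\|Q-\bar Q\|$. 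This is essentially a one-line argument; the only thing to verify is that $\Pi$ is indeed an orthogonal projection in the $L_2$ norm used throughout (which holds by the definition of $\Pi$ as the nearest-point map onto a closed subspace of $L_2$).

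For part (ii), the strategy is the classical Tsitsiklis--Van Roy argument for projected fixed-point equations. I will use that $\Phi r^\star$ is the fixed point of $\Pi\mathfrak{F}$ (so $\Phi r^\star = \Pi\mathfrak{F}(\Phi r^\star)$) and that $Q^\star$ is the fixed point of $\mathfrak{F}$ (so $Q^\star = \mathfrak{F}Q^\star$). Since $\Phi r^\star$ lies in the range of $\Pi$ and $\Pi Q^\star$ is the orthogonal projection of $Q^\star$, the Pythagorean identity gives
\begin{align*}
\|\Phi r^\star - Q^\star\|^2 = \|\Phi r^\star - \Pi Q^\star\|^2 + \|\Pi Q^\star - Q^\star\|^2.
\end{align*}
The first term on the right can be bounded by writing $\Phi r^\star - \Pi Q^\star = \Pi\mathfrak{F}(\Phi r^\star) - \Pi\mathfrak{F}Q^\star$, then applying part (i) to obtain $\|\Phi r^\star - \Pi Q^\star\| \leq \gamma\|\Phi r^\star - Q^\star\|$. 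Substituting back gives $\|\Phi r^\star - Q^\star\|^2 \leq \gamma^2\|\Phi r^\star - Q^\star\|^2 + \|\Pi Q^\star - Q^\star\|^2$, and rearranging produces the claimed $1/\sqrt{1-\gamma^2}$ factor.

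The only subtle point, and the one I would flag as the main obstacle, is justifying the use of the Pythagorean identity in this setting: it requires that $\Pi$ really is the orthogonal projection with respect to the same inner product that induces $\|\cdot\|$, and that the relevant norm be the weighted $L_2$ norm (typically weighted by the stationary distribution under ergodicity, Assumption 1) in which $P$ is non-expansive and $\mathfrak{F}$ is a contraction. Once the underlying Hilbert-space structure is fixed consistently across the projection and the contraction bound, the rest is bookkeeping. I would not expect to need any additional machinery beyond Assumptions 1--2 and the contraction of $\mathfrak{F}$ already established.
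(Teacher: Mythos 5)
Your proof is correct and takes essentially the same route as the paper's: part (i) is the non-expansiveness of the orthogonal projection chained with the contraction of $\mathfrak{F}$, and part (ii) is the standard Tsitsiklis--Van Roy argument combining the Pythagorean identity with the two fixed-point relations $\Phi r^\star = \Pi\mathfrak{F}(\Phi r^\star)$ and $Q^\star = \mathfrak{F}Q^\star$. In fact your Pythagorean decomposition $\left\|\Phi r^\star - Q^\star\right\|^2 = \left\|\Phi r^\star - \Pi Q^\star\right\|^2 + \left\|\Pi Q^\star - Q^\star\right\|^2$ is the corrected form of the paper's display, which mistakenly repeats the first term where the second should appear; your flag about needing $\Pi$ to be orthogonal in the same (stationary-distribution-weighted) $L_2$ inner product in which $P$ is non-expansive is also the right caveat.
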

\begin{proof}
The first result is straightforward since as $\Pi$ is a projection it is non-expansive and hence:
\begin{align*}
    \left\|\Pi \mathfrak{F}Q-\Pi \mathfrak{F}\bar{Q}\right\|\leq \left\| \mathfrak{F}Q-\mathfrak{F}\bar{Q}\right\|\leq \gamma \left\|Q-\bar{Q}\right\|,
\end{align*}
using the contraction property of $\mathfrak{F}$. This proves i). For ii), we note that by the orthogonality property of projections we have that $\left\langle\Phi r^\star - \Pi Q^\star,\Phi r^\star - \Pi Q^\star\right\rangle$, hence we observe that:
\begin{align*}
    \left\|\Phi r^\star - Q^\star\right\|^2&=\left\|\Phi r^\star - \Pi Q^\star\right\|^2+\left\|\Phi r^\star - \Pi Q^\star\right\|^2
\\&=\left\|\Pi \mathfrak{F}\Phi r^\star - \Pi Q^\star\right\|^2+\left\|\Phi r^\star - \Pi Q^\star\right\|^2
\\&\leq\left\|\mathfrak{F}\Phi r^\star -  Q^\star\right\|^2+\left\|\Phi r^\star - \Pi Q^\star\right\|^2
\\&=\left\|\mathfrak{F}\Phi r^\star -  \mathfrak{F}Q^\star\right\|^2+\left\|\Phi r^\star - \Pi Q^\star\right\|^2
\\&\leq\gamma^2\left\|\Phi r^\star -  Q^\star\right\|^2+\left\|\Phi r^\star - \Pi Q^\star\right\|^2,
\end{align*}
after which we readily deduce the desired result.
\end{proof}

\begin{lemma}
Define  the operator $H$ by the following: $
  HQ(z)=  \begin{cases}
			\mathcal{M}Q(z), & \text{if $\mathcal{M}Q(z)>\Phi r^\star,$}\\
            Q(z), & \text{otherwise},
		 \end{cases}$
\\and $\tilde{\mathfrak{F}}$ by: $
    \tilde{\mathfrak{F}}Q:=\Theta +\gamma PHQ$.

For any $Q,\bar{Q}\in L_2$ we have that
\begin{align}
    \left\|\tilde{\mathfrak{F}}Q-\tilde{\mathfrak{F}}\bar{Q}\right\|\leq \gamma \left\|Q-\bar{Q}\right\|
\end{align}
and hence $\tilde{\mathfrak{F}}$ is a contraction mapping.
\end{lemma}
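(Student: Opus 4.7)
The plan is to reduce the claim to a pointwise non-expansiveness statement about $H$, and then dispatch a short case analysis that mirrors the argument in case (iii) of Lemma \ref{lemma:bellman_contraction}. First, I would write
\[
\tilde{\mathfrak{F}}Q - \tilde{\mathfrak{F}}\bar{Q} = \gamma P(HQ - H\bar{Q}),
\]
and invoke the non-expansiveness of $P$ (Lemma \ref{non_expansive_P}) to reduce the target bound to $\|HQ - H\bar{Q}\| \leq \|Q - \bar{Q}\|$; the factor $\gamma$ in front of $P$ then delivers the contraction constant automatically.

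Second, I would fix a state $z$ and split into four cases based on whether $\mathcal{M}Q(z)$ and $\mathcal{M}\bar{Q}(z)$ each lie above or at/below the reference value $\Phi r^\star$. In the two matched cases, the pointwise difference $|HQ(z)-H\bar{Q}(z)|$ reduces either to $|\mathcal{M}Q(z) - \mathcal{M}\bar{Q}(z)|$, which is bounded by $\gamma\|Q - \bar{Q}\|$ by the contraction of $\mathcal{M}$ already proved in Lemma \ref{lemma:bellman_contraction}, or to $|Q(z) - \bar{Q}(z)|$, which is bounded by $\|Q - \bar{Q}\|$ trivially. Both are therefore at most $\|Q - \bar{Q}\|$.

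Third, for the mixed cases, say $\mathcal{M}Q(z) > \Phi r^\star \geq \mathcal{M}\bar{Q}(z)$ so that $HQ(z)-H\bar{Q}(z) = \mathcal{M}Q(z) - \bar{Q}(z)$, I would imitate the sandwich argument used in the third part of Lemma \ref{lemma:bellman_contraction}'s proof. Concretely, insert $\Phi r^\star$ (or, equivalently, $\mathcal{M}\bar{Q}(z)$) as an intermediate and use the elementary identity $|\max\{a,b\}-\max\{b,c\}|\leq |a-c|$ — with $b$ playing the role of the threshold — to rewrite the mixed-case difference as a combination of $|\mathcal{M}Q - \mathcal{M}\bar{Q}|$ and $|Q - \bar{Q}|$, each already controlled by $\|Q-\bar{Q}\|$. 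Taking the sup over $z$ assembles the four cases into $\|HQ - H\bar{Q}\| \leq \|Q - \bar{Q}\|$, after which the initial reduction gives $\|\tilde{\mathfrak{F}}Q - \tilde{\mathfrak{F}}\bar{Q}\| \leq \gamma\|Q-\bar{Q}\|$ as required.

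The main obstacle is the mixed-branch case: one needs to bound a quantity like $|\mathcal{M}Q(z) - \bar{Q}(z)|$, which does not obviously fit into either the $|\mathcal{M}Q - \mathcal{M}\bar{Q}|$ mould or the $|Q - \bar{Q}|$ mould. The key observation that unlocks it is that the threshold $\Phi r^\star$ used in the definition of $H$ is \emph{fixed} (independent of $Q$, $\bar{Q}$), so it can be freely inserted between the two arguments without incurring an additional constant — the non-expansiveness of $H$ is genuinely the "selector" non-expansiveness, and all the $\gamma$ shrinkage is supplied by the outer $P$.
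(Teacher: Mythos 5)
Your overall skeleton matches the paper's: factor out $\Theta$, pull the $\gamma$ off the front, use the non-expansiveness of $P$ (Lemma \ref{non_expansive_P}) to reduce everything to a bound on $\|HQ-H\bar{Q}\|$, and then split according to which branch of $H$ each of $Q,\bar{Q}$ takes. The two matched cases are handled exactly as you describe. The problem is the mixed case, which you correctly identify as the crux but do not actually close. When $\mathcal{M}Q(z)>\Phi r^\star\geq \mathcal{M}\bar{Q}(z)$ the pointwise difference is $\mathcal{M}Q(z)-\bar{Q}(z)$, and the identity $\left|\max\{a,b\}-\max\{b,c\}\right|\leq\left|a-c\right|$ has no purchase here: $H$ is a threshold \emph{selector}, not a max, and its fallback branch returns $Q(z)$ rather than the threshold $\Phi r^\star$, so there is no shared middle argument $b$ to exploit. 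Inserting $\Phi r^\star$ (or $\mathcal{M}\bar{Q}(z)$, or $Q(z)$) as an intermediate via the triangle inequality always leaves behind a term of the form $\left|\mathcal{M}\bar{Q}(z)-\bar{Q}(z)\right|$ or $\left|\mathcal{M}Q(z)-Q(z)\right|$ or $\left|\Phi r^\star-\bar{Q}(z)\right|$, none of which is controlled by $\|Q-\bar{Q}\|$: the intervention operator $\mathcal{M}$ bears no a priori relation to the identity, so $\mathcal{M}\bar{Q}-\bar{Q}$ can be large even when $Q=\bar{Q}$ up to an arbitrarily small perturbation that flips the branch. The fixedness of the threshold prevents the branch decision from depending on $Q$ versus $\bar{Q}$ asymmetrically beyond a boundary layer, but it does not bound the jump incurred when the branch does flip.

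What the paper actually uses to dispatch the cross terms $\left\|\mathcal{M}Q-\bar{Q}\right\|$ and $\left\|\mathcal{M}\bar{Q}-Q\right\|$ is the separate estimate \eqref{m_bound_q_twice}, namely $\max\left\{\left\|\mathcal{M}Q-\hat{Q}\right\|,\left\|\mathcal{M}Q-\mathcal{M}\hat{Q}\right\|\right\}\leq\gamma\left\|Q-\hat{Q}\right\|$, which it extracts from case (iii) of Lemma \ref{lemma:bellman_contraction} (the bound on $\mathcal{M}\psi$ against the greedy Bellman backup of $\psi'$). That cross-term contraction is the essential external ingredient your argument is missing; without it, or something equivalent identifying $\bar{Q}$ with a Bellman backup so that case (iii) applies, the mixed case does not go through. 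To repair your proof you should import \eqref{m_bound_q_twice} explicitly and bound the mixed-case terms by $\gamma\left\|Q-\bar{Q}\right\|$ directly, rather than attempting to reduce them to the matched-case quantities.
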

\begin{proof}
Using \eqref{m_bound_q_twice}, we now observe that
\begin{align*}
    \left\|\tilde{\mathfrak{F}}Q-\tilde{\mathfrak{F}}\bar{Q}\right\|&=\left\|\Theta+\gamma PHQ -\left(\Theta+\gamma PH\bar{Q}\right)\right\|
\\&\leq \gamma\left\|HQ - H\bar{Q}\right\|
\\&\leq \gamma\left\|\max\left\{\mathcal{M}Q-\mathcal{M}\bar{Q},Q-\bar{Q},\mathcal{M}Q-\bar{Q},\mathcal{M}\bar{Q}-Q\right\}\right\|
\\&\leq \gamma\max\left\{\left\|\mathcal{M}Q-\mathcal{M}\bar{Q}\right\|,\left\|Q-\bar{Q}\right\|,\left\|\mathcal{M}Q-\bar{Q}\right\|,\left\|\mathcal{M}\bar{Q}-Q\right\|\right\}
\\&\leq \gamma\max\left\{\gamma\left\|Q-\bar{Q}\right\|,\left\|Q-\bar{Q}\right\|,\left\|\mathcal{M}Q-\bar{Q}\right\|,\left\|\mathcal{M}\bar{Q}-Q\right\|\right\}
\\&=\gamma\left\|Q-\bar{Q}\right\|,
\end{align*}
again using the non-expansive property of $P$.
\end{proof}
\begin{lemma}
Define by $\tilde{Q}:=\Theta+\gamma Pv^{\boldsymbol{\tilde{\pi}}}$ where
\begin{align}
    v^{\boldsymbol{\tilde{\pi}}}(z):= \Theta(s_{\tau_k},a)+\gamma\underset{a\in\mathcal{A}}{\max}\;\sum_{s'\in\mathcal{S}}P(s';a,s_{\tau_k})\Phi r^\star(s',I(\tau_k)), \label{v_tilde_definition}
\end{align}
then $\tilde{Q}$ is a fixed point of $\tilde{\mathfrak{F}}\tilde{Q}$, that is $\tilde{\mathfrak{F}}\tilde{Q}=\tilde{Q}$. 
\end{lemma}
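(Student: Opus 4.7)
The plan is to verify $\tilde{\mathfrak{F}}\tilde{Q}=\tilde{Q}$ by direct substitution followed by a case split on the switching threshold. Expanding both sides gives $\tilde{\mathfrak{F}}\tilde{Q}=\Theta+\gamma P H\tilde{Q}$ and $\tilde{Q}=\Theta+\gamma P v^{\boldsymbol{\tilde{\pi}}}$, so after cancelling the common reward term and the linear factor $\gamma P$, the claim reduces to the (essentially pointwise) identity $H\tilde{Q}(z)=v^{\boldsymbol{\tilde{\pi}}}(z)$ for a.e.\ $z=(s,I)\in\mathcal{S}\times\{0,1\}$. This reduction is the only non-trivial step, so everything hinges on matching $H\tilde{Q}$ and $v^{\boldsymbol{\tilde{\pi}}}$ region by region.

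First I would unpack each side. By its definition \eqref{v_tilde_definition}, $v^{\boldsymbol{\tilde{\pi}}}$ is exactly the agents' one-step greedy Bellman backup of $\Phi r^\star$ with no Generator intervention at the current step, i.e.\ precisely the ``no-switch'' branch that $H$ encodes when its threshold is not triggered. By its definition, $H\tilde{Q}(z)$ selects $\mathcal{M}\tilde{Q}(z)$ whenever $\mathcal{M}\tilde{Q}(z)>\Phi r^\star(z)$ and otherwise returns $\tilde{Q}(z)$. Combining these two readings with the projected fixed-point equation $\Pi\mathfrak{F}(\Phi r^\star)=\Phi r^\star$ from Theorem \ref{primal_convergence_theorem} should let me identify the two sides.

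The case split then proceeds as follows. On the no-intervention region $\{z:\mathcal{M}\tilde{Q}(z)\leq\Phi r^\star(z)\}$ we have $H\tilde{Q}(z)=\tilde{Q}(z)$, and the identity $\tilde{Q}(z)=v^{\boldsymbol{\tilde{\pi}}}(z)$ should drop out by plugging the definition of $\tilde{Q}$ into \eqref{v_tilde_definition} and using $\Pi\mathfrak{F}(\Phi r^\star)=\Phi r^\star$ to collapse the nested $P$-compositions. On the intervention region we instead need $\mathcal{M}\tilde{Q}(z)=v^{\boldsymbol{\tilde{\pi}}}(z)$, which I would obtain by unpacking $\mathcal{M}$ via \eqref{intervention_op} and noting that on exactly this set the ``$\max$'' branch inside $\mathfrak{F}(\Phi r^\star)$ also selects $\mathcal{M}(\Phi r^\star)$, so both sides reduce to the same immediate reward plus $\gamma P\Phi r^\star$ continuation. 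A cleaner shortcut I would try first is to sidestep the case analysis entirely: the previous lemma established that $\tilde{\mathfrak{F}}$ is a $\gamma$-contraction on $L^2$ and hence (by Banach) admits a \emph{unique} fixed point, so it suffices to exhibit \emph{any} fixed point, for which $\tilde{Q}$ --- built directly from the greedy policy with respect to $\Phi r^\star$ --- is the natural candidate.

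The main obstacle I expect is reconciling the pointwise switching threshold inside $H$ (formulated with respect to $\Phi r^\star$) with the projection-only equality $\Pi\mathfrak{F}(\Phi r^\star)=\Phi r^\star$: because the latter holds only after composition with $\Pi$, the pointwise identity $H\tilde{Q}=v^{\boldsymbol{\tilde{\pi}}}$ may only be valid modulo projection onto $\mathrm{span}(\Phi)$. The proof will therefore likely need to either restrict both sides to this span and check equality there, or show that the residual $H\tilde{Q}-v^{\boldsymbol{\tilde{\pi}}}$ lies in $\ker\Pi$, which is enough for the downstream use of this lemma in the function-approximation convergence argument. Carefully tracking the norm and the subspace in which equality holds, and checking that this weaker sense still feeds the conclusion of Theorem \ref{primal_convergence_theorem}, is where I would spend the bulk of the effort.
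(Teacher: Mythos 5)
Your main route --- expanding $\tilde{\mathfrak{F}}\tilde{Q}=\Theta+\gamma PH\tilde{Q}$ and $\tilde{Q}=\Theta+\gamma Pv^{\boldsymbol{\tilde{\pi}}}$, cancelling $\Theta+\gamma P(\cdot)$ so that the claim reduces to $H\tilde{Q}=v^{\boldsymbol{\tilde{\pi}}}$, and then splitting on the switching threshold inside $H$ --- is exactly the paper's argument: its proof is a four-line case computation of $H\tilde{Q}(z)$ that ends with the bare assertion ``$=v^{\boldsymbol{\tilde{\pi}}}(z)$'', followed by the one-line substitution into $\tilde{\mathfrak{F}}$. Two remarks on the parts where you deviate. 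First, the Banach ``shortcut'' is circular: the contraction property of $\tilde{\mathfrak{F}}$ gives \emph{uniqueness} of the fixed point, but to ``exhibit'' $\tilde{Q}$ as a fixed point you must still verify $\tilde{\mathfrak{F}}\tilde{Q}=\tilde{Q}$, which is the statement itself, so no case analysis is avoided. Second, the obstacle you flag at the end is genuine and is not actually resolved by the paper: in the no-intervention branch the paper identifies $\tilde{Q}(z)=\Theta(z)+\gamma Pv^{\boldsymbol{\tilde{\pi}}}$ with $v^{\boldsymbol{\tilde{\pi}}}(z)$, and in the intervention branch it identifies $\mathcal{M}\tilde{Q}(z)$ with $v^{\boldsymbol{\tilde{\pi}}}(z)$; neither identity follows from the definition \eqref{v_tilde_definition} alone without further relating $v^{\boldsymbol{\tilde{\pi}}}$ (equivalently $\tilde{Q}$) to $\Phi r^\star$, and the only available relation, $\Pi\mathfrak{F}(\Phi r^\star)=\Phi r^\star$, holds modulo the projection $\Pi$. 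So your plan to track exactly in which sense (pointwise versus in $\mathrm{span}(\Phi)$, or with the residual in $\ker\Pi$) the identity $H\tilde{Q}=v^{\boldsymbol{\tilde{\pi}}}$ holds, and to confirm that the weaker sense suffices for its downstream use in Theorem \ref{primal_convergence_theorem}, is the right instinct; carried out, it would yield a more careful argument than the one printed in the paper rather than a reproduction of it.
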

\begin{proof}
We begin by observing that
\begin{align*}
H\tilde{Q}(z)&=H\left(\Theta(z)+\gamma Pv^{\boldsymbol{\tilde{\pi}}}\right)    
\\&= \begin{cases}
			\mathcal{M}Q(z), & \text{if $\mathcal{M}Q(z)>\Phi r^\star,$}\\
            Q(z), & \text{otherwise},
		 \end{cases}
\\&= \begin{cases}
			\mathcal{M}Q(z), & \text{if $\mathcal{M}Q(z)>\Phi r^\star,$}\\
            \Theta(z)+\gamma Pv^{\boldsymbol{\tilde{\pi}}}, & \text{otherwise},
		 \end{cases}
\\&=v^{\boldsymbol{\tilde{\pi}}}(z).
\end{align*}
Hence,
\begin{align}
    \tilde{\mathfrak{F}}\tilde{Q}=\Theta+\gamma PH\tilde{Q}=\Theta+\gamma Pv^{\boldsymbol{\tilde{\pi}}}=\tilde{Q}. 
\end{align}
which proves the result.
\end{proof}
\begin{lemma}\label{value_difference_Q_difference}
The following bound holds:
\begin{align}
    \mathbb{E}\left[v^{\boldsymbol{\hat{\pi}}}(z_0)\right]-\mathbb{E}\left[v^{\boldsymbol{\tilde{\pi}}}(z_0)\right]\leq 2\left[(1-\gamma)\sqrt{(1-\gamma^2)}\right]^{-1}\left\|\Pi Q^\star -Q^\star\right\|.
\label{F_tilde_fixed_point}\end{align}
\end{lemma}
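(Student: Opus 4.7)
The plan is to prove the bound by chaining two standard approximate-DP estimates: (a) a policy-suboptimality bound that controls $v^{\boldsymbol{\hat\pi}} - v^{\boldsymbol{\tilde\pi}}$ in terms of the Q-function approximation error $\|Q^\star - \Phi r^\star\|$, and (b) the projection error bound already established in Lemma \ref{projection_F_contraction_lemma}(ii), which controls $\|Q^\star - \Phi r^\star\|$ by $\|\Pi Q^\star - Q^\star\|$. The factor of $2/(1-\gamma)$ will come from (a) and the factor $1/\sqrt{1-\gamma^2}$ from (b); their product yields the stated constant.

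For step (a), I would adapt the classical approximate-value-iteration inequality to our switching-controls setting. Observe that by definition \eqref{v_tilde_definition} the policy $\boldsymbol{\tilde\pi}$ is greedy with respect to $\Phi r^\star$, that is $\boldsymbol{\tilde\pi}(z)\in\arg\max_{\boldsymbol{a}\in\boldsymbol{\mathcal{A}}}[\Theta(z,\boldsymbol{a})+\gamma P\Phi r^\star(z,\boldsymbol{a})]$, while the optimal policy $\boldsymbol{\hat\pi}$ is greedy with respect to $Q^\star$. Writing $T_{\boldsymbol{\pi}}$ for the Bellman evaluation operator under policy $\boldsymbol{\pi}$ and using the contraction of $\mathfrak{F}$ (from Lemma \ref{projection_F_contraction_lemma}) together with the non-expansiveness of $P$ (Lemma \ref{non_expansive_P}), the standard telescoping argument gives
\begin{align*}
v^{\boldsymbol{\hat\pi}}-v^{\boldsymbol{\tilde\pi}} \;\le\; \tfrac{2}{1-\gamma}\,\bigl\|Q^\star-\Phi r^\star\bigr\|.
\end{align*}
More concretely, I would write $v^{\boldsymbol{\hat\pi}}-v^{\boldsymbol{\tilde\pi}} = (v^{\boldsymbol{\hat\pi}}- \Phi r^\star)+(\Phi r^\star-v^{\boldsymbol{\tilde\pi}})$ and bound each piece by $(1-\gamma)^{-1}\|Q^\star-\Phi r^\star\|$, the first using that $\boldsymbol{\hat\pi}$ satisfies $T_{\boldsymbol{\hat\pi}}Q^\star=Q^\star$ and the second using that $\tilde Q=\Theta+\gamma P v^{\boldsymbol{\tilde\pi}}$ is the fixed point of $\tilde{\mathfrak F}$ so that $v^{\boldsymbol{\tilde\pi}}$ is (up to the approximation gap) the value of the greedy policy.

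For step (b), I would simply invoke Lemma \ref{projection_F_contraction_lemma}(ii), namely $\|\Phi r^\star-Q^\star\|\le (1-\gamma^2)^{-1/2}\|\Pi Q^\star-Q^\star\|$. Combining this with the bound from step (a) and taking expectations (which preserves the inequality) yields
\begin{align*}
\mathbb{E}\!\left[v^{\boldsymbol{\hat\pi}}(z_0)\right]-\mathbb{E}\!\left[v^{\boldsymbol{\tilde\pi}}(z_0)\right] \;\le\; \tfrac{2}{1-\gamma}\cdot \tfrac{1}{\sqrt{1-\gamma^2}}\,\bigl\|\Pi Q^\star-Q^\star\bigr\|,
\end{align*}
which is exactly the claim.

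The main obstacle is step (a): in the switching-controls setting the Bellman operator $T$ involves the $\max\{\mathcal{M}\Lambda,\Lambda\}$ term and hence the greedy policy induces both a choice of joint action \emph{and} a choice of whether the {\fontfamily{cmss}\selectfont Generator} intervenes. I must ensure that the telescoping/contraction argument correctly handles the $\mathcal{M}$ term — but this is already delivered by the uniform contraction-style bound \eqref{m_bound_q_twice} and the contraction of $\tilde{\mathfrak{F}}$, which guarantee $\|\tilde Q-Q^\star\|\le (1-\gamma)^{-1}\|Q^\star-\Phi r^\star\|$ and analogously $\|v^{\boldsymbol{\hat\pi}}-\Phi r^\star\|\le (1-\gamma)^{-1}\|Q^\star-\Phi r^\star\|$. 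Once these two one-sided bounds are in place, the triangle inequality closes the argument and the rest is algebraic.
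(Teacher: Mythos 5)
Your proposal is correct and follows the same two-factor strategy as the paper: both proofs obtain the constant $2\left[(1-\gamma)\sqrt{1-\gamma^2}\right]^{-1}$ as the product of a $2(1-\gamma)^{-1}$ fixed-point/suboptimality estimate and the $(1-\gamma^2)^{-1/2}$ projection estimate of Lemma \ref{projection_F_contraction_lemma}(ii), and both rely on the same ingredients (Jensen's inequality with stationarity, non-expansiveness of $P$, the contraction of $\mathfrak{F}$ and $\tilde{\mathfrak{F}}$, and the fixed-point identities $\mathfrak{F}Q^\star=Q^\star$, $\tilde{\mathfrak{F}}\tilde{Q}=\tilde{Q}$). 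The only real difference is the middle decomposition: the paper first reduces the value gap to $\gamma^{-1}\|\tilde{Q}-Q^\star\|$ via the identities $\tilde{Q}=\Theta+\gamma Pv^{\boldsymbol{\tilde{\pi}}}$ and $Q^\star=\Theta+\gamma Pv^{\boldsymbol{\pi^\star}}$, and then controls $\|\tilde{Q}-Q^\star\|$ by a triangle inequality pivoting on the common value $\mathfrak{F}(\Phi r^\star)=\tilde{\mathfrak{F}}(\Phi r^\star)$, whereas you pivot directly on $\Phi r^\star$ in the classical greedy-policy (Singh--Yee) style and target $\|Q^\star-\Phi r^\star\|$. Your variant is arguably the cleaner landing: chaining your step (a) with Lemma \ref{projection_F_contraction_lemma}(ii) produces exactly the stated right-hand side $\|\Pi Q^\star-Q^\star\|$, while the paper's own final display is still written in terms of $\|\tilde{Q}-\Phi r^\star\|$ and leaves that last conversion implicit. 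The one step you should spell out is the claim $\|v^{\boldsymbol{\hat{\pi}}}-\Phi r^\star\|\leq(1-\gamma)^{-1}\|Q^\star-\Phi r^\star\|$: in the switching-control setting the greedy improvement involves $\max\{\mathcal{M}\Lambda,\Lambda\}$ rather than a plain action maximum, so the telescoping must be routed through the uniform bound \eqref{m_bound_q_twice} exactly as you indicate in your closing paragraph; once that is done the argument closes.
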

\begin{proof}

By definitions of $v^{\boldsymbol{\hat{\pi}}}$ and $v^{\boldsymbol{\tilde{\pi}}}$ (c.f \eqref{v_tilde_definition}) and using Jensen's inequality and the stationarity property we have that,
\begin{align}\nonumber
    \mathbb{E}\left[v^{\boldsymbol{\hat{\pi}}}(z_0)\right]-\mathbb{E}\left[v^{\boldsymbol{\tilde{\pi}}}(z_0)\right]&=\mathbb{E}\left[Pv^{\boldsymbol{\hat{\pi}}}(z_0)\right]-\mathbb{E}\left[Pv^{\boldsymbol{\tilde{\pi}}}(z_0)\right]
    \\&\leq \left|\mathbb{E}\left[Pv^{\boldsymbol{\hat{\pi}}}(z_0)\right]-\mathbb{E}\left[Pv^{\boldsymbol{\tilde{\pi}}}(z_0)\right]\right|\nonumber
    \\&\leq \left\|Pv^{\boldsymbol{\hat{\pi}}}-Pv^{\boldsymbol{\tilde{\pi}}}\right\|. \label{v_approx_intermediate_bound_P}
\end{align}
Now recall that $\tilde{Q}:=\Theta+\gamma Pv^{\boldsymbol{\tilde{\pi}}}$ and $Q^\star:=\Theta+\gamma Pv^{\boldsymbol{\pi^\star}}$,  using these expressions in \eqref{v_approx_intermediate_bound_P} we find that 
\begin{align*}
    \mathbb{E}\left[v^{\boldsymbol{\hat{\pi}}}(z_0)\right]-\mathbb{E}\left[v^{\boldsymbol{\tilde{\pi}}}(z_0)\right]&\leq \frac{1}{\gamma}\left\|\tilde{Q}-Q^\star\right\|. \label{v_approx_q_approx_bound}
\end{align*}
Moreover, by the triangle inequality and using the fact that $\mathfrak{F}(\Phi r^\star)=\tilde{\mathfrak{F}}(\Phi r^\star)$ and that $\mathfrak{F}Q^\star=Q^\star$ and $\mathfrak{F}\tilde{Q}=\tilde{Q}$ (c.f. \eqref{F_tilde_fixed_point}) we have that
\begin{align*}
\left\|\tilde{Q}-Q^\star\right\|&\leq \left\|\tilde{Q}-\mathfrak{F}(\Phi r^\star)\right\|+\left\|Q^\star-\tilde{\mathfrak{F}}(\Phi r^\star)\right\|    
\\&\leq \gamma\left\|\tilde{Q}-\Phi r^\star\right\|+\gamma\left\|Q^\star-\Phi r^\star\right\| 
\\&\leq 2\gamma\left\|\tilde{Q}-\Phi r^\star\right\|+\gamma\left\|Q^\star-\tilde{Q}\right\|, 
\end{align*}
which gives the following bound:
\begin{align*}
\left\|\tilde{Q}-Q^\star\right\|&\leq 2\left(1-\gamma\right)^{-1}\left\|\tilde{Q}-\Phi r^\star\right\|, 
\end{align*}
from which, using Lemma \ref{projection_F_contraction_lemma}, we deduce that $
    \left\|\tilde{Q}-Q^\star\right\|\leq 2\left[(1-\gamma)\sqrt{(1-\gamma^2)}\right]^{-1}\left\|\tilde{Q}-\Phi r^\star\right\|$,
after which by \eqref{v_approx_q_approx_bound}, we finally obtain
\begin{align*}
        \mathbb{E}\left[v^{\boldsymbol{\hat{\pi}}}(z_0)\right]-\mathbb{E}\left[v^{\boldsymbol{\tilde{\pi}}}(z_0)\right]\leq  2\left[(1-\gamma)\sqrt{(1-\gamma^2)}\right]^{-1}\left\|\tilde{Q}-\Phi r^\star\right\|,
\end{align*}
as required.
\end{proof}

Let us rewrite the update in the following way:
\begin{align*}
    r_{t+1}=r_t+\gamma_t\Xi(w_t,r_t),
\end{align*}
where the function $\Xi:\mathbb{R}^{2d}\times \mathbb{R}^p\to\mathbb{R}^p$ is given by:
\begin{align*}
\Xi(w,r):=\phi(z)\left(\Theta(z)+\gamma\max\left\{(\Phi r) (z'),\mathcal{M}(\Phi r) (z')\right\}-(\Phi r)(z)\right),
\end{align*}
for any $w\equiv (z,z')\in\left(\mathbb{N}\times\mathcal{S}\right)^2$ where $z=(t,s)\in\mathbb{N}\times\mathcal{S}$ and $z'=(t,s')\in\mathbb{N}\times\mathcal{S}$  and for any $r\in\mathbb{R}^p$. Let us also define the function $\boldsymbol{\Xi}:\mathbb{R}^p\to\mathbb{R}^p$ by the following:
\begin{align*}
    \boldsymbol{\Xi}(r):=\mathbb{E}_{w_0\sim (\mathbb{P},\mathbb{P})}\left[\Xi(w_0,r)\right]; w_0:=(z_0,z_1).
\end{align*}
\begin{lemma}\label{iteratation_property_lemma}
The following statements hold for all $z\in \{0,1\}\times \mathcal{S}$:
\begin{itemize}
    \item[i)] $
(r-r^\star)\boldsymbol{\Xi}_k(r)<0,\qquad \forall r\neq r^\star,    
$
\item[ii)] $
\boldsymbol{\Xi}_k(r^\star)=0$.
\end{itemize}
\end{lemma}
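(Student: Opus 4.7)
The plan is to prove the two statements via the standard stochastic–approximation route used for TD–style fixed points, leaning on the contraction of $\Pi\mathfrak{F}$ (Lemma \ref{projection_F_contraction_lemma}) and the orthogonality of the projection $\Pi$ with respect to the stationary-distribution inner product $\langle f,h\rangle=\mathbb{E}_{z\sim\mathbb{P}}[f(z)h(z)]$ induced by Assumption 1. First I would rewrite $\boldsymbol{\Xi}$ in a more usable form: since $z'$ is sampled from $P(\cdot\mid z)$, integrating the conditional expectation over $z'$ inside $\Xi(w,r)$ turns $\max\{\Phi r,\mathcal{M}\Phi r\}$ into $P\max\{\Phi r,\mathcal{M}\Phi r\}$, so
\begin{align*}
\boldsymbol{\Xi}(r)=\mathbb{E}_{z\sim\mathbb{P}}\bigl[\phi(z)\bigl(\mathfrak{F}(\Phi r)(z)-(\Phi r)(z)\bigr)\bigr].
\end{align*}

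For statement (ii), I would use the defining property of $r^\star$, namely $\Pi\mathfrak{F}(\Phi r^\star)=\Phi r^\star$. Since the range of $\Pi$ is exactly the column span of $\Phi$, the residual $\mathfrak{F}(\Phi r^\star)-\Phi r^\star=(I-\Pi)\mathfrak{F}(\Phi r^\star)$ lies in the orthogonal complement of $\mathrm{span}(\phi_1,\dots,\phi_p)$. Hence for every $k$, $\langle\phi_k,\mathfrak{F}(\Phi r^\star)-\Phi r^\star\rangle=0$, which says precisely that each coordinate of $\boldsymbol{\Xi}(r^\star)$ vanishes.

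For statement (i), I would evaluate the scalar $(r-r^\star)^\top\boldsymbol{\Xi}(r)$ by bringing $r-r^\star$ inside the inner product:
\begin{align*}
(r-r^\star)^\top\boldsymbol{\Xi}(r)=\bigl\langle\Phi(r-r^\star),\mathfrak{F}(\Phi r)-\Phi r\bigr\rangle.
\end{align*}
Because $\Phi(r-r^\star)$ lies in the range of $\Pi$, the same orthogonality argument as in (ii) lets me replace $\mathfrak{F}(\Phi r)$ by $\Pi\mathfrak{F}(\Phi r)$ without changing the value. Splitting $\Pi\mathfrak{F}(\Phi r)-\Phi r=\bigl[\Pi\mathfrak{F}(\Phi r)-\Pi\mathfrak{F}(\Phi r^\star)\bigr]-\bigl[\Phi r-\Phi r^\star\bigr]$ (using $\Pi\mathfrak{F}(\Phi r^\star)=\Phi r^\star$), applying Cauchy–Schwarz on the first term and the contraction bound $\|\Pi\mathfrak{F}(\Phi r)-\Pi\mathfrak{F}(\Phi r^\star)\|\le\gamma\|\Phi(r-r^\star)\|$ from Lemma \ref{projection_F_contraction_lemma}(i), I obtain
\begin{align*}
(r-r^\star)^\top\boldsymbol{\Xi}(r)\le\gamma\|\Phi(r-r^\star)\|^2-\|\Phi(r-r^\star)\|^2=-(1-\gamma)\|\Phi(r-r^\star)\|^2,
\end{align*}
which is strictly negative whenever $r\neq r^\star$, since the basis $\Phi=\{\phi_1,\dots,\phi_p\}$ is linearly independent and therefore $\Phi(r-r^\star)\neq 0$.

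The main obstacle I anticipate is the orthogonality step that lets me swap $\mathfrak{F}$ for $\Pi\mathfrak{F}$ inside the inner product: it requires care about which inner product we are using (it must be the one weighted by the stationary distribution $\mathbb{P}$ so that $\Pi$ is truly the orthogonal projection), and it depends on $\mathfrak{F}(\Phi r)\in L_2$, which follows from Assumption 2 and the boundedness arguments already used in the proof of Theorem \ref{theorem:existence_2}. Once that step is justified, both parts follow cleanly from the contraction of $\Pi\mathfrak{F}$ and linear independence of the basis.
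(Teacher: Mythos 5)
Your proposal is correct and follows essentially the same route as the paper's proof: the inner-product representation $\boldsymbol{\Xi}_k(r)=\langle\phi_k,\mathfrak{F}\Phi r-\Phi r\rangle$, the orthogonality of $(\boldsymbol{1}-\Pi)$ to swap $\mathfrak{F}$ for $\Pi\mathfrak{F}$, and the contraction of $\Pi\mathfrak{F}$ plus Cauchy--Schwarz to obtain the bound $(\gamma-1)\|\Phi r-\Phi r^\star\|^2<0$. Your explicit appeal to linear independence of the basis to guarantee $\Phi(r-r^\star)\neq 0$ is a small point of added rigor that the paper leaves implicit, but it does not constitute a different argument.
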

\begin{proof}
To prove the statement, we first note that each component of $\boldsymbol{\Xi}_k(r)$ admits a representation as an inner product, indeed: 
\begin{align*}
\boldsymbol{\Xi}_k(r)&=\mathbb{E}\left[\phi_k(z_0)(\Theta(z_0)+\gamma\max\left\{\Phi r(z_1),\mathcal{M}\Phi(z_1)\right\}-(\Phi r)(z_0)\right] 
\\&=\mathbb{E}\left[\phi_k(z_0)(\Theta(z_0)+\gamma\mathbb{E}\left[\max\left\{\Phi r(z_1),\mathcal{M}\Phi(z_1)\right\}|z_0\right]-(\Phi r)(z_0)\right]
\\&=\mathbb{E}\left[\phi_k(z_0)(\Theta(z_0)+\gamma P\max\left\{\left(\Phi r,\mathcal{M}\Phi\right)\right\}(z_0)-(\Phi r)(z_0)\right]
\\&=\left\langle\phi_k,\mathfrak{F}\Phi r-\Phi r\right\rangle,
\end{align*}
using the iterated law of expectations and the definitions of $P$ and $\mathfrak{F}$.

We now are in position to prove i). Indeed, we now observe the following:
\begin{align*}
\left(r-r^\star\right)\boldsymbol{\Xi}_k(r)&=\sum_{l=1}\left(r(l)-r^\star(l)\right)\left\langle\phi_l,\mathfrak{F}\Phi r -\Phi r\right\rangle
\\&=\left\langle\Phi r -\Phi r^\star, \mathfrak{F}\Phi r -\Phi r\right\rangle
\\&=\left\langle\Phi r -\Phi r^\star, (\boldsymbol{1}-\Pi)\mathfrak{F}\Phi r+\Pi \mathfrak{F}\Phi r -\Phi r\right\rangle
\\&=\left\langle\Phi r -\Phi r^\star, \Pi \mathfrak{F}\Phi r -\Phi r\right\rangle,
\end{align*}
where in the last step we used the orthogonality of $(\boldsymbol{1}-\Pi)$. We now recall that $\Pi \mathfrak{F}\Phi r^\star=\Phi r^\star$ since $\Phi r^\star$ is a fixed point of $\Pi \mathfrak{F}$. Additionally, using Lemma \ref{projection_F_contraction_lemma} we observe that $\|\Pi \mathfrak{F}\Phi r -\Phi r^\star\| \leq \gamma \|\Phi r -\Phi r^\star\|$. With this we now find that
\begin{align*}
&\left\langle\Phi r -\Phi r^\star, \Pi \mathfrak{F}\Phi r -\Phi r\right\rangle    
\\&=\left\langle\Phi r -\Phi r^\star, (\Pi \mathfrak{F}\Phi r -\Phi r^\star)+ \Phi r^\star -\Phi r\right\rangle
\\&\leq\left\|\Phi r -\Phi r^\star\right\|\left\|\Pi \mathfrak{F}\Phi r -\Phi r^\star\right\|- \left\|\Phi r^\star -\Phi r\right\|^2
\\&\leq(\gamma -1)\left\|\Phi r^\star -\Phi r\right\|^2,
\end{align*}
which is negative since $\gamma<1$ which completes the proof of part i).

The proof of part ii) is straightforward since we readily observe that
\begin{align*}
    \boldsymbol{\Xi}_k(r^\star)= \left\langle\phi_l, \mathfrak{F}\Phi r^\star-\Phi r\right\rangle= \left\langle\phi_l, \Pi \mathfrak{F}\Phi r^\star-\Phi r\right\rangle=0,
\end{align*}
as required and from which we deduce the result.
\end{proof}
To prove the theorem, we make use of a special case of the following result:

\begin{theorem}[Th. 17, p. 239 in \cite{benveniste2012adaptive}] \label{theorem:stoch.approx.}
Consider a stochastic process $r_t:\mathbb{R}\times\{\infty\}\times\Omega\to\mathbb{R}^k$ which takes an initial value $r_0$ and evolves according to the following:
\begin{align}
    r_{t+1}=r_t+\alpha \Xi(s_t,r_t),
\end{align}
for some function $s:\mathbb{R}^{2d}\times\mathbb{R}^k\to\mathbb{R}^k$ and where the following statements hold:
\begin{enumerate}
    \item $\{s_t|t=0,1,\ldots\}$ is a stationary, ergodic Markov process taking values in $\mathbb{R}^{2d}$
    \item For any positive scalar $q$, there exists a scalar $\mu_q$ such that $\mathbb{E}\left[1+\|s_t\|^q|s\equiv s_0\right]\leq \mu_q\left(1+\|s\|^q\right)$
    \item The step size sequence satisfies the Robbins-Monro conditions, that is $\sum_{t=0}^\infty\alpha_t=\infty$ and $\sum_{t=0}^\infty\alpha^2_t<\infty$
    \item There exists scalars $c$ and $q$ such that $    \|\Xi(w,r)\|
        \leq c\left(1+\|w\|^q\right)(1+\|r\|)$
    \item There exists scalars $c$ and $q$ such that $
        \sum_{t=0}^\infty\left\|\mathbb{E}\left[\Xi(w_t,r)|z_0\equiv z\right]-\mathbb{E}\left[\Xi(w_0,r)\right]\right\|
        \leq c\left(1+\|w\|^q\right)(1+\|r\|)$
    \item There exists a scalar $c>0$ such that $
        \left\|\mathbb{E}[\Xi(w_0,r)]-\mathbb{E}[\Xi(w_0,\bar{r})]\right\|\leq c\|r-\bar{r}\| $
    \item There exists scalars $c>0$ and $q>0$ such that $
        \sum_{t=0}^\infty\left\|\mathbb{E}\left[\Xi(w_t,r)|w_0\equiv w\right]-\mathbb{E}\left[\Xi(w_0,\bar{r})\right]\right\|
        \leq c\|r-\bar{r}\|\left(1+\|w\|^q\right) $
    \item There exists some $r^\star\in\mathbb{R}^k$ such that $\boldsymbol{\Xi}(r)(r-r^\star)<0$ for all $r \neq r^\star$ and $\bar{s}(r^\star)=0$. 
\end{enumerate}
Then $r_t$ converges to $r^\star$ almost surely.
\end{theorem}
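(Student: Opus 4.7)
The plan is to prove the claim in three stages: tabular convergence to the stable point of $\mathcal{G}$, existence/uniqueness of the projected fixed point $r^\star$ via a contraction argument, and finally convergence of the linear function approximation iterates together with the quantitative error bound. The first stage follows directly from Theorem \ref{theorem:existence_2}, which showed the MG Bellman operator $T$ associated with $\mathcal{G}$ is a $\gamma$-contraction. Combined with a standard Robbins-Monro stochastic approximation argument (analogous to the Jaakkola-Jordan-Singh style verification used earlier, with Assumptions 1-5 supplying the moment and mixing conditions and Assumption 6 bounding the intervention count), the tabular LIGS Q-iterates converge almost surely to the unique fixed point of $T$, which by Corollary \ref{invariance_prop} corresponds to the stable point of $\mathcal{G}$.

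For the function approximation part, I first would show that $\Pi \mathfrak{F}$ is a $\gamma$-contraction on $L_2$, which by Banach's fixed point theorem yields the unique $r^\star$ solving $\Pi \mathfrak{F}(\Phi r^\star) = \Phi r^\star$. The key step is that $\mathfrak{F}$ itself is a $\gamma$-contraction: writing $\mathfrak{F}\Lambda - \mathfrak{F}\Lambda' = \gamma P(\max\{\mathcal{M}\Lambda,\Lambda\} - \max\{\mathcal{M}\Lambda',\Lambda'\})$, I would invoke (a) the pointwise inequality $|\max\{a,b\} - \max\{c,d\}| \le \max\{|a-c|,|b-d|,|a-d|,|b-c|\}$, (b) the bound $\max\{\|\mathcal{M}\Lambda - \Lambda'\|,\|\mathcal{M}\Lambda - \mathcal{M}\Lambda'\|\} \le \gamma\|\Lambda - \Lambda'\|$ derived inside the proof of Lemma \ref{lemma:bellman_contraction}, and (c) non-expansiveness of $P$ from Lemma \ref{non_expansive_P}. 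Composition with the non-expansive orthogonal projection $\Pi$ preserves the contraction rate.

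Next, I would cast the LIGS update as a stochastic approximation recursion $r_{t+1} = r_t + \alpha_t \Xi(w_t, r_t)$ with $\Xi(w,r) = \phi(z)\bigl(\hat{R}(z) + \gamma\max\{(\Phi r)(z'), \mathcal{M}(\Phi r)(z')\} - (\Phi r)(z)\bigr)$ and $w = (z,z')$ drawn under the stationary distribution. The routine hypotheses of the Benveniste-Metivier-Priouret theorem (moment growth, Lipschitz-in-$r$ of $\mathbb{E}[\Xi(\cdot,r)]$, mixing) all reduce to Assumptions 1-5 and the contraction of $\mathfrak{F}$. The non-routine hypothesis is the negativity condition $(r - r^\star)^{\top}\boldsymbol{\Xi}(r) < 0$ for $r \neq r^\star$ with $\boldsymbol{\Xi}(r^\star)=0$. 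For this, I would express each coordinate as $\boldsymbol{\Xi}_k(r) = \langle \phi_k, \mathfrak{F}\Phi r - \Phi r\rangle$, so that $(r-r^\star)^{\top}\boldsymbol{\Xi}(r) = \langle \Phi(r-r^\star), \mathfrak{F}\Phi r - \Phi r\rangle = \langle \Phi(r-r^\star), \Pi\mathfrak{F}\Phi r - \Phi r\rangle$ (the last equality using orthogonality of $\boldsymbol{1}-\Pi$ against the range of $\Phi$); since $\Pi\mathfrak{F}\Phi r^\star = \Phi r^\star$ and $\Pi\mathfrak{F}$ is $\gamma$-contractive, Cauchy-Schwarz then bounds this by $(\gamma - 1)\|\Phi(r-r^\star)\|^2 < 0$. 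Vanishing at $r^\star$ is immediate from the fixed-point identity.

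Finally, the error bound is obtained by the Pythagorean identity $\|\Phi r^\star - Q^\star\|^2 = \|\Phi r^\star - \Pi Q^\star\|^2 + \|\Pi Q^\star - Q^\star\|^2$, using $Q^\star = \mathfrak{F}Q^\star$ from Theorem \ref{theorem:existence_2} and replacing $\Phi r^\star = \Pi\mathfrak{F}\Phi r^\star$ to obtain $\|\Phi r^\star - \Pi Q^\star\| = \|\Pi\mathfrak{F}\Phi r^\star - \Pi\mathfrak{F}Q^\star\| \le \gamma\|\Phi r^\star - Q^\star\|$; rearranging yields $(1-\gamma^2)\|\Phi r^\star - Q^\star\|^2 \le \|\Pi Q^\star - Q^\star\|^2$. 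The main obstacle throughout is the interaction of the non-smooth $\max\{\mathcal{M}\cdot,\cdot\}$ operator with the projection $\Pi$: naively projecting inside the max can destroy contraction, so the argument is organised to establish contraction of $\mathfrak{F}$ first (reducing the max to the four-way pointwise bound together with the earlier $\mathcal{M}$-contraction lemma) and only compose with $\Pi$ at the final step, where its non-expansiveness suffices.
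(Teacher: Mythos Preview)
You have proved the wrong statement. The theorem you were asked to address is the abstract stochastic-approximation convergence result quoted verbatim from \cite{benveniste2012adaptive} (Th.~17, p.~239). The paper does \emph{not} prove this theorem at all: it is an external black-box tool, imported without proof and then applied to verify Theorem~\ref{primal_convergence_theorem}. There is therefore no ``paper's own proof'' to compare against; the paper simply cites the reference.

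Your proposal is instead a proof outline for Theorem~\ref{primal_convergence_theorem} (convergence of LIGS with linear function approximation), and you even invoke ``the Benveniste--Metivier--Priouret theorem'' as a step inside your argument. That is circular with respect to the statement you were given. A genuine proof of the stated theorem would require the full machinery of stochastic approximation---an ODE-method or Lyapunov argument establishing that under conditions 1--8 the iterates track the mean ODE $\dot r = \boldsymbol{\Xi}(r)$, whose unique equilibrium $r^\star$ is globally asymptotically stable by condition~8---none of which appears in your write-up. If your intent was Theorem~\ref{primal_convergence_theorem}, then your outline does match the paper's own proof of \emph{that} result essentially step for step (contraction of $\mathfrak{F}$, then of $\Pi\mathfrak{F}$; the inner-product identity $\boldsymbol{\Xi}_k(r)=\langle\phi_k,\mathfrak{F}\Phi r-\Phi r\rangle$ and the $(\gamma-1)\|\Phi r-\Phi r^\star\|^2$ bound; the Pythagorean error estimate), but that is a different theorem from the one stated.
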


In order to apply the Theorem \ref{theorem:stoch.approx.}, we show that conditions 1 - 7 are satisfied.

\begin{proof}
Conditions 1-2 are true by assumption while condition 3 can be made true by choice of the learning rates. Therefore it remains to verify conditions 4-7 are met.   

To prove 4, we observe that
\begin{align*}
\left\|\Xi(w,r)\right\|
&=\left\|\phi(z)\left(\Theta(z)+\gamma\max\left\{(\Phi r) (z'),\mathcal{M}\Phi (z')\right\}-(\Phi r)(z)\right)\right\|
\\&\leq\left\|\phi(z)\right\|\left\|\Theta(z)+\gamma\left(\left\|\phi(z')\right\|\|r\|+\mathcal{M}\Phi (z')\right)\right\|+\left\|\phi(z)\right\|\|r\|
\\&\leq\left\|\phi(z)\right\|\left(\|\Theta(z)\|+\gamma\|\mathcal{M}\Phi (z')\|\right)+\left\|\phi(z)\right\|\left(\gamma\left\|\phi(z')\right\|+\left\|\phi(z)\right\|\right)\|r\|.
\end{align*}
Now using the definition of $\mathcal{M}$, we readily observe that $\|\mathcal{M}\Phi (z')\|\leq \| \Theta\|+\gamma\|\mathcal{P}^\pi_{s's_t}\Phi\|\leq \| \Theta\|+\gamma\|\Phi\|$ using the non-expansiveness of $P$.

Hence, we lastly deduce that
\begin{align*}
\left\|\Xi(w,r)\right\|
&\leq\left\|\phi(z)\right\|\left(\|\Theta(z)\|+\gamma\|\mathcal{M}\Phi (z')\|\right)+\left\|\phi(z)\right\|\left(\gamma\left\|\phi(z')\right\|+\left\|\phi(z)\right\|\right)\|r\|
\\&\leq\left\|\phi(z)\right\|\left(\|\Theta(z)\|+\gamma\| \Theta\|+\gamma\|\psi\|\right)+\left\|\phi(z)\right\|\left(\gamma\left\|\phi(z')\right\|+\left\|\phi(z)\right\|\right)\|r\|,
\end{align*}
we then easily deduce the result using the boundedness of $\phi,\Theta$ and $\psi$.

Now we observe the following Lipschitz condition on $\Xi$:
\begin{align*}
&\left\|\Xi(w,r)-\Xi(w,\bar{r})\right\|
\\&=\left\|\phi(z)\left(\gamma\max\left\{(\Phi r)(z'),\mathcal{M}\Phi(z')\right\}-\gamma\max\left\{(\Phi \bar{r})(z'),\mathcal{M}\Phi(z')\right\}\right)-\left((\Phi r)(z)-\Phi\bar{r}(z)\right)\right\|
\\&\leq\gamma\left\|\phi(z)\right\|\left\|\max\left\{\phi'(z') r,\mathcal{M}\Phi'(z')\right\}-\max\left\{(\phi'(z') \bar{r}),\mathcal{M}\Phi'(z')\right\}\right\|+\left\|\phi(z)\right\|\left\|\phi'(z) r-\phi(z)\bar{r}\right\|
\\&\leq\gamma\left\|\phi(z)\right\|\left\|\phi'(z') r-\phi'(z') \bar{r}\right\|+\left\|\phi(z)\right\|\left\|\phi'(z) r-\phi'(z)\bar{r}\right\|
\\&\leq \left\|\phi(z)\right\|\left(\left\|\phi(z)\right\|+ \gamma\left\|\phi(z)\right\|\left\|\phi'(z') -\phi'(z') \right\|\right)\left\|r-\bar{r}\right\|
\\&\leq c\left\|r-\bar{r}\right\|,
\end{align*}
using Cauchy-Schwarz inequality and  that for any scalars $a,b,c$ we have that $
    \left|\max\{a,b\}-\max\{b,c\}\right|\leq \left|a-c\right|$.
    
Using Assumptions 3 and 4, we therefore deduce that
\begin{align}
\sum_{t=0}^\infty\left\|\mathbb{E}\left[\Xi(w,r)-\Xi(w,\bar{r})|w_0=w\right]-\mathbb{E}\left[\Xi(w_0,r)-\Xi(w_0,\bar{r})\right\|\right]\leq c\left\|r-\bar{r}\right\|(1+\left\|w\right\|^l).
\end{align}

Part 2 is assured by Lemma \ref{projection_F_contraction_lemma} while Part 4 is assured by Lemma \ref{value_difference_Q_difference} and lastly Part 8 is assured by Lemma \ref{iteratation_property_lemma}.
\end{proof}

\clearpage

\end{document}